\def\1{\bm{1}}
\DeclareMathAlphabet{\mathsfit}{\encodingdefault}{\sfdefault}{m}{sl}
\SetMathAlphabet{\mathsfit}{bold}{\encodingdefault}{\sfdefault}{bx}{n}
\newcommand{\E}{\mathbb{E}}
\newcommand{\R}{\mathbb{R}}
\newcommand{\C}{\mathbb{C}}
\newcommand{\Z}{\mathbb{Z}}
\newcommand{\N}{\mathbb{N}}
\newcommand{\I}{\mathrm{i}}
\newcommand{\nq}{\mathfrak{n}}
\renewcommand{\P}{\mathbb{P}}
\newcommand{\Fc}{\mathcal{F}}
\newcommand{\Cc}{\mathcal{C}}
\newcommand{\D}{\mathrm{d}}
\newcommand{\bfi}{\bfseries\itshape}
\newcommand{\ww}{\boldsymbol{w}}
\newcommand{\xx}{\boldsymbol{x}}
\newcommand{\zz}{\boldsymbol{z}}
\newcommand{\bx}{\boldsymbol{\xi}}
\newcommand{\abb}{\boldsymbol{a}}
\newcommand{\bb}{\boldsymbol{b}}
\newcommand{\gb}{\boldsymbol{\gamma}}
\newcommand{\VV}{\mathbb{V}}
\newcommand{\bU}{\mathbf{U}}
\newcommand{\ttheta}{\boldsymbol{\theta}}
\newcommand{\TTheta}{\boldsymbol{\Theta}}
\theoremstyle{plain}
\newtheorem{theorem}{Theorem}[section]
\newtheorem{proposition}[theorem]{Proposition}
\newtheorem{lemma}[theorem]{Lemma}
\newtheorem{corollary}[theorem]{Corollary}
\theoremstyle{definition}
\theoremstyle{remark}
\newcommand{\Ug}{\mathtt{U}}
\newcommand{\Ig}{\mathtt{I}}
\newcommand{\Cg}{\mathtt{C}}
\newcommand{\Hg}{\mathtt{H}}
\newcommand{\Rgx}{\mathtt{R}_{\mathrm{x}}}
\newcommand{\Rgy}{\mathtt{R}_{\mathrm{y}}}
\newcommand{\Rgz}{\mathtt{R}_{\mathrm{z}}}
\newcommand{\Vg}{\mathtt{V}}
\title{Feedback-driven recurrent quantum neural network universality}
\author{Lukas Gonon\thanks{Also affiliated as Honorary Senior Lecturer with the Department of Mathematics, Imperial College, London, United Kingdom} \\
School of Computer Science\\
University of St. Gallen, Switzerland\\
\texttt{lukas.gonon@unisg.ch} \\
\And
Rodrigo Martínez-Peña \\
Donostia International Physics Center \\
San Sebastián, Spain \\
\texttt{rodrigo.martinez@dipc.org} \\
\AND
Juan-Pablo Ortega \\
School of Physical and Mathematical Sciences\\
Nanyang Technological University,
Singapore\\
\texttt{Juan-Pablo.Ortega@ntu.edu.sg}
}
\begin{document}

\maketitle

\begin{abstract}
Quantum reservoir computing uses the dynamics of quantum systems to process temporal data, making it particularly well-suited for machine learning with noisy intermediate-scale quantum devices. Recent developments have introduced feedback-based quantum reservoir systems, which process temporal information with comparatively fewer components and enable real-time computation while preserving the input history. 
Motivated by their promising empirical performance, in this work, we study the approximation capabilities of feedback-based quantum reservoir computing. More specifically, we are concerned with recurrent quantum neural networks, which are quantum analogues of classical recurrent neural networks. Our results show that regular state-space systems can be approximated using quantum recurrent neural networks without the curse of dimensionality and with the number of qubits only growing logarithmically in the reciprocal of the prescribed approximation accuracy. 
Notably, our analysis demonstrates that quantum recurrent neural networks are universal with linear readouts, making them both powerful and experimentally accessible. These results pave the way for practical and theoretically grounded quantum reservoir computing with real-time processing capabilities.
\end{abstract}

\section{Introduction}
Recent advances in quantum computing have led to a
rapid development of quantum machine learning methods. These methods aim to exploit the potential computational speed-up and reduced complexity
offered by quantum computing for machine learning purposes. For learning problems with temporal structure, quantum reservoir computing (QRC) has emerged as a promising approach for exploiting noisy intermediate-scale quantum (NISQ) technologies.
In contrast to classical machine learning methods based on bits valued in $\{0,1\}$, quantum bits (qubits) can be in a continuum of states. QRC aims to exploit this fundamental difference to build efficient machine learning methods for time series prediction and learning. 

In this paper, we are concerned with recurrent quantum neural networks (RQNN), a particular type of quantum reservoir computing method. RQNNs are a quantum analogue to classical recurrent neural networks. RQNNs are built from quantum neural networks (QNNs), with weights and biases typically realized via quantum circuits. Thus, these networks can be evaluated directly on quantum computers. Thereby, quantum machine learning aims to achieve a significant increase in neural network expressivity and computational speed-up in inference and training. 

Motivated by their promising empirical performance, in this work, we study the approximation capabilities of feedback-based quantum reservoir computing methods and, specifically, RQNNs. In particular, our work provides precise bounds on the number of qubits and the size of
the underlying quantum circuit that is required to guarantee
a prescribed approximation accuracy.
Our results show that QRNNs can approximate regular state-space systems using a quantum circuit with qubit number only growing logarithmically in the reciprocal of the prescribed approximation accuracy and with error rates not suffering from the curse of dimensionality. Thereby, our results pave the way for theoretically grounded quantum reservoir computing with real-time processing capabilities. 

\subsection{Related Literature} 
Quantum reservoir computing methods have been extensively studied for a variety of time-series prediction and learning tasks, employing different architecture types such as online protocols \citep{mujal2023time,franceschetto2024harnessing}, mid-circuit measurements and reset operations \citep{hu2024overcoming,murauer2025feedback}, feedback protocols \citep{kobayashi2024feedback}, QRC with quantum memristors \citep{spagnolo2022experimental} and hybrid QRC techniques \citep{pfeffer2022hybrid,pfeffer2023reduced}. We provide a detailed discussion of QRC methods in Appendix~\ref{app:QRC}.

Despite these promising developments, key questions regarding universal approximation capabilities and expressivity of feedback-driven QRC methods have not been addressed in the literature.  
For classical neural networks, qualitative and quantitative 
universal approximation theorems have been extensively studied, with seminal works including, e.g.\, \cite{hornik1991,Barron1993,Yarotsky2017}. Universality results for the dynamic reservoir computing setting have been obtained in 
\citep{RC6,RC7, RC8, RC20, RC12} for echo state networks,  state-affine systems and linear systems with polynomial / neural network readouts.
For (feedforward) QNNs first qualitative results on universal approximation properties of QNNs 
have been proved only very recently \cite{PerezSalinas2020datareuploading,Schuld2021}. Subsequently, quantitative approximation error bounds for feedforward QNNs were proved in  \cite{GononJacquier2023,yu2024nonasymptotic,aftab2024approximating}.

For RQNNs, no quantitative approximation error bounds have been previously available in the literature. 
Moreover, previous universality results concerning QRC models have relied on the use of polynomial output layers \citep{chen2019learning,chen2020temporal,nokkala2021gaussian,sannia2024dissipation,sannia2024skin}, which yield a polynomial algebra that can then be used with the Stone-Weierstrass theorem to obtain universality statements. Nevertheless, most numerical and experimental implementations of reservoir computers use linear output layers due to their simplicity and fast training. 

\subsection{Contributions}
For applications of QRC methods in learning tasks with temporal dependence, a precise understanding of RQNN approximation capabilities is essential. In this paper, we derive approximation error bounds and prove universality statements for RQNN families with a linear output layer and in the context of the feedback protocol. Universality refers to the ability of these families to uniformly approximate arbitrarily well a large category of dynamic processes, so-called fading memory input/output systems. Thereby, we contribute to a precise understanding of RQNN approximation capabilities in several aspects. 
\begin{itemize} 
	\item We provide RQNN approximation error bounds for regular state-space systems. Our first main result, Theorem~\ref{prop:QRNNBarronuniv}, shows that RQNNs are able to approximate regular state-space systems without the curse of dimensionality, using quantum circuits with qubit number only growing logarithmically in the reciprocal of the prescribed approximation accuracy. 
\item In our second main result, Theorem~\ref{th:modQRNNuniv}, we prove that RQNNs can uniformly approximate the arbitrary fading memory, causal, and time-invariant filters. In particular,  RQNNs have approximation properties as competitive as those of popular reservoir computing/state-space system families like echo state networks, state-affine systems, or linear systems with polynomial/neural network readouts.
\item To prove these results, we first derive novel qualitative and quantitative approximation error results for using feedforward QNNs to approximate functions and their derivatives (see Proposition \ref{prop:FourierApproxDerivInfty} and Corollary \ref{cor:universality}).  
\end{itemize}
In comparison to \cite{GononJacquier2023}, our RQNNs introduce memory through a feedback loop. Mathematically analysing our RQNNs architecture hence requires a novel, intricate analysis of QNN  approximations of functions jointly with their derivatives. Moreover, approximation analysis in the temporal domain is inherently much more challenging due to the feedback loop. Proving Theorems \ref{prop:QRNNBarronuniv} and  \ref{th:modQRNNuniv} thus requires new techniques specifically tailored to deal with this situation (see Appendix~\ref{app:proofsC}). 
Most previous literature on RC and QRC universality \citep{RC6,RC7,RC8,RC20,chen2019learning,chen2020temporal,nokkala2021gaussian,sannia2024dissipation,sannia2024skin} implicitly assumes the search for an optimal model within a class in which all parameters are estimated.
Also our results are formulated for variational quantum circuits for which all parameters are trainable. Nevertheless, the obtained results and developed proof techniques also {promise to be} useful for QRC systems in which certain parameters in the recurrent layer are randomly generated. Our RQNN architecture builds on and extends the feedforward QNN architecture introduced in \cite{GononJacquier2023}, which also admits results for the randomized setting. Hence, {combining the techniques developed here with these randomized architectures may provide fruitful for studying} randomization in the dynamic quantum reservoir computing setting. 
Moreover, the obtained approximation error bounds may serve as a crucial  ingredient for bounding the overall generalization error of QRC methods, by combining our results with suitable risk bounds as obtained in other contexts in  \cite{RC10, chmielewski2025quantum}.

\subsection{Outline}
The paper is structured as follows. Section~\ref{sec:back} introduces background on filters, functionals, fading-memory and echo state properties. Section \ref{sec:QRNN} describes the RQNN model, a recurrent QNN with state feedback. 
Section \ref{sec:derivatives} derives QNN approximation error bounds for functions and their first derivatives. We then use these results (see Proposition \ref{prop:FourierApproxDerivInfty} and Corollary \ref{cor:universality}) to study the properties of the RQNN state maps for approximating more general state equations.
These results are then used in Section \ref{sec:QRNNBarronuniv} to prove the universal uniform approximation properties of the filters associated with RQNN systems. More specifically, in Theorem~\ref{prop:QRNNBarronuniv} we provide filter approximation bounds that show that RQNNs can uniformly approximate the filters induced by any contracting Barron-type state-space system. Finally, Theorem \ref{th:modQRNNuniv} of Section \ref{sec:modQRNNuniv} extends this universality property to the much larger category of arbitrary fading memory, causal, and time-invariant filters.  
The paper concludes with Section \ref{sec:conclusions}, where the main contributions and outlook of the paper are summarized.

\section{Background on filters and functionals}
\label{sec:back}
We start by introducing the input-output maps to be learnt in the dynamic setting. In a static context, input-output maps are given by functions of the form $f:\R^d \to \R^m$. For learning with temporal dependence, the relevant input-output maps are \textit{filters} and \textit{functionals} defined on sequences. 

Specifically, let $(\mathbb{R}^n) ^{\Bbb Z } $ denote the set of infinite real sequences of the form $\underline{\zz}=(\ldots, \zz _{-1}, \zz _0, \zz _1, \ldots) $, $ \zz _i \in \mathbb{R}^n $, $i \in \Bbb Z $; $(\mathbb{R}^n) ^{\Bbb Z _ -} $ is the subspace consisting of  left infinite sequences: $(\mathbb{R}^n) ^{\Bbb Z _ -}=\{\underline{\zz}=(\ldots, \zz _{-2}, \zz _{-1}, \zz _0) \mid \zz _i \in \mathbb{R}^n, i \in \mathbb{Z}_{-}\}$.  Analogously, $(D_n) ^{\Bbb Z } $ and $(D_n) ^{\Bbb Z _ -} $ stand for infinite and semi-infinite sequences, with elements in the subset $D_n\subset \mathbb{R}^n $.  Let $D_n \subset \mathbb{R}^n $ and $B_N \subset \mathbb{R}^N $. We  refer to the maps of the type $U: (D _n) ^{\Bbb Z} \longrightarrow (B_N) ^{\Bbb Z} $ as {\bfi  filters} and to those like $H: (D _n) ^{\Bbb Z} \longrightarrow B_N $ (or $H: (D _n) ^{\Bbb Z_-} \longrightarrow B_N $) as {\bfi  functionals}. A filter $U: (D _n) ^{\Bbb Z} \longrightarrow (B_N) ^{\Bbb Z} $ is called {\bfi  causal} when for any two elements $\zz , \ww \in (D _n) ^{\Bbb Z}  $  that satisfy that $\zz _\tau = \ww_\tau$ for any $\tau \leq t  $, for a given  $t \in \Bbb Z $, we have that $U (\zz) _t= U (\ww) _t $. Let  $T_\tau:(D _n) ^{\Bbb Z} \longrightarrow(D _n) ^{\Bbb Z} $, $\tau \in \Bbb Z$ be the {\bfi  time delay} operator defined by $T_\tau( \zz) _t:= \zz_{t- \tau}$. The filter $U$ is called {\bfi time-invariant} when it commutes with the time delay operator, that is, $T_\tau \circ U=U  \circ T_\tau $,  for any $\tau\in \Bbb Z $, with the two operators $T_\tau $ defined in the appropriate sequence spaces. Finally, there is a bijection between causal time-invariant filters and functionals on $(D_n)^{\Bbb Z _-} $, and we can use them interchangeably \citep{RC7}. 

A specific class of filters is given by state-space systems (such as recurrent neural networks) determined by two maps, namely the {\bfi  recurrent} layer or the {\bfi state map} $F: \mathbb{R} ^N\times \mathbb{R} ^n\longrightarrow  \mathbb{R} ^N$, $n,N \in \mathbb{N} $,  and a {\bfi  readout} or {\bfi observation} map $h: \mathbb{R}^N \rightarrow \mathbb{R}^m$, $m\in \N$,  given by
\begin{equation}\label{eq:generalRC}
	\begin{split}
		&{\xx}_t = 	F({\xx}_{t-1},\zz_t),\\
		& {\bf y}_t = h(\xx_t),
	\end{split} 
\end{equation}
where $t \in \Bbb Z $, $\zz_t $ denotes the input, $\xx_t \in \mathbb{R} ^N $ is the state vector, and ${\bf y}_t\in \R^{m}$ is the output vector. 

Consider now subsets $B _N\subset \mathbb{R}^N $ and $D_n\subset \mathbb{R}^n $ and a recurrent layer defined on them, that is, $F: B _N\times D_n\longrightarrow  B _N$ and $h: B_N \rightarrow \mathbb{R}^m$. Denote by $D _m:=h(B _N) \subset {\Bbb R}^m $. The recurrent system $F$ is said to have the {\bfi echo state property} with respect to inputs in $\left(D _n\right)^{\mathbb{Z}} $ when for any $\underline{\zz} \in \left(D _n\right)^{\mathbb{Z}} $ there exists a unique element $\underline{\xx} \in \left(B _N\right)^{\mathbb{Z}} $ that satisfies the first equation in \eqref{eq:generalRC}, for each $t \in \Bbb Z $. When the echo state property holds, a unique filter $U ^F: (D_n)^{\Bbb Z} \longrightarrow(B_N)^{\Bbb Z} $ can be associated to the recurrent system determined by $F$, namely, $U ^F (\zz) _t := \xx _t \in B_N $, for all $t \in \Bbb Z  $. We will denote by $U ^F _h: (D_n)^{\Bbb Z} \longrightarrow(D_m)^{\Bbb Z} $ the corresponding filter determined by the entire recurrent system, that is,  $U ^F_h (\zz) _t =h \left(U ^F (\zz) _t\right):= {\bf y} _t \in D _m$, for all $t \in \Bbb Z $. The filters $U^F $ and $U^F_h $ are causal and time-invariant by construction. The echo state property is much related with the so-called {\bfi fading memory property} defined as the continuity of $U^F_h $ with respect to weighted norms in its domain and codomain \citep{Boyd1985} or the product topologies when $D_n $ and $D_m $ are compact \citep{RC7}. It can be shown that when $D_m $ is compact, the echo state property implies the fading memory property \citep{manjunath:prsl, RC31}; see \cite{RC32} for a comprehensive account of the dynamical implications of the fading memory property as well as \cite{RC28} for a stochastic version. 

\section{Recurrent quantum neural network architecture}
\label{sec:QRNN}

Before going into details about the considered RQNN architecture, let us first explain the basic working principle of feedforward QNNs built in quantum circuits. A QNN is built by transforming quantum bits (\textit{qubits}) in a parametric quantum circuit. Each qubit is in state $\ket{\psi} = \alpha \ket{0} + \beta \ket{1}$ for some $\alpha \in \C$, $\beta \in \C$ with $|\alpha|^2 + |\beta|^2 = 1$ and with elementary quantum bit states $\ket{0}$ and $\ket{1}$. For a circuit with $\mathfrak{n}$ qubits, at any given point in the circuit, the circuit state can thus be identified with a vector in $\C^{n_\Ug}$ for ${n_\Ug=2^\mathfrak{n}}$. The quantum state $\ket{\psi}$ can be transformed by applying a \textit{quantum gate}, that is, a  unitary matrix $\Ug \in \C^{n_\Ug\times n_\Ug}$. A QNN now applies quantum gates $\Ug(\xx,\ttheta)$ that depend on the initial data and neural network parameters and transforms the circuit accordingly. The QNN output is obtained by measuring the final quantum state after applying the circuit quantum gates. 

Next, we introduce in detail the employed RQNN architecture. Our recurrent quantum circuit is constructed based on two parametric quantum gates $\Ug$ and $\Vg$, which we now introduce. The construction extends the feedforward QNN architecture introduced in \cite{GononJacquier2023} to a recurrent setting {by feeding back the network's state.}

\paragraph{Construction of $\Ug$.}
For $\delta, \gamma \in [0,2\pi]$ and $\alpha\in\R$,
denote by $\Rgx(\delta)$, $\Rgy(\gamma)$, and $\Rgz(\alpha)$ the rotations around the $\mathrm{X}$-, $\mathrm{Y}$-and the $\mathrm{Z}$-axis, corresponding to angles $\delta$, $\gamma$ and $\alpha$, respectively, and obtained as the exponentials of the Pauli matrices: 
$$
\Rgx(\delta)
:= \begin{pmatrix} \cos\left(\frac{\delta}{2}\right) & -\I \sin\left(\frac{\delta}{2}\right) \\ -\I\sin\left(\frac{\delta}{2}\right) & \cos\left(\frac{\delta}{2}\right) \end{pmatrix}
\text{,} \,\,
\Rgy(\gamma)
:= \begin{pmatrix} \cos\left(\frac{\gamma}{2}\right) & - \sin\left(\frac{\gamma}{2}\right) \\ \sin\left(\frac{\gamma}{2}\right) & \cos\left(\frac{\gamma}{2}\right) \end{pmatrix}
\text{,} \,\,
\Rgz(\alpha) := \begin{pmatrix} e^{-\I \frac{\alpha}{2}} & 0 \\ 0 & e^{\I \frac{\alpha}{2}} \end{pmatrix}.
$$
For a given accuracy parameter $n \in \N$,  consider weights 
$\abb = (\abb^{1},\ldots,\abb^{n}) \in (\R^{d+N})^n$, $\bb = (b^{1},\ldots,b^{n}) \in \R^n$ and $\gb = (\gamma^{1},\ldots,\gamma^{n}) \in [0,2\pi]^n$. For $i=1,\ldots, n$, we define parametric gate maps  $\Ug^{(i)}_1 \colon \R^N \times \R^d \to \C^{2 \times 2}$ that map a current system state $\xx$ and a current observation $\zz$ to a rotation gate. Gate map $i$ depends on parameters $\abb^{i},b^{i}$ and is defined by    
\begin{equation*}
	\begin{array}{cl}
		\Ug^{(i)}_1 (\xx,\zz)
		& := \Hg \, \Rgz\left(-b^{i}\right)  \Rgz\left(-a^{i}_{N+d} z_d\right) \cdots \Rgz\left(-a^{i}_{N+1} z_1\right)  \Rgz\left(-a^{i}_N x_N\right) \cdots \Rgz\left(-a^{i}_1 x_1\right) \Hg 
	\end{array}
\end{equation*}
for any $\xx=(x_1,\ldots,x_N) \in \R^N$ and  $\zz=(z_1,\ldots,z_d) \in \R^d$, with $\Hg$ the Hadamard gate. We may rewrite $$ \Ug^{(i)}_1 (\xx,\zz) = \Rgx\left(\delta^i\right), \quad \delta^i:=-b^{i}-a^{i}_{N+d} z_d \dots -a^{i}_{N+1} z_1-a^{i}_N x_N\dots -a^{i}_1 x_1.$$
Moreover, we also define the gates $ \Ug^{(i)}_2  :=  \Rgy\left(\gamma^{i}\right)$ and denote the circuit parameters by $\ttheta=(\abb^{i},b^{i},\gamma^{i})_{i=1,\ldots,n}
\in \TTheta := (\R^{d+N}\times\R\times[0,2\pi])^n$.

With these notations, we are now ready to define the key element of our parametric quantum circuit, the gate $\Ug := \Ug_{\ttheta}(\xx,\zz)$. $\Ug$ is defined as a  block matrix
built from the gates $\bar{\Ug}^{(i)}(\xx,\zz) = \Ug^{(i)}_1(\xx,\zz) \otimes \Ug^{(i)}_2 $ 
as follows:
\[
\Ug_{\ttheta}(\xx,\zz) :=   \begin{bmatrix}
	\bar{\Ug}^{(1)}(\xx,\zz)  & \mathbf{0}_{4 \times 4}  & \mathbf{0}_{4 \times 4} & \cdots &  \mathbf{0}_{4 \times 4} & \mathbf{0}_{4 \times n_0} \\
	\mathbf{0}_{4 \times 4} & \bar{\Ug}^{(2)}(\xx,\zz)  & \mathbf{0}_{4 \times 4} & \cdots &  \mathbf{0}_{4 \times 4} & \vdots  \\
	\vdots &  & \ddots &  & \vdots & \vdots \\
	\mathbf{0}_{4 \times 4} & \cdots &  \mathbf{0}_{4 \times 4} &  \bar{\Ug}^{(n-1)}(\xx,\zz)  & \mathbf{0}_{4 \times 4} & \vdots 
	\\
	\mathbf{0}_{4 \times 4} & \cdots &  \cdots & \mathbf{0}_{4 \times 4} & \bar{\Ug}^{(n)}(\xx,\zz)  & \mathbf{0}_{4 \times n_0}
	\\
	{\bf 0}_{n_0 \times 4} & \cdots & \cdots & \cdots & {\bf 0}_{n_0 \times 4} & {\bf 1}_{n_0 \times n_0}
\end{bmatrix}.
\]
Here, $n_0$ is chosen as the smallest natural number such that the matrix dimension $n_\Ug = 4n+n_0$ is a power of~$2$, that is,  $n_\Ug =2^{\nq}$. It can be easily shown that $n_0=4\kappa$ with $\kappa\in \N$, since $4n+n_0$ and $2n+n_0/2$ must be even for $\nq\geq 2$. Then, $\Ug \in \C^{n_\Ug \times n_\Ug}$ is a unitary quantum gate operating on $\nq = \log_2(n_\Ug)= 2+\log_2(n+\kappa) =  \lceil \log_2(2n) \rceil $ qubits with a diagonal-block structure:
\[
\Ug_{\ttheta}(\xx,\zz)  = \sum^{n-1}_{i=0}\ket{i}\bra{i}\otimes \bar{\Ug}^{(i+1)}(\xx,\zz)+\sum^{n+\kappa-1}_{i=n}\ket{i}\bra{i}\otimes {\bf 1}_{4 \times 4}.
\]
These unitary operators with a block structure are known as uniformly controlled quantum gates. They are present in many quantum algorithms and are used to decompose general unitary gates and locally prepare arbitrary quantum states \citep{mottonen2004quantum,mottonen2004transformation,bergholm2005quantum,arrazola2022universal,park2019circuit}. They are defined as multi-controlled unitaries where each unitary block targets a set of qubits, two qubits in this case, while the other $\log_2(n+\kappa)$ qubits act as control qubits. Multi-controlled unitaries are applied depending on the state of the control qubits, which are unchanged, and only modify the target qubits. These operations generalize the CNOT gate for two qubits, in that we can now have several control and target qubits. Notice that the block structure of the unitary $\Ug_{\ttheta}$ arises from indexing the targets as the lowest-order bits. Recently, efficient decompositions of multi-controlled unitaries have been proposed in terms of the number of single-qubit and two-qubit gates \citep{zindorf2024efficient,zindorf2025multi}, as well as for approximations of the multi-controlled gate \citep{silva2024linear}. Code implementations of these quantum gates can be found in the \textit{Qclib} library \citep{qclib}. 
Finally, the identity blocks ${\bf 1}_{4 \times 4}$ do not introduce additional gates into the quantum circuit, so the effective circuit can be reduced to the application of the  $\bar{\Ug}^{(i)}$ gates. However, the number of control qubits is fixed by $\log_2(n+\kappa)$ and we need all of them to compute the output probabilities, as we will see below.

\paragraph{Construction of $\Vg$.}
Next, let $\Vg \in \C^{n_\Ug \times n_\Ug}$ be any unitary matrix mapping 
$\ket{0}^{\otimes \nq}$ to the state $\ket{\psi} = \frac{1}{\sqrt{n}} \sum_{i=0}^{n-1} \ket{4i}$ which, for $n\geq 2$, is also explicitly given as $\ket{\psi}=\frac{1}{\sqrt{n}} \sum_{i=0}^{n-1} \ket{i}\otimes\ket{00}$. 
Note that different choices of $\Vg$ are possible and the only required property is $\Vg\ket{0}^{\otimes\nq} = \ket{\psi}$. We refer to Appendix~\ref{sec:Vconstr} for an example.

\paragraph{Measuring circuit outputs.}
We can now measure the state
of the $\nq$-qubit system after applying the gates $\Vg$ and $\Ug$. 
The possible states that we could measure are given by  ${0,\ldots,n_\Ug-1}$ (in binary). By running the circuit repeatedly, we can now obtain (up to well-controlled Monte Carlo error, see Appendix~\ref{appendixMC}) the probabilities
$\P_{m}^n$ that the measured state is in  $\{m,4+m,\ldots,4(n-1)+m\}$, for $m \in \{0,1,2,3\}$,  where $m$ is the binary state of the last two qubits (the target qubits). 

More formally, consider the unitary gate map $\Cg (\xx,\zz) = \Cg_{\nq,\ttheta}(\xx,\zz) := \Ug_{\ttheta}(\xx,\zz)\Vg$ acting on $\nq = 2+\log_2(n+\kappa)$ qubits.  This circuit  acts on the initial state $\ket{0}^{\otimes \nq}$ via the quantum gates $\Vg$ and $\Ug$ as
$$
\Cg_{\nq,\ttheta}(\xx,\zz)  \ket{0}^{\otimes \nq} = \frac{1}{\sqrt{n}} \sum^{n-1}_{i=0}\ket{i}\otimes \Ug^{(i+1)}_1(\xx,\zz) \ket{0}\otimes \Ug^{(i+1)}_2\ket{0}.
$$
Then, we measure 
\begin{equation}
	\label{eq:QC}
	\P_m^{n,\ttheta} = \P_m^{n,\ttheta}(\xx,\zz) := \P\left("\Cg_{\nq,\ttheta}(\xx,\zz) \ket{0}^{\otimes \nq} \in \{m,4+m,\ldots,4(n-1)+m\}"\right). 
\end{equation}
This is the sum of the probabilities of being in the states $\ket{i}\otimes\ket{m}$, where $i=0,\dots, n-1$. That is,
$$
\P_m^{n,\ttheta}(\xx,\zz) = \frac{1}{n} \sum^{n}_{i=1}\left|\bra{m}\left(\Ug^{(i)}_1(\xx,\zz) \ket{0}\otimes \Ug^{(i)}_2\ket{0}\right)
\right|^2.$$
\paragraph{Parallel circuits.}
With $n$ (or equivalently $\nq$) fixed, the quantum circuit introduced above is uniquely defined by the choice of circuit parameters $\ttheta \in \TTheta$. 
We will now run $N$ such circuits in parallel, each representing a component of the state map $F$ in \eqref{eq:generalRC}. Each circuit is described by its parameters $\ttheta^j \in \TTheta$, $j \in \left\{1, \ldots , N \right\} $. The circuit outputs then induce maps $\P_m^{n,\ttheta^j}\colon \R^N \times \R^d \to [0,1]$ by the circuit output probabilities \eqref{eq:QC} with parameters for the $j$-th circuit given by $\ttheta = \ttheta^j$. 
\begin{figure}[h]
\begin{center}
\includegraphics[scale=0.24,trim={0 2.2cm 0 2.2cm},clip]{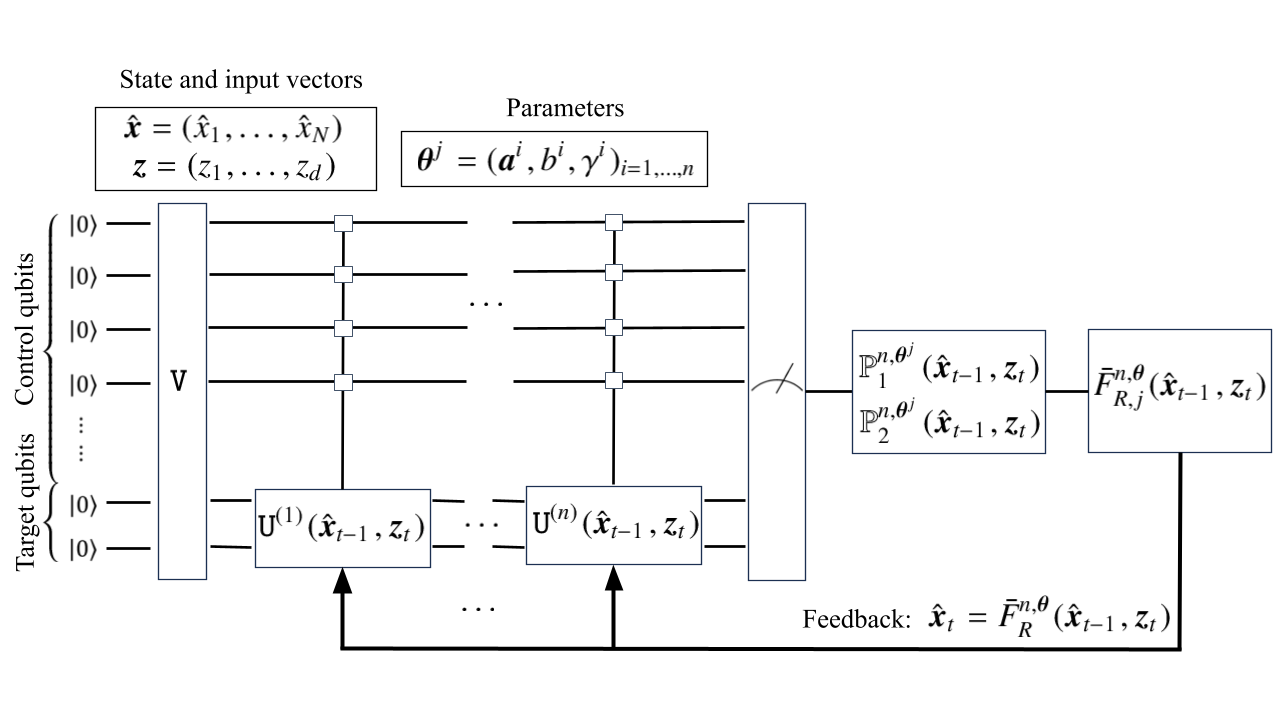}
\caption{Schematic representation of the $j$-th circuit given by parameters $\ttheta^j \in \TTheta$, $j \in \left\{1, \ldots , N \right\}$.  }\label{Fig}
\end{center}
\end{figure}
\paragraph{Recurrent quantum neural networks (RQNN).}
With these ingredients, we can now define the RQNN that we will consider. Given the gate map $\Cg_{\nq,\ttheta}$
and $R>0$,
we define $\bar{F}^{n,\ttheta}_R :\R^N \times \R^d \to \R^N $ by its component maps $\bar{F}^{n,\ttheta}_R = (\bar{F}^{n,\ttheta}_{R,1},\ldots,\bar{F}^{n,\ttheta}_{R,N})$. For  $j=1,\ldots,N$, the $j$-th component map  $\bar{F}^{n,\ttheta}_{R,j} :\R^N \times \R^d \to \R $ is defined by
\begin{equation}\label{eq:qnn}
	\bar{F}^{n,\ttheta}_{R,j}(\xx,\zz)
	:= R-2R[\P_1^{n,\ttheta^j}( \xx,\zz)
	+ \P_2^{n,\ttheta^j}( \xx,\zz)], \quad (\xx,\zz) \in \R^N \times \R^d,
\end{equation}
with $\ttheta = (\ttheta^1,\ldots,\ttheta^N) \in \TTheta^N$.
Our {\bfi recurrent quantum neural network (RQNN)} is then defined by the state-space system associated to the state map $\bar{F}^{n,\ttheta}_R$ 
\begin{equation}
	\label{eq:QRNN}
	\hat{\xx}_t = 	\bar{F}^{n,\ttheta}_{R}(\hat{\xx}_{t-1},\zz_t), \quad t \in \Z_-. 
\end{equation}

Figure~\ref{Fig} provides a schematic representation of how the RQNN acts at each time step for the $j$-th circuit: at any time $t$, the system is initialized, the gates
$\Vg$ and $\Ug_{\ttheta^j}(\hat{\xx}_{t-1},\zz_t)$ are applied, and the system is measured. This process is repeated to estimate the probabilities $\P_1^{n,\ttheta^j}( \hat{\xx}_{t-1},\zz_t)$ and $\P_2^{n,\ttheta^j}( \hat{\xx}_{t-1},\zz_t)$, which are aggregated into the network output $\bar{F}^{n,\ttheta}_{R,j}(\hat{\xx}_{t-1},\zz_t)$ according to \eqref{eq:qnn}. Once this is done for all $j\in\{1,\dots,N\}$, the network state $\hat{\xx}_{t}$ is stored to be used as feedback for the next time step $t+1$.

In the next paragraphs, we aim to address the following questions: 
\begin{itemize}
	\item Can we choose the parameters $\ttheta$ in such a way that the system determined by \eqref{eq:QRNN} satisfies the echo state property?
	\item Can the family of systems determined by equations of the type \eqref{eq:QRNN} approximate general state-space systems arbitrarily well? More specifically, given an arbitrary state-space map $\xx_t = F(\xx_{t-1},\zz_t)$ with $F \colon \R^N \times \R^d \to \R^N$ as general as possible, can it be approximated by equations of the type \eqref{eq:QRNN}?
\end{itemize}

\section{Recurrent quantum neural network universality}

This section contains approximation guarantees and universality results for the recurrent quantum neural network (RQNN) family. To achieve this, in Section \ref{sec:derivatives}  we first prove refined approximation error bounds (that generalize those in \cite{GononJacquier2023}) for feedforward quantum neural networks (QNNs) that allow us to control the error committed when approximating a function and its derivatives simultaneously, a crucial ingredient for analysing the RQNN feedback loop. These error bounds show how recurrent QNNs can be used to approximate state-space maps $F$ arbitrarily well as long as these are sufficiently smooth and satisfy Barron-type integrability conditions like, for example, $\int_{\R^N \times \R^d} \xi_i^2 |\widehat{F_j}(\bx)|  \D \bx < \infty $, for $i=1,\ldots,N+d$ and $j = 1,\ldots, N$, or $I_q = \int_{\R^N \times \R^d} \|\bx\|^q |\widehat{F_j}(\bx)|  \D \bx < \infty$, for some $q\geq 2$ (see Proposition \ref{prop:FourierApproxDeriv} and Corollary \ref{corollary for l2norm} below); RQNN state maps are hence universal in that category. These bounds are devised with respect to $L ^{\infty} $ and $L ^2$-type norms. As we shall prove, in the $L ^{\infty} $ case, the universality of the RQNN family still holds with respect to state maps that do not necessarily satisfy the Barron condition, even though in that framework we do not formulate approximation bounds.  Finally, in the last two sections, we exploit all these results on the approximation of state maps to obtain universality statements and error bounds for the approximation of arbitrary causal, time-invariant, and fading memory filters using a modified recurrent QNN. In addition to the tools developed here, our proofs of these results rely on techniques from \cite{RC8,RC20} and the overall strategy is reminiscent of the so-called internal approximation approach introduced in \citet[Theorem 3.1 (iii)]{RC7} for echo state networks, which consists of obtaining approximation results for filters out of statements of that type for the state maps that generate them.

The approximation rate in all our results is free from the curse of dimensionality: the error decays as $\frac{1}{\sqrt{n}}$ as we increase $n$, with this rate of decay being independent of the input dimension $d$ and the state space dimension $N$.
Moreover, the required number of qubits $\nq = \lceil \log_2(2n) \rceil$ is only growing logarithmically in the accuracy parameter $n$. Put differently, our circuit requires only $\mathcal{O}(\varepsilon^{-2})$ weights and  $\mathcal{O} (\lceil \log_2(\varepsilon^{-1}) \rceil)$ qubits suffices to achieve approximation error $\varepsilon>0$ when approximating  functions with sufficiently integrable Fourier transforms.

\subsection{RQNN approximation of state-space maps and their derivatives} \label{sec:derivatives}
As a first step, we aim to establish RQNN approximation results for a function jointly with its derivatives.
Denote by $\Fc_R$ the class of integrable functions $f \colon \R^N \times \R^d \to \R$ with Fourier integral bounded above by a constant $R>0$, that is, 
\begin{align}
	\Fc & := \Big\{f:\R^N \times \R^d\to\R: f\in \Cc\left(\R^N \times \R^d\right)\cap L^{1}\left(\R^N \times \R^d\right), \quad  \|\widehat{f}\|_1 < \infty \Big\},\label{eq:Space1}\\
	\Fc_{R} & := \left\{f\in\Fc,
	\text{ with } \|\widehat{f}\|_1 \leq R\right\},
	\qquad\text{for }R>0.\label{eq:Space1R}
\end{align}
Here, for a continuous and integrable function $f \colon \R^N \times \R^d \to \R$
we denote its Fourier transform by $\widehat{f}(\bx_1,\bx_2) := \int_{\R^N \times \R^d} e^{-2\pi\I (\boldsymbol{y}_1,\boldsymbol{y}_2) \cdot (\bx_1,\bx_2)} f(\boldsymbol{y}_1,\boldsymbol{y}_2) \D \boldsymbol{y}_1 \D \boldsymbol{y}_2$, with $(\bx_1,\bx_2) \in \R^N \times \R^d$.

Our first result derives a representation for the QRNN output.
\begin{proposition}
\label{prop:repre}
	For any $n\in\N$, $j=1,\ldots,N$, $\ttheta = (\ttheta^1,\ldots,\ttheta^N) \in \TTheta^N$ with $\ttheta^j = (\abb^{i,j},b^{i,j},\gamma^{i,j})_{i=1,\ldots,n} \in\TTheta$, the RQNN introduced in \eqref{eq:qnn} can be represented as
	\begin{equation}\label{eq:representation}
		\bar{F}^{n,\ttheta}_{R,j}(\xx,\zz) =  \frac{1}{n}\sum_{i=1}^{n} R\cos\left(\gamma^{i,j}\right) \cos\left( b^{i,j}+\abb^{i,j} \cdot (\xx,\zz) \right), \quad 	\text{for all}~(\xx,\zz) \in \R^N \times \R^d.
	\end{equation} 
\end{proposition}
Our next result provides an approximation error bound for the QRNN state map jointly with its derivatives. The proof is provided in Appendix~\ref{subsec:proofFourierApproxDeriv}. 
Let $\mu$ be an arbitrary probability measure on $(\R^N \times \R^d,\mathcal{B}(\R^N \times \R^d))$. Recall the notation 
$$ \| f - g \|_{L^2(\mu)} :=	\left(\int_{\R^N \times \R^d} \left|f(\xx,\zz) - g(\xx,\zz)\right|^2 \mu(\D \xx,\D \zz)\right)^{1/2}. $$ 
\begin{proposition} \label{prop:FourierApproxDeriv} 
	Let $R>0$ and suppose $F=(F_1,\ldots,F_N): \R^N \times \R^d \rightarrow \R^N$ is continuously differentiable and satisfies $F_j \in\Fc_{R}$ and $\partial_i F_j \in \Fc$ and $\int_{\R^N \times \R^d} \xi_i^2 |\widehat{F_j}(\bx)|  \D \bx < \infty $ for $i=1,\ldots,N+d$ and $j = 1,\ldots, N$. 
	Then, for any $n \in \N$, there exists $\ttheta\in\TTheta^N$ such that 
	$$
	\left\|\bar{F}^{n,\ttheta}_{R,j} - F_j\right\|_{L^2(\mu)}^2   + \sum_{i=1}^{N+d} \left\|\partial_i \bar{F}^{n,\ttheta}_{R,j} - \partial_i F_j\right\|_{L^2(\mu)}^2  \leq \frac{C_j}{n},
	$$
    for any $j \in \{1,\ldots,N\}$, 
	where $C_j =  \|\widehat{F_j}\|_1^2 + 4 \pi^2 \|\widehat{F_j}\|_1   \int_{\R^N \times \R^d} \sum_{i=1}^{N+d} \xi_i^2 |\widehat{F_j}(\bx)|  \D \bx$. 
\end{proposition}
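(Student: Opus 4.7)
The plan is a Barron-style Monte Carlo argument built on the Fourier representation of $F_j$ and the closed form \eqref{eq:representation} of the RQNN component $\bar{F}^{n,\ttheta}_{R,j}$ as a sum of $n$ cosines. First, since $F_j \in \Fc_R$ is real-valued and integrable, Fourier inversion together with the polar decomposition $\widehat{F_j}(\bx) = |\widehat{F_j}(\bx)|\, e^{\I\phi(\bx)}$ yields, for $\yy=(\xx,\zz)$,
\begin{equation*}
F_j(\yy) = \int_{\R^{N+d}} |\widehat{F_j}(\bx)| \cos(2\pi \bx \cdot \yy + \phi(\bx))\, \D \bx = \alpha\,\E_{\bx\sim p}\bigl[\cos(2\pi \bx \cdot \yy + \phi(\bx))\bigr],
\end{equation*}
where $\alpha := \|\widehat{F_j}\|_1$ and $p(\bx) := |\widehat{F_j}(\bx)|/\alpha$ is a probability density on $\R^{N+d}$. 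The hypothesis $\int \xi_i^2 |\widehat{F_j}(\bx)|\,\D\bx < \infty$ together with $\widehat{F_j} \in L^1$ gives, via Cauchy--Schwarz, $\int |\xi_i||\widehat{F_j}(\bx)|\,\D\bx < \infty$, which justifies differentiation under the integral and leads to the companion identity $\partial_k F_j(\yy) = -2\pi\alpha\,\E_{\bx\sim p}[\xi_k \sin(2\pi\bx\cdot\yy + \phi(\bx))]$ for $k=1,\ldots,N+d$.

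Next I sample i.i.d.\ frequencies $\bx^1,\ldots,\bx^n$ from $p$ and set, for each circuit $j$, $\ab^{i,j} := 2\pi \bx^i$, $b^{i,j} := \phi(\bx^i)$, and $\gamma^{i,j} := \arccos(\alpha/R) \in [0,2\pi]$ (admissible because $\alpha \leq R$). Then $R\cos(\gamma^{i,j})=\alpha$, and the representation \eqref{eq:representation} identifies $\bar{F}^{n,\ttheta}_{R,j}(\yy)$ with the empirical mean $\tfrac{1}{n}\sum_{i=1}^n X_i(\yy)$ of the i.i.d.\ random functions $X_i(\yy) := \alpha\cos(2\pi\bx^i\cdot\yy + \phi(\bx^i))$, which satisfy $\E[X_i(\yy)] = F_j(\yy)$ and $\E[\partial_k X_i(\yy)] = \partial_k F_j(\yy)$ by the previous step.

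The last step is the standard variance computation. Pointwise in $\yy$,
\begin{equation*}
\E\bigl[(\bar{F}^{n,\ttheta}_{R,j}(\yy) - F_j(\yy))^2\bigr] \leq \tfrac{1}{n}\E[X_1(\yy)^2] \leq \tfrac{\alpha^2}{n},
\end{equation*}
and, using $\partial_k X_1(\yy) = -2\pi\alpha\,\xi^1_k \sin(2\pi\bx^1\cdot\yy + \phi(\bx^1))$,
\begin{equation*}
\E\bigl[(\partial_k \bar{F}^{n,\ttheta}_{R,j}(\yy) - \partial_k F_j(\yy))^2\bigr] \leq \tfrac{4\pi^2\alpha^2}{n}\,\E[(\xi^1_k)^2] = \tfrac{4\pi^2 \alpha}{n}\int \xi_k^2 |\widehat{F_j}(\bx)|\,\D\bx.
\end{equation*}
Integrating against $\mu$ by Fubini, summing the derivative bounds over $k=1,\ldots,N+d$, and adding the $L^2$ bound shows that the expectation over $(\bx^1,\ldots,\bx^n)$ of the full left-hand side of the proposition is at most $C_j/n$; hence at least one realization $(\bx^1,\ldots,\bx^n)$ achieves the same deterministic bound, and the associated $\ttheta^j$ is the parameter vector we sought. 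The argument is conceptually routine; the only subtlety I expect is the derivative bookkeeping and the Cauchy--Schwarz step justifying differentiation under the integral, but nothing deep is needed beyond what the closed-form representation inherited from \cite{GononJacquier2023} already provides.
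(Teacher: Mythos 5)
Your proof is correct and follows the same Barron/Monte Carlo template as the paper, but you use a cleaner randomization. The paper writes $\widehat{F_j} = \mathrm{Re}[\widehat{F_j}] + \I\,\mathrm{Im}[\widehat{F_j}]$ and builds the random features via a Bernoulli mixture: $Z_i$ decides whether frequency $\mathbf{A}_i$ is drawn from the density proportional to $|\mathrm{Re}[\widehat{F_j}]|$ or to $|\mathrm{Im}[\widehat{F_j}]|$, the offset $B_i$ toggles between $0$ and $\tfrac{\pi}{2}$, and the sign of $W_i$ encodes the sign of the chosen real or imaginary part. You instead use the polar decomposition $\widehat{F_j}(\bx) = |\widehat{F_j}(\bx)| e^{\I\phi(\bx)}$, sample frequencies from the single density $p = |\widehat{F_j}|/\|\widehat{F_j}\|_1$, and let the phase $\phi(\bx^i)$ play the role of $b^{i,j}$ directly, with a single constant weight $\alpha = \|\widehat{F_j}\|_1$. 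This collapses the Bernoulli/sign bookkeeping into one continuous phase variable, which is arguably simpler and leads to exactly the same variance bound since $|W_i| = \alpha$ in both constructions. You also replace the paper's appeal to the hypothesis $\partial_i F_j \in \Fc$ (to get $\int |\xi_i||\widehat{F_j}| < \infty$) by a direct Cauchy--Schwarz estimate from the assumed finiteness of $\int \xi_i^2 |\widehat{F_j}|\,\D\bx$ and $\int |\widehat{F_j}|\,\D\bx$; both routes are valid justifications for differentiating under the integral. One small caveat: the value of $\|\widehat{F_j}\|_1$ implicit in the paper's Bernoulli construction is the $\ell^1$-combination $\int(|\mathrm{Re}[\widehat{F_j}]| + |\mathrm{Im}[\widehat{F_j}]|)$ inherited from \cite{GononJacquier2023}, whereas your polar argument uses the usual complex modulus; this is a cosmetic mismatch in convention, not a flaw, and your derivation is internally consistent with the stated constant $C_j$.
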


Next, we show that it is also possible to also obtain approximation results for QNNs with bounded network coefficients. The proof is provided in Appendix~\ref{subsec:proofCorollary}.

\begin{corollary}
	\label{corollary for l2norm}
	In the setting of Proposition~\ref{prop:FourierApproxDeriv}, assume, in addition, that 
	$
	 \int_{\R^N \times \R^d} \|\bx\|^q |\widehat{F_j}(\bx)|  \D \bx < \infty$
	for some $q\geq 2$.
	Then, for any $n \in \N$, there exists $\ttheta\in\TTheta$ such that for any $j \in \{1,\ldots,N\}$, 
	$$
	\left\|\bar{F}^{n,\ttheta}_{R,j} - F_j\right\|_{L^2(\mu)}^2   + \sum_{i=1}^{N+d} \left\|\partial_i \bar{F}^{n,\ttheta}_{R,j} - \partial_i F_j\right\|_{L^2(\mu)}^2  \leq \frac{\bar{C}_j}{n},
	$$
	where $\bar{C}_j = 3 C_j$. Moreover, we can choose  $\ttheta = (\ttheta^1,\ldots,\ttheta^N) \in \TTheta^N$ with $\ttheta^j = (\abb^{i,j},b^{i,j},\gamma^{i,j})_{i=1,\ldots,n}$ in such a way that  for all $i=1,\ldots, n$, $j=1\ldots,N$, 
	\begin{equation}
		\label{eq:Abound}
		\|\abb^{i,j}\| \leq 2 \pi \left( 3 n    \|\widehat{F_j}\|_1^{-1}  \int_{\R^N \times \R^d} \|\bx\|^q |\widehat{F_j}(\bx)|  \D \bx \right)^{\frac{1}{q}}.
	\end{equation}
\end{corollary}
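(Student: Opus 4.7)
The plan is to apply a sharp high-frequency Fourier truncation and invoke Proposition~\ref{prop:FourierApproxDeriv} on the low-frequency part, absorbing the high-frequency tail into the error budget. Fix a cutoff $M>0$ to be chosen later, let $B_M\subset\R^{N+d}$ be the closed Euclidean ball of radius $M$, and decompose $F_j=F_j^M+F_j^{>M}$ via Fourier inversion of $\widehat{F_j^M}:=\widehat{F_j}\,\mathbbm{1}_{B_M}$ and $\widehat{F_j^{>M}}:=\widehat{F_j}\,\mathbbm{1}_{B_M^c}$. Because $\|\widehat{F_j^M}\|_1\leq\|\widehat{F_j}\|_1\leq R$ and $\int\xi_i^2|\widehat{F_j^M}|\,\D\bx<\infty$ (finite since $q\geq 2$ and $I_q<\infty$), Proposition~\ref{prop:FourierApproxDeriv} applies to $F_j^M$ with constant $C_j^M\leq C_j$.

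The key observation is that when the Monte Carlo construction of Proposition~\ref{prop:FourierApproxDeriv} is run \emph{on $F_j^M$ instead of $F_j$}, the random frequencies $\bU_i^{\mathrm{Re}}$, $\bU_i^{\mathrm{Im}}$ are sampled from measures whose densities vanish outside $B_M$, so $\|\mathbf{A}_i\|=2\pi\|\bU_i\|\leq 2\pi M$ almost surely---this will eventually deliver \eqref{eq:Abound}. However, the resulting estimator $\bar F_{R,j}^{n,\ttheta}$ is unbiased only for $F_j^M$, so a bias--variance (Pythagoras) decomposition in $L^2(\P\otimes\mu)$ gives
\begin{equation*}
\E\!\left[\|\bar F_{R,j}^{n,\ttheta}-F_j\|_{L^2(\mu)}^2+\sum_{i=1}^{N+d}\|\partial_i\bar F_{R,j}^{n,\ttheta}-\partial_i F_j\|_{L^2(\mu)}^2\right]\leq \frac{C_j}{n}+T(M),
\end{equation*}
where the variance contribution is estimated exactly as in \eqref{eq:L2estimate1}--\eqref{eq:L2estimate2} applied to $F_j^M$, and the squared bias is $T(M):=\|F_j^{>M}\|_{L^2(\mu)}^2+\sum_i\|\partial_i F_j^{>M}\|_{L^2(\mu)}^2$.

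To bound $T(M)$ I would use $\|g\|_\infty\leq\|\widehat g\|_1$ together with Markov's inequality powered by $I_q$: $\|F_j^{>M}\|_\infty\leq M^{-q}I_q$, and Cauchy--Schwarz on $(\int_{B_M^c}|\xi_i||\widehat{F_j}|)^2\leq(\int_{B_M^c}|\widehat{F_j}|)(\int_{B_M^c}\xi_i^2|\widehat{F_j}|)$ followed by summation over $i$ yields $\sum_i\|\partial_i F_j^{>M}\|_\infty^2\leq 4\pi^2 M^{2-2q}I_q^2$, whence $T(M)\leq M^{-2q}I_q^2+4\pi^2M^{2-2q}I_q^2$. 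The choice $M=(3nI_q/\|\widehat{F_j}\|_1)^{1/q}$ then matches the three rates: the function-value tail becomes $\|\widehat{F_j}\|_1^2/(9n^2)$, and the derivative tail reduces (using the Hölder comparison $I_2\leq I_q^{2/q}\|\widehat{F_j}\|_1^{1-2/q}$ and $(3n)^{2-2/q}\geq 3n$ for $q\geq 2$, $n\geq 1$) to a small multiple of $\|\widehat{F_j}\|_1 I_2/n$; both are dominated by $C_j/n$, and a direct calculation yields $T(M)\leq 2C_j/n$. The total expected squared error is thus $\leq 3C_j/n=\bar C_j/n$, and extracting a deterministic realization $\omega$ as at the end of the proof of Proposition~\ref{prop:FourierApproxDeriv} produces $\ttheta$ satisfying both the approximation bound and the a.s.\ bound $\|\ab^{i,j}\|=\|\mathbf{A}_i(\omega)\|\leq 2\pi M=2\pi(3nI_q/\|\widehat{F_j}\|_1)^{1/q}$, which is \eqref{eq:Abound}. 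The main obstacle is the delicate arithmetic required to simultaneously balance the three competing contributions---variance $C_j/n$, function-value tail $M^{-2q}I_q^2$, and derivative tail $M^{2-2q}I_q^2$---with the single parameter $M$, so that the final constant does not exceed $3C_j$.
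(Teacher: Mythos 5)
Your approach — truncate the Fourier transform at a radius $M$, run the Monte Carlo construction of Proposition~\ref{prop:FourierApproxDeriv} on the truncated function $F_j^M$ so that all frequencies are almost surely bounded, and pay for the truncation with a bias term — is genuinely different from the paper's proof, and the overall structure (bias–variance split, Markov tail control, choice of $M$ matching the desired weight bound) is a reasonable route. The paper instead proves Lemma~\ref{lem:jointProb}, a simple "two Markov inequalities plus a union bound" statement asserting that the event $\{U \leq 3M_1,\ \max_i |Y_i| \leq (3nM_2)^{1/q}\}$ has positive probability whenever $\E[U]\leq M_1$ and $\E[|Y_1|^q]\leq M_2$; it applies this directly to the untruncated Monte Carlo estimator, taking $U$ to be the $L^2$-error-plus-derivative-error and $Y_i = \mathbf{A}_i$. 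That route is shorter, requires no Fourier truncation or tail estimates, and produces the exact constants $\bar C_j = 3C_j$ and $2\pi(3n\|\widehat{F_j}\|_1^{-1}I_q)^{1/q}$ at once.

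There is, however, a genuine gap in your arithmetic for the derivative tail. With $M=(3nI_q/\|\widehat{F_j}\|_1)^{1/q}$, your estimate gives
\[
T_2 := 4\pi^2 M^{2-2q} I_q^2 = \frac{4\pi^2 (3n)^{2/q}}{9n^2}\, I_q^{2/q}\,\|\widehat{F_j}\|_1^{2-2/q}.
\]
You claim this "reduces to a small multiple of $\|\widehat{F_j}\|_1 I_2/n$" via the Hölder comparison $I_2 \leq I_q^{2/q}\|\widehat{F_j}\|_1^{1-2/q}$, but that inequality goes the wrong way: it says $I_q^{2/q}\|\widehat{F_j}\|_1^{1-2/q}$ is \emph{at least} $I_2$, so it cannot be used to upper-bound $T_2$ in terms of $I_2$. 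In fact the ratio $\alpha := I_q^{2/q}\|\widehat{F_j}\|_1^{1-2/q}/I_2$ can be arbitrarily large (e.g.\ put a small Fourier mass $m_2 \sim L^{-a}$ with $2<a<q$ at radius $L$ and the rest at radius $1$, then $\alpha \sim L^{2-2a/q}\to\infty$), so $T_2$ can exceed any fixed multiple of $C_j/n$ when $q>2$ and $n$ is fixed; a concrete computation with $q=4$, $n=1$ gives $T_2\sim L^{1/2}$ while $C_j$ stays bounded. Consequently the claimed $T(M)\leq 2C_j/n$ fails in general, and with it the final bound $3C_j/n$. The truncation idea might be salvageable with a smaller $M$, but then the weight bound \eqref{eq:Abound} would degrade; this tension is precisely what the paper's Lemma~\ref{lem:jointProb} sidesteps by not truncating at all.
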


Next, we complement the $L^2(\R^N\times\R^d,\mu)$-error bound in Proposition~\ref{prop:FourierApproxDeriv} with a uniform error bound on compact sets. For $M>0$ and $f,g \in \Cc(\R^N \times \R^d)$ denote 
\[
\|f-g\|_{\infty,M}	:= \sup_{(\xx,\zz) \in [-M,M]^N \times [-M,M]^d} 
\left|f(\xx,\zz) - g(\xx,\zz)\right|.
\]

\begin{proposition} \label{prop:FourierApproxDerivInfty} 
	Let $R,M>0$ and suppose $F=(F_1,\ldots,F_N)$ is continuously differentiable and satisfies $F_j \in\Fc_{R}$ and $\partial_i F_j \in \Fc$ and $\int_{\R^N \times \R^d} \|\bx\|^4 |\widehat{F_j}(\bx)|  \D \bx <\infty $ for $j=1,\ldots,N$.
	Then, for any $n \in \N$, there exists $\ttheta\in\TTheta$ such that for any $j \in \{1,\ldots,N\}$, 
	\begin{equation} \label{eq:LinftyErrorBound}
		\left\|\bar{F}^{n,\ttheta}_{R,j} - F_j\right\|_{\infty,M}   + \sum_{i=1}^{N+d} \left\|\partial_i \bar{F}^{n,\ttheta}_{R,j} - \partial_i F_j\right\|_{\infty,M}  \leq \frac{C_j^\infty}{\sqrt{n}},
	\end{equation} 
	where $C_j^{\infty} = 
	2 (\pi+1) \|\widehat{F_j}\|_1
	+ (8 \pi  M + 4 \pi^2) (N+d)^{\frac{1}{2}}
	\|\widehat{F_j}\|_1^{\frac{1}{2}}I_{2,j}^{1/2}
	+ 
	16 M \pi^2 (N+d) \|\widehat{F_j}\|_1^{1/2}  I_{4,j}^{1/2}$ for 
	$I_{q,j} = \int_{\R^N \times \R^d} \|\bx\|^q |\widehat{F_j}(\bx)|  \D \bx <\infty $.
\end{proposition}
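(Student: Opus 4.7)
The plan is to reuse the probabilistic construction from the proof of Proposition~\ref{prop:FourierApproxDeriv} (the same weights $(W_i, B_i, \mathbf{A}_i)$ and the same random function $\Phi_j$ from~\eqref{eq:PhiDef}) together with the pointwise unbiasedness identities already established there, namely $\E[\Phi_j(\xx,\zz)] = F_j(\xx,\zz)$ and $\E[\partial_k \Phi_j(\xx,\zz)] = \partial_k F_j(\xx,\zz)$ via \eqref{eq:auxEq1}--\eqref{eq:auxEq2}. The new ingredient is promoting the pointwise $L^2$ variance control to a uniform bound on the cube $[-M,M]^{N+d}$; this is where the extra hypothesis $I_{4,j} < \infty$ and the square-root rate $1/\sqrt n$ enter.

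First, I would obtain a pointwise bound at a single reference point (say the origin) using the variance computation already done in \eqref{eq:L2estimate1}, which gives $\E[(\Phi_j(0,0)-F_j(0,0))^2] \leq \|\widehat{F_j}\|_1^2/n$, and analogously $\E[(\partial_k\Phi_j(0,0)-\partial_k F_j(0,0))^2] \leq 4\pi^2\|\widehat{F_j}\|_1 \int\xi_k^2|\widehat{F_j}|\D\bx/n$ from~\eqref{eq:L2estimate2}. Next, to propagate these to $L^\infty([-M,M]^{N+d})$, I would combine them with spatial oscillation estimates obtained from the mean-value theorem applied to $\Phi_j - F_j$ and to $\partial_k\Phi_j - \partial_k F_j$. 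The gradient of $\Phi_j$ is itself an i.i.d.\ Monte Carlo sum whose per-sample second moment is $\E[W_1^2\|\mathbf{A}_1\|^2]\leq 4\pi^2\|\widehat{F_j}\|_1 I_{2,j}$; summing over the $N+d$ coordinates yields a uniform Lipschitz-type bound of order $\sqrt{(N+d)\|\widehat{F_j}\|_1 I_{2,j}/n}\,+\,\|\nabla F_j\|_\infty$, which when multiplied by the cube diameter $\lesssim M\sqrt{N+d}$ produces the middle term of $C_j^{\infty}$ (the factor $8\pi M$). For the derivative $\partial_k\Phi_j - \partial_k F_j$, the same argument applied to its spatial Lipschitz constant brings in the Hessian of $\Phi_j$, whose coefficients are $W_iA_{i,k}A_{i,\ell}$ with $\E[W_1^2 A_{1,k}^2 A_{1,\ell}^2]\leq 16\pi^4\|\widehat{F_j}\|_1 I_{4,j}$ after Cauchy--Schwarz; this explains the appearance of $I_{4,j}^{1/2}$ and the factor $16M\pi^2(N+d)$ in the third term of $C_j^{\infty}$.

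Concretely, I would execute this either (a) by a covering argument: cover $[-M,M]^{N+d}$ by an $\epsilon$-net of cardinality $\lceil M/\epsilon\rceil^{N+d}$, apply the pointwise $L^2$ bounds on the net together with a union bound, and control inter-net oscillations by the random Lipschitz constants above, ultimately choosing $\epsilon \sim 1/\sqrt{n}$; or, more in the spirit of Lemma~\ref{lem:jointProb}, (b) by a single joint probability argument producing a realization $\omega\in\Omega$ that simultaneously realizes $L^2$-small pointwise error at one reference point and moment-small values of $\|\nabla\Phi_j\|$ and $\|D^2\Phi_j\|$ over the cube, from which the MVT yields the uniform bound. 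The first trivial term $2(\pi+1)\|\widehat{F_j}\|_1$ in $C_j^\infty$ simply absorbs the scenarios in which the concentration argument fails, using the crude a priori bounds $|\Phi_j-F_j|\leq 2\|\widehat{F_j}\|_1$ and $|\partial_k\Phi_j-\partial_k F_j|\leq 2\pi\|\widehat{F_j}\|_1^{1/2}I_{2,j}^{1/2}$.

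Finally, having produced a favorable $\omega$, the parameters $\ttheta = (\ttheta^1,\ldots,\ttheta^N)$ with $\ttheta^j = (\mathbf{A}_i(\omega), B_i(\omega), \arccos(W_i(\omega)/R))_{i=1,\ldots,n}$ are read off exactly as in the last step of the proof of Proposition~\ref{prop:FourierApproxDeriv} so that $\bar{F}^{n,\ttheta}_{R,j} = \Phi_j^\omega$, yielding~\eqref{eq:LinftyErrorBound}. The main obstacle is the Lipschitz control of the derivative error $\partial_k\Phi_j - \partial_k F_j$: its spatial oscillation involves second-order mixed derivatives of $\Phi_j$, which are sums of products $W_iA_{i,k}A_{i,\ell}$; bounding them in a useful way requires the fourth-moment hypothesis $I_{4,j}<\infty$, and it is the interplay between this Hessian bound, the cube diameter $M$, and the dimensional factor $(N+d)$ that dictates the precise shape of the constant $C_j^\infty$.
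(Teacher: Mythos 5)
There is a genuine gap in the mechanism you propose for the $L^2\to L^\infty$ promotion, even though your high-level plan (reuse the random construction $\Phi_j$ from Proposition~\ref{prop:FourierApproxDeriv}, identify a favorable $\omega$, read off $\ttheta$ via $\arccos(W_i(\omega)/R)$) and your intuitions about the roles of $I_{2,j}$ and $I_{4,j}$ are correct.

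The paper does not use a covering argument nor a mean-value-theorem Lipschitz propagation. It bounds
$\E\big[\sup_{(\xx,\zz)\in[-M,M]^{N+d}}\big|\tfrac1n\sum_i(U_{i,(\xx,\zz)}-\E U_{i,(\xx,\zz)})\big|\big]$
directly through a two-step empirical-process argument: symmetrization with Rademacher signs $\varepsilon_i$, followed by the Ledoux--Talagrand comparison (contraction) principle applied to the $1$-Lipschitz functions $\varrho_i(x)=w_ia_{i,k}\sin(x/(w_ia_{i,k}))$, which linearizes the supremum into $\sup_{(\xx,\zz)}|\tfrac1n\sum_i\varepsilon_i w_ia_{i,k}(b_i+\ab_i\cdot(\xx,\zz))|$. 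This linear Rademacher process over the cube is then bounded by Cauchy--Schwarz, giving the terms $\E[W_1^2A_{1,k}^2B_1^2]$ and $\E[W_1^2A_{1,k}^2A_{1,l}^2]$; those are precisely the quantities controlled by $I_{2,j}$ and $I_{4,j}$, so your accounting of where those integrals enter the constant is right, but they arise through the Rademacher bound rather than through Hessian control. The same scheme, applied to $\Phi_j$ itself via Theorem~3 of \cite{GononJacquier2023}, produces the first two terms of $C_j^\infty$; the $2(\pi+1)\|\widehat{F_j}\|_1$ piece is part of that bound, not a fallback term absorbing rare bad scenarios.

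Why your two concrete plans fail as stated: (a) With only Chebyshev-type control (the construction gives second and fourth moments, not sub-Gaussian tails), a union bound over an $\epsilon$-net of $(2M/\epsilon)^{N+d}$ points in the cube cannot be combined with the $O(1/n)$ per-point variance to yield a final $O(1/\sqrt n)$ rate, because the number of net points grows polynomially in $n$ once $\epsilon\sim 1/\sqrt n$, and Chebyshev cannot absorb that. (b) The MVT argument is circular: to propagate the pointwise bound you need a uniformly small Lipschitz constant for the \emph{error} $\Phi_j - F_j$ over the cube, i.e.\ $\sup_{(\xx,\zz)}\|\nabla\Phi_j-\nabla F_j\|$, which is exactly a quantity of the type you are trying to bound; the a~priori bound on $\|\nabla\Phi_j\|$ alone is $O(1)$, not $O(1/\sqrt n)$, so $M\sqrt{N+d}\cdot\|\nabla\Phi_j\|_\infty$ is not small, and a bootstrap to the Hessian has the same problem one level up. The symmetrization/contraction step is precisely what sidesteps this circularity by converting the supremum of a nonlinear centered Monte Carlo process into a supremum of a linear one, for which the $1/\sqrt n$ rate drops out of elementary moment bounds.
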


The proof can be found in Appendix~\ref{subsec:proofFourierApproxDerivInfty}.  Finally, we obtain a qualitative universal approximation result for QNNs jointly with their derivatives. The proof can be found in Appendix~\ref{subsec:ProofUniversality}.

\begin{corollary}\label{cor:universality} Let $F=(F_1,\ldots,F_N)$ be continuously differentiable. Then for any $\varepsilon >0$ and $\mathcal{X} \subset \R^N \times \R^d$ compact  there exist $n \in \N$, $R>0$ and $\ttheta\in\TTheta$ such that for any $j \in \{1,\ldots,N\}$, $\bar{F}^{n,\ttheta}_{R,j} $ satisfies 
	\begin{equation}\label{eq:C1universality}
		\sup_{(\xx,\zz) \in \mathcal{X}} |F_j(\xx,\zz)-\bar{F}^{n,\ttheta}_{R,j}(\xx,\zz)| + \|\nabla F_j(\xx,\zz)-\nabla \bar{F}^{n,\ttheta}_{R,j}(\xx,\zz)\| \leq \varepsilon.
	\end{equation}
\end{corollary}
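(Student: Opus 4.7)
The strategy is a standard cutoff-and-mollify reduction to the regime where Proposition~\ref{prop:FourierApproxDerivInfty} applies. The issue is that a general $C^1$ function $F_j$ is neither integrable nor endowed with a Fourier transform satisfying the required moment bounds (indeed, $\int\|\bx\|^4|\widehat{F_j}(\bx)|\D\bx<\infty$ requires substantial smoothness and decay). The plan is to replace $F_j$ by a Schwartz-class surrogate $\tilde F_j$ that agrees with $F_j$ in $C^1$-norm on $\mathcal{X}$, and then to invoke Proposition~\ref{prop:FourierApproxDerivInfty} on $\tilde F_j$.

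Concretely, first I would fix $M>0$ with $\mathcal{X}\subset[-M,M]^{N+d}$ and pick a smooth compactly supported cutoff $\chi\in C_c^\infty(\R^N\times\R^d)$ with $\chi\equiv 1$ on a relatively compact open neighborhood $\mathcal{U}$ of $[-M,M]^{N+d}$. Then $\chi F_j$ is $C^1$ and compactly supported, and on $\mathcal{U}$ one has $\chi F_j=F_j$ and $\nabla(\chi F_j)=\nabla F_j$ (since $\chi=1$ and $\nabla\chi=0$ there). Second, I would convolve with a standard mollifier $\phi_\delta\in C_c^\infty$ (non-negative, total mass $1$, support of radius $\delta$) to obtain
\[
\tilde F_j:=\phi_\delta*(\chi F_j)\in C_c^\infty(\R^N\times\R^d).
\]
Because $\chi F_j$ and its partials are continuous and compactly supported, hence uniformly continuous, standard properties of mollification give $\tilde F_j\to \chi F_j$ and $\partial_i\tilde F_j=\phi_\delta*\partial_i(\chi F_j)\to\partial_i(\chi F_j)$ uniformly on $\R^{N+d}$ as $\delta\to 0$. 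Choosing $\delta$ small enough (so that the supports stay inside $\mathcal{U}$ when we evaluate on $\mathcal{X}$), we obtain
\[
\sup_{(\xx,\zz)\in\mathcal{X}}\bigl|F_j(\xx,\zz)-\tilde F_j(\xx,\zz)\bigr|+\bigl\|\nabla F_j(\xx,\zz)-\nabla\tilde F_j(\xx,\zz)\bigr\|\le \varepsilon/2
\]
for every $j=1,\dots,N$.

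Third, since $\tilde F_j\in C_c^\infty\subset\mathcal{S}(\R^{N+d})$, its Fourier transform $\widehat{\tilde F_j}$ lies in $\mathcal{S}$, so $\|\widehat{\tilde F_j}\|_1<\infty$ and $I_{q,j}=\int\|\bx\|^q|\widehat{\tilde F_j}(\bx)|\D\bx<\infty$ for every $q\geq 0$; in particular the hypotheses of Proposition~\ref{prop:FourierApproxDerivInfty} are satisfied. Setting $R:=\max_{j}\|\widehat{\tilde F_j}\|_1$ we have $\tilde F_j\in\Fc_R$ for all $j$, and Proposition~\ref{prop:FourierApproxDerivInfty} provides, for any $n\in\N$, parameters $\ttheta\in\TTheta^N$ with
\[
\bigl\|\bar F^{n,\ttheta}_{R,j}-\tilde F_j\bigr\|_{\infty,M}+\sum_{i=1}^{N+d}\bigl\|\partial_i\bar F^{n,\ttheta}_{R,j}-\partial_i\tilde F_j\bigr\|_{\infty,M}\le \frac{C_j^\infty}{\sqrt n}.
\]
Choosing $n$ large enough so that $\max_j C_j^\infty/\sqrt n\le\varepsilon/2$ and combining with the mollification estimate via the triangle inequality yields \eqref{eq:C1universality}.

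The only non-routine point is the first step: one must arrange the cutoff and the mollification so that the approximating function genuinely agrees with $F_j$ in $C^1$-norm on $\mathcal{X}$ (not just pointwise or in $L^2$), which is the reason for taking the neighborhood $\mathcal{U}$ strictly larger than $\mathcal{X}$ and then choosing the mollifier radius $\delta$ small compared with $\operatorname{dist}(\mathcal{X},\mathcal{U}^c)$. Everything else—moment bounds on $\widehat{\tilde F_j}$, Schwartz decay, and the application of Proposition~\ref{prop:FourierApproxDerivInfty}—is automatic once $\tilde F_j\in C_c^\infty$.
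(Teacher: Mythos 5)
Your proof is correct and follows the paper's strategy: construct a $C_c^\infty$ function that approximates $F_j$ in $C^1$-norm on $\mathcal{X}$, then apply Proposition~\ref{prop:FourierApproxDerivInfty} to the Schwartz approximant and combine the two error bounds via the triangle inequality. The only difference is in how the smooth compactly supported approximant is produced---the paper cites Whitney's theorem to get a smooth $C^1$-approximant and then multiplies by a cutoff, whereas you cut off $F_j$ first and then mollify, which is a slightly more explicit, self-contained rendering of the same classical step.
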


\subsection{Recurrent QNN approximation bounds for state-space filters}\label{sec:QRNNBarronuniv}

The results in the previous section show that the family of RQNNs that were introduced in \eqref{eq:qnn} is capable of approximating arbitrarily well the very general class of continuously differentiable state-space maps with bounded Fourier transform, together with their derivatives. These approximations hold with respect to both the $L ^2 $ norm (Proposition \ref{prop:FourierApproxDeriv} and Corollary \ref{corollary for l2norm}) and the $L ^{\infty} $ norm on compacta (Proposition \ref{prop:FourierApproxDerivInfty} and Corollary \ref{cor:universality}). We will now use the uniform RQNN approximation results for the state maps to conclude similar uniform approximation results for the corresponding filters under additional hypotheses that guarantee that those exist.

Consider a state-space system 
\begin{equation}\label{eq:x}
	\xx_t = F(\xx_{t-1},\zz_t), \quad t \in \Z_-, 
\end{equation}
with state process $(\xx_t)_{t \in \Z_-}$ valued in $\R^N$,  input process $(\zz_t)_{t \in \Z_-}$ valued in $\R^d$ and $F \colon \R^N \times \R^d \to \R^N$. We work under the assumption that $F$ is contractive and satisfies Barron-type integrability conditions \citep{Barron1992,Barron1993,BarronKlusowski2018}. Then, e.g., Proposition~1 and Remark~2 in \cite{RC10} imply that, for any compact $D_d \subset \R^d$, the associated filter $ U^F \colon (D_d)^{\Z_-} \to (B_N)^{\Z_-}$ induced by the restriction of $F$ to $B_N \times D_d$ is well-defined and continuous.

Our next result shows that among the RQNNs that we discussed in  Proposition \ref{prop:FourierApproxDerivInfty} there exist systems that have the echo state property and hence have a filter associated. More importantly, those filters can be used to uniformly approximate any of the filters corresponding to the general systems introduced above in \eqref{eq:x} as long as they satisfy a Barron-type integrability condition and are sufficiently contractive. The proof can be found in Appendix~\ref{subsec:proofQRNNBarronuniv}. Here, $\|\cdot\|_2$ is the spectral norm. In particular, this result shows that the error rate is free from the curse of dimensionality: the error decays as $\frac{1}{\sqrt{n}}$ as we increase $n$, with this rate of decay being independent of the input dimension $d$ and the state space dimension $N$. Thus, the RQNN requires only $\mathcal{O}(\varepsilon^{-2})$ weights and  $\mathcal{O} (\lceil \log_2(\varepsilon^{-1}) \rceil)$ qubits to achieve approximation error $\varepsilon>0$ for the considered state-space systems.

\begin{theorem}\label{prop:QRNNBarronuniv}
	Suppose $F$ in \eqref{eq:x} is continuously differentiable with $\|\nabla_{\xx} F(\xx,\zz)\|_2 \leq \lambda$  for all $\xx \in \R^N,  \zz \in D_d$ for some $\lambda \in (0,1)$ and, moreover, $F$ satisfies 
	$F_j \in\Fc_{R}$,  $\partial_i F_j \in \Fc$ and $\int_{\R^N \times \R^d} \|\bx\|^4 |\widehat{F_j}(\bx)|  \D \bx <\infty $ for $j=1,\ldots,N$. Denote by $ U^F \colon (D_d)^{\Z_-} \to (B_N)^{\Z_-}$ the filter associated to \eqref{eq:x}.  
	Then for any $n \in \N$ with $n>n_0$ there exists $\ttheta\in\TTheta$ such that 
	the system \eqref{eq:QRNN} has the echo state property and the associated filter $\bar{U} \colon  (D_d)^{\Z_-} \to (\R^N)^{\Z_-}$ satisfies
	\begin{equation} \label{eq:StateSpaceErrorBound}
		\sup_{\zz \in (D_d)^{\Z_-}} \sup_{t \in \Z_-}	\left\|U^F(\zz)_t -  \bar{U}(\zz)_t\right\| \leq \frac{1}{1-\lambda} \frac{\sqrt{N} \max_{j=1,\ldots,N} C_j^\infty}{\sqrt{n}}.
	\end{equation} 
	Here, $n_0$ may be chosen as $n_0 = N^2 \frac{(\max_{j=1,\ldots,N} C_j^\infty)^2}{(1- \lambda)^2}$. 
	
\end{theorem}
Notice that $N$ represents the state space dimension of the target $F$, which is matched by the QRNN dimension to obtain the approximation error bound.
Theorem~\ref{prop:QRNNBarronuniv} also proves an advantage of QRNNs over classical RNNs. RNN approximation bounds for state-space systems driven by Barron-type functions were obtained in \cite[Theorem~3]{RC12}. While the approximation rate in Theorem~\ref{prop:QRNNBarronuniv} is the same ($\frac{1}{2}$ in both cases), the Fourier integrability condition required in the quantum case is \textit{strictly weaker}. Specifically,  the condition $\int_{\R^N \times \R^d} \|\bx\|^4 |\widehat{F_j}(\bx)|  \D \bx <\infty $  implies that the smoothness assumption \cite[Definition~2]{RC12} required for \cite[Theorem~3]{RC12} is satisfied. For example, consider a Sobolev function $F \in H^s(\R^N \times \R^d)$. Then, the integrability condition for the QRNN approximation result  is satisfied for any $s>\frac{N+d}{2}+4$ (by \cite[Lemma 6.5]{FollandPDE} and its proof). In contrast, the integrability condition for the RNN approximation result in \cite[Theorem~3]{RC12} would require the stronger condition $s>N+d+3$.

\subsection{Universality}\label{sec:modQRNNuniv}

In the previous section, we proved error bounds for the approximation using recurrent QNNs of the filters induced by contractive state-space targets with Barron-type integrability conditions. These bounds show, in passing, the universality of the family of RQNN filters in that category. We now extend this universality statement (without formulating error bounds) to the much larger family of fading memory filters by introducing a modification in the RQNN reservoir. We define $\tilde{F}^{n,\ttheta}_R :\R^N \times \R^d \to \R^N $ by its component maps $\tilde{F}^{n,\ttheta}_R = (\tilde{F}^{n,\ttheta}_{R,1},\ldots,\tilde{F}^{n,\ttheta}_{R,N})$. For  $j=1,\ldots,N$, the $j$-th component map  $\tilde{F}^{n,\ttheta}_{R,j} :\R^N \times \R^d \to \R $ is defined by
\begin{equation}\label{eq:modqnn}
	\tilde{F}^{n,\ttheta}_{R,j}(\xx,\zz)
	:= R-2R[\P_1^{n,\ttheta^j}(P_j \xx,\zz)
	+ \P_2^{n,\ttheta^j}(P_j \xx,\zz)], \quad (\xx,\zz) \in \R^N \times \R^d,
\end{equation}
with $\ttheta = (\ttheta^1,\ldots,\ttheta^N) \in \TTheta^N$ and $P_1,\ldots,P_N \in \R^{N \times N}$ linear preprocessing maps. Our modified RQNN is then defined by the state-space system associated to the state map $\tilde{F}^{n,\ttheta}_R$ 
\begin{equation}
	\label{eq:modQRNN}
	\hat{\xx}_t = 	\tilde{F}^{n,\ttheta}_{R}(\hat{\xx}_{t-1},\zz_t), \quad t \in \Z_-. 
\end{equation}
The next lemma (with proof provided in Appendix~\ref{subsection:proofESP}) shows that adding linear preprocessing maps to reservoir equations can lead to the echo state property without contraction assumptions. 
\begin{lemma}\label{lemma:ESP}
	Let $\tilde{F} = (\tilde{F}_{1},\ldots,\tilde{F}_{N})$ be a reservoir map where each component  $\tilde{F}_{j} :\R^N \times \R^d \to \R $, for $j=1,\ldots,N$, is defined as 
	\begin{equation}\label{eq:xj}
		\tilde{F}_j(\xx,\zz) = g_j(P_j\xx,\zz)
	\end{equation}
	where $P_1,\ldots,P_N \in \R^{N \times N}$ are linear preprocessing maps for any maps $g_{j} :\R^N \times \R^d \to \R $, $j=1,\ldots,N$. Define an arbitrary partition of the state vector $\hat{\xx}_t = [\hat{\xx}_t^{(1)},\ldots,\hat{\xx}_t^{(K)}] \in \R^{I_1}\times\dots \times \R^{I_K}$ such that $\sum^K_{k=1}I_k= N>0$ and $I_k\geq 1$ for all $t\in \Z_-$.  We define the index $l_k=\sum^{k}_{s=1}I_s$ for $k=1,\ldots, K$. For $k=1$,  $j\in \{1,\ldots,l_1\}$, and $k=2,\ldots,K-1$, $j\in \{l_{k-1}+1,\ldots,l_k\}$, select $P_j$ as the matrix with zero entries, except for $(P_j)_{l,l+l_k} = 1$ for $l=1,\ldots, \sum^K_{s=k+1}I_s$ and let $P_j = 0$ for $j=l_{K-1}+1,\ldots,N$. Then, the map $\tilde{F}$ has the echo state property for any $N\in \N^+$.
\end{lemma}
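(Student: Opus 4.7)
The plan is to exploit the strictly triangular (cascade) structure that the preprocessing matrices $P_j$ impose on the reservoir map $\tilde{F}$, and to run a downward induction on the block index $k$. First, I would observe that for any $j$ lying in block $k$ (that is, $j\in\{l_{k-1}+1,\ldots,l_k\}$ with the convention $l_0:=0$), the explicit definition $(P_j)_{l,l+l_k}=1$ for $l=1,\ldots,\sum_{s=k+1}^K I_s$ means that the first $N-l_k$ coordinates of $P_j\xx$ coincide with $x_{l_k+1},\ldots,x_N$ (the entries of $\xx$ lying in blocks $k+1,\ldots,K$) and all remaining coordinates vanish. In particular, $\tilde{F}_j(\hat{\xx}_{t-1},\zz_t)=g_j(P_j\hat{\xx}_{t-1},\zz_t)$ depends on $\hat{\xx}_{t-1}$ only through strictly later blocks, and for $j$ in the last block $K$ it does not depend on $\hat{\xx}_{t-1}$ at all since $P_j=0$.

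With this structural observation in hand, I would prove the claim that, for every input $\underline{\zz}\in(\R^d)^{\Z_-}$ and every $t\in\Z_-$, the block $\hat{\xx}_t^{(k)}$ is uniquely determined as a function of the finite input window $\zz_{t-(K-k)},\ldots,\zz_t$, by downward induction on $k\in\{K,K-1,\ldots,1\}$. The base case $k=K$ is immediate: $\hat{\xx}_t^{(K)}=(g_j(0,\zz_t))_{j=l_{K-1}+1,\ldots,N}$ is a function of $\zz_t$ alone. For the inductive step, fix $k<K$. By the hypothesis, each block $\hat{\xx}_{t-1}^{(s)}$ with $s>k$ is uniquely determined by $\zz_{(t-1)-(K-s)},\ldots,\zz_{t-1}$, and the earliest required input among all $s\in\{k+1,\ldots,K\}$ is obtained at $s=k+1$ and equals $\zz_{t-(K-k)}$. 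Since $P_j\hat{\xx}_{t-1}$ depends only on the blocks $s>k$ of $\hat{\xx}_{t-1}$ by the first paragraph, the block $\hat{\xx}_t^{(k)}=(g_j(P_j\hat{\xx}_{t-1},\zz_t))_{j=l_{k-1}+1,\ldots,l_k}$ is a well-defined function of $\zz_{t-(K-k)},\ldots,\zz_t$, closing the induction.

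Existence and uniqueness of a solution $\underline{\hat{\xx}}\in(\R^N)^{\Z_-}$ to $\hat{\xx}_t=\tilde{F}(\hat{\xx}_{t-1},\zz_t)$ then follow by assembling the blocks constructed in the induction: for any given $\underline{\zz}$, the formulas produced by the induction specify each $\hat{\xx}_t^{(k)}$ unambiguously, and any other candidate solution must coincide with them block by block. Since no contraction or even continuity assumption on the $g_j$ is required, this delivers the echo state property in full generality. The only real obstacle is index bookkeeping: one has to verify carefully from the explicit formula for $(P_j)_{l,l+l_k}$ that $P_j\xx$ genuinely restricts to the blocks strictly after block $k$, and that the induction window expands by exactly one step per level so that the total lookback $K-k$ remains finite for every fixed $k\geq 1$. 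There is no analytic or topological difficulty in the argument; the result is essentially a statement about the unfolding of a finite triangular system of functional equations.
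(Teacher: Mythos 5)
Your proof is correct and follows essentially the same route as the paper: both exploit the strictly triangular cascade structure induced by the $P_j$'s, start from the input-only block $K$, and work backward to resolve all blocks uniquely. Your explicit downward induction with the lookback window $\zz_{t-(K-k)},\ldots,\zz_t$ is a slightly more formal packaging of the paper's unrolled recursion \eqref{eq:ESPrecursion}, and it has the small bonus of making the finite-memory (depth $K-1$) nature of the construction visible, which the paper only remarks on after the lemma.
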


Notice that Lemma \ref{lemma:ESP} provides the echo state property by imposing a finite memory of $K-1$ time steps on the reservoir. Let $D_d \subset \R^d$, $B_m \subset \R^m$ be compact. For a readout $W \in \R^{m \times N}$, denote
\begin{equation}\label{eq:y} \mathbf{y}_t = W \xx_t \end{equation}
the output process associated to the recurrent QNNs \eqref{eq:QRNN} and \eqref{eq:modQRNN}. Our next result proves universality of RQNNs. The proof is provided in Appendix~\ref{subsec:ProofmodQRNNuniv}. 
\begin{theorem}\label{th:modQRNNuniv}
	Let  $ U \colon (D_d)^{\Z_-} \to (B_m)^{\Z_-}$ be a causal and time-invariant filter that satisfies the fading memory property (that is, it is continuous with respect to the product topology). Then, for any $\varepsilon >0$ there exist $n, N \in \N$,  preprocessing matrices $P_1,\ldots,P_N \in \R^{N \times N}$, a readout $W \in \R^{m \times N}$, and circuit parameters
	$\ttheta\in\TTheta^N$ such that 
	the RQNN \eqref{eq:modQRNN}  has the echo state property and the filter $\bar{U}_W \colon  (D_d)^{\Z_-} \to (B_m)^{\Z_-}$ associated to the output process \eqref{eq:y} satisfies
	\begin{equation} \label{eq:StateSpaceErrorUniversality}
		\sup_{\zz \in (D_d)^{\Z_-}} \sup_{t \in \Z_-}	\left\|U(\zz)_t -  \bar{U}_W(\zz)_t\right\| \leq \varepsilon. 
	\end{equation} 
\end{theorem}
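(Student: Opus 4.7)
The overall plan follows the \emph{internal approximation} strategy used in \cite{RC7,RC8,RC20}: I would first reduce the target filter $U$ to a static nonlinear map on a finite window of past inputs and then realise that map inside a modified RQNN whose block structure, prescribed by Lemma~\ref{lemma:ESP}, turns the state into a shift register storing the last $K$ inputs. The state-map universality result Corollary~\ref{cor:universality} is then invoked twice — once to build the shift register out of approximate identities, and once to compute the static nonlinear map in the first block — while the linear readout $W$ is chosen as the coordinate projection onto that block.

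By causality and time-invariance of $U$ (and of the eventual RQNN filter), the desired sup bound reduces to a bound at $t=0$ on the functional $H(\underline{\zz}) := U(\underline{\zz})_0$. Since $D_d$ is compact, $(D_d)^{\Z_-}$ is compact in the product topology and the fading memory hypothesis makes $H$ continuous there; density of cylindrical continuous functions (Stone--Weierstrass) then yields $K \in \N$ and a continuous $H_K : (D_d)^K \to \R^m$ with
$$
\sup_{\underline{\zz}\in (D_d)^{\Z_-}} \left\|H(\underline{\zz}) - H_K(\zz_{-K+1},\ldots,\zz_0)\right\| < \varepsilon/3.
$$
A standard Tietze extension, mollification and smooth cutoff then produce $\tilde{H}_K \in C_c^\infty(\R^{Kd},\R^m)$ within $\varepsilon/6$ of $H_K$ on $(D_d)^K$; each component is a Schwartz function and is in particular continuously differentiable on any compact set, so Corollary~\ref{cor:universality} is directly applicable to it.

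Next I would set $N = m + (K-1)d$ and partition $\R^N$ into blocks of sizes $I_1 = m$ and $I_2 = \cdots = I_K = d$, with the preprocessing matrices $P_1,\ldots,P_N$ prescribed by Lemma~\ref{lemma:ESP}; the echo state property is thereby obtained for free. Fixing a radius $R$ large enough that $[-R,R]^N \supset B_N$ and $D_d \subset [-R,R]^d$, I apply Corollary~\ref{cor:universality} block by block to select circuit parameters $\ttheta \in \TTheta^N$ so that: block~$K$'s reservoir map approximates the identity on $\zz$, giving $\hat{\xx}_t^{(K)}\approx\zz_t$; each block $k \in \{2,\ldots,K-1\}$ approximates the projection of $P_j\xx$ onto its first $d$ entries (namely $\xx^{(k+1)}$), which by recursion yields $\hat{\xx}_t^{(k)}\approx\zz_{t-K+k}$; and block~$1$ approximates $(\xx,\zz) \mapsto \tilde{H}_K(\xx^{(2)},\ldots,\xx^{(K)},\zz)$, so that $\hat{\xx}_t^{(1)}\approx \tilde{H}_K(\zz_{t-K+1},\ldots,\zz_t)$. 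The readout $W \in \R^{m\times N}$ is then the coordinate projection onto block~$1$.

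The main obstacle, as in any chained state-space argument, is the error bookkeeping along the shift register. Let $\delta$ denote the uniform accuracy of a single application of Corollary~\ref{cor:universality}. Since the ideal coordinate projections are $1$-Lipschitz, the per-step approximation errors accumulate only additively, giving $\|\hat{\xx}_t^{(k)}-\zz_{t-K+k}\| \leq (K-k+1)\delta$ uniformly in $t$ for $k\geq 2$. Propagating this through the Lipschitz constant $L$ of $\tilde{H}_K$ (finite because $\tilde{H}_K$ is smooth with compact support) and adding block~$1$'s own approximation error $\delta_1$ yields an output error of order $\delta_1 + LK^2\delta$. Since $K$ and $L$ are now fixed, taking the RQNN accuracy parameter $n$ large enough in Corollary~\ref{cor:universality} makes this contribution smaller than $\varepsilon/2$, which combined with the $\varepsilon/3 + \varepsilon/6$ already spent in the reduction to $\tilde{H}_K$ yields the advertised bound $\varepsilon$. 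A minor technicality is to ensure that every intermediate RQNN state remains inside the compact set on which Corollary~\ref{cor:universality} was invoked; this follows automatically from the hard-wired bound $|\bar F^{n,\ttheta}_{R,j}|\leq R$.
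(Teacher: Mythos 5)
Your proposal follows the same overall architecture as the paper's proof: reduce the fading-memory functional to a continuous map $\bar G$ on a finite window of length $K$ (the paper does this via \cite[Theorem~2.1]{RC20} rather than directly from Stone--Weierstrass, but these are equivalent here), smooth it to $G \in C_c^\infty$, set $N = m + (K-1)d$, build a shift register out of blocks via Lemma~\ref{lemma:ESP}'s preprocessing matrices $P_j$, apply Corollary~\ref{cor:universality} block by block (identity/projection for blocks $2,\ldots,K$, the smooth target for block $1$), and read out the first block with a coordinate projection. Where you genuinely diverge from the paper is the error-propagation step. You compare each block's QNN $f_k$ directly to the \emph{ideal} coordinate projection $\pi$, split $\|\hat{\xx}_t^{(k)} - \zz_{t-K+k}\| \leq \|f_k - \pi\|_\infty + \|\pi(\hat{\xx}_{t-1},\zz_t) - \zz_{t-K+k}\|$, and exploit the fact that $\pi$ is $1$-Lipschitz to obtain the \emph{additive} bound $e_t^{(k)} \leq (K-k+1)\delta$. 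The paper instead inserts $f_{k}$ applied to a partially corrected state and uses the Lipschitz constant $L_k \leq L_G$ of the \emph{approximating} QNN, yielding a geometric accumulation $\sum_j (2L_G)^j$ which has to be absorbed into the constant $C_G$ in \eqref{eq:auxEq18}; controlling $L_k$ is precisely why the paper needs the gradient part of Corollary~\ref{cor:universality}, whereas your route only requires the sup-norm part (except at block~$1$, where the Lipschitz constant of $\tilde H_K$ enters). Your bookkeeping is therefore both tighter (polynomial vs.\ exponential in $K$) and lighter on hypotheses, at the cost of a slightly less symmetric decomposition. One point you handle correctly but should keep explicit in a write-up: the uniform accuracy $\delta$ from Corollary~\ref{cor:universality} must hold on a compact set containing $[-R,R]^N \times D_d$, so that the recursive estimate is valid for the actual RQNN trajectory; the hard-wired bound $|\bar F^{n,\ttheta}_{R,j}| \leq R$ from \eqref{eq:qnn} guarantees this, exactly as you note.
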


\section{Conclusions}\label{sec:conclusions}

Approximation bounds and universality properties are part of the theoretical cornerstone of machine learning models. While some studies have addressed the question of universality for QRC models, the combination of the two had not previously been explored in the context of recurrent QNNs. In this paper, we derived approximation bounds and universality statements for recurrent QNNs based on the circuit implementation presented in \cite{GononJacquier2023}, which is compatible with hardware deployment and whose implementation with Rydberg atoms has been already discussed in \cite{agarwal2024extending}. This circuit uses a uniformly controlled quantum gate to apply multi-controlled rotations to a set of control and target qubits, and it has been recently shown that it can be efficiently implemented  \citep{zindorf2024efficient,silva2024linear,zindorf2025multi}.

To prove our results, we first derived approximation bounds for the static version of the QNN and its derivatives. 
These results are used in Theorem~\ref{prop:QRNNBarronuniv} to provide filter approximation bounds that show that RQNNs are able to uniformly approximate the filters induced by any contracting Barron-type state-space system. Finally, Theorem \ref{th:modQRNNuniv} extends this universality property to the much larger category of arbitrary fading memory, causal, and time-invariant filters. In this last result, neither Barron-type integrability nor contractivity conditions are needed for the target filter. 
While our results 
apply to variational systems in which all parameters are trainable, they pave the way for results on quantum reservoir systems in which some parameters in the recurrent layer are randomly generated and only the output layer weights are tuned. 
Which strategy is best in terms of speed and accuracy will depend on the number of blocks $n$ of the circuit, the intrinsic noise of the hardware, and the target task. Future research will focus on implementing and comparing the variational and reservoir approaches.

This work paves the way for extending the theoretical analysis of QRC models beyond the state-affine system (SAS) paradigm \citep{martinez2023quantum}. It is important to understand in which situations the feedback approach is preferable to other protocols. Questions such as the exponential concentration of observables \citep{sannia2025exponential,xiong2025role}  and the suitability of QRC models for learning quantum temporal tasks \citep{tran2021learning,nokkala2023online} are fundamental to discerning the conditions that render QRC models more useful than classical machine learning approaches.

While our paper obtains approximation bounds for Barron-type sate-space systems, an important direction of future research will consist in studying approximation error rates for systems with high degrees of roughness or non-contractive dynamics. 
Furthermore, our paper focuses on approximation properties of RQNNs. Gradient-based training approaches for optimizing RQNN parameters have been proposed, e.g., in \cite{bausch2020recurrent,li2023quantum,siemaszko2023rapid}. Quantum circuit training may face \textit{Barren plateaus}~\cite{mcclean2018barren,larocca2025barren}, flat parameter optimization landscapes for large number of qubits. Developing efficient training algorithms and studying these effects in detail will be a further important direction for future research.

\subsubsection*{Acknowledgments}
The authors acknowledge partial financial support from the School of Physical and Mathematical Sciences of the Nanyang Technological University through the SPMS Collaborative Research Award 2023 entitled ``Quantum Reservoir Systems for Machine Learning". RMP acknowledges the QCDI project funded by the Spanish Government. JPO wishes to thank the hospitality of the Donostia International Physics Center and LG and RMP that of the Division of Mathematical Sciences of the Nanyang Technological University, during the academic visits in which some of this work was developed.

\bibliographystyle{iclr2026_conference}

\appendix
\section*{Appendix}

\section{Quantum reservoir computing protocols}
\label{app:QRC}
For learning problems with temporal structure, quantum reservoir computing (QRC) has emerged as a promising approach for exploiting noisy intermediate-scale quantum (NISQ) technologies.
These include ion traps, nuclear magnetic resonance, cold atoms, photonic platforms, and superconducting qubits \citep{mujal2021opportunities}.
When implementing QRC models experimentally, it is necessary to consider the backaction and statistical effects introduced by quantum measurements. Backaction refers to the modification of a quantum state after monitoring, also known as wavefunction collapse. Due to the probabilistic nature of quantum theory, measurements must be repeated to compute the expected values of observables, which introduces a statistical component in all these methodologies. Most available experimental implementations rely on the quantum computer paradigm \citep{dasgupta2022characterizing,mlika2023user,suzuki2022natural,yasuda2023quantum,chen2020temporal,kubota2023temporal,molteni2023optimization,pfeffer2022hybrid,ahmed2025optimal,hu2024overcoming,miranda2025quantum}. However, there is an increasing interest in extending this technique to new settings, such as optical pulses  \citep{garcia2023scalable,paparelle2025experimental}, Rydberg atoms \citep{bravo2022quantum,kornjavca2024large}, and quantum memristors \citep{spagnolo2022experimental,selimovic2025experimental}. 

Early QRC model implementations relied on the simplest possible approach, namely, the restarting protocol  \citep{dasgupta2022characterizing,suzuki2022natural,kubota2023temporal,chen2020temporal,molteni2023optimization}. In this approach, the expected values of observables are obtained by rerunning the algorithm from the first time step at each subsequent time step. This avoids the backaction effect of quantum measurements. However, the complexity of this protocol scales quadratically with the length of the input sequence, making it very time-consuming. A faster alternative is the rewinding protocol \citep{mujal2021opportunities,vcindrak2024enhancing}, where the fading memory of the quantum reservoir is exploited to restart the algorithm with a fixed window of past time steps. This reduces the complexity of the algorithm to linear in terms of input length.
Originally proposed in \cite{chen2020temporal}, this protocol has thus far only been considered numerically  \citep{mujal2023time, vcindrak2024enhancing}. Both the restarting and rewinding protocols use repetition of previous time steps to reproduce the dynamics of the theoretical model and avoid the disruptive effect of projective measurements used to extract output information. This comes at the cost of halting the quantum dynamics at each time step and the need to buffer the input sequence. Consequently, these approaches lack one of the most important features of traditional reservoir computing, namely, the ability to process information in real time.

New protocols have been proposed to circumvent this problem. The online protocol \citep{mujal2023time,franceschetto2024harnessing} uses weak measurements to find a balance between erasing and extracting information.  Mid-circuit measurements and reset operations \citep{hu2024overcoming} can split the reservoir into two parts: memory and readout. The memory retains previous inputs, while measurements only affect the readout part. The feedback protocol \citep{kobayashi2024feedback}, which can be traced back to QRC with quantum memristors \citep{spagnolo2022experimental} and hybrid QRC techniques \citep{pfeffer2022hybrid,pfeffer2023reduced}, reinjects the measured observables at each time step as parameters of an input quantum channel. This ensures that no backaction effects are present and that past input information is preserved.  Note that in order to compute the observables in real time, these protocols all require several copies of the system to be run in parallel. Furthermore, these protocols can be combined with each other. For instance, the feedback protocol has been combined with both the online protocol  \citep{monomi2025feedback} and with mid-circuit measurements and reset operations \citep{murauer2025feedback}. 

Of all these approaches, the feedback protocol presents some particularly interesting features. First, the feedback protocol enables us to compute the expected values of observables from a single copy of the system by repeating one time step only. If only a few copies of the system are available, this reduces the experimental time overhead for real-time applications compared to other approaches.  Second,  in contrast to previous QRC models, where an erasure mechanism is added to provide fundamental properties such as the echo state property, simple unitary operations can provide these properties \citep{kobayashi2024feedback}. Finally, the dynamical equations of quantum reservoirs under the feedback protocol go beyond the standard state-affine system (SAS) paradigm of QRC models \citep{martinez2023quantum}. These properties make the feedback protocol a promising candidate for exploring QRC applications.

\section{Proofs for Section~\ref{sec:derivatives}}

\subsection{Proof of Proposition~\ref{prop:repre}}
\label{subsec:repre}

\begin{proof} The proof is a modification of the argument used to obtain \cite[Proposition~1]{GononJacquier2023}.
Recall that 
\begin{equation}\label{eq:qnnAppendix}
	\bar{F}^{n,\ttheta}_{R,j}(\xx,\zz)
	:= R-2R[\P_1^{n,\ttheta^j}( \xx,\zz)
	+ \P_2^{n,\ttheta^j}( \xx,\zz)], \quad (\xx,\zz) \in \R^N \times \R^d. 
\end{equation}
Fix $(\xx,\zz) \in \R^N \times \R^d$ and $j\in \{1,\ldots,N\}$ and write $\P_m := \P_m^{n,\ttheta^j}( \xx,\zz)$ for $m\in \{0,1,2,3\}$.
To prove the representation \eqref{eq:representation}, let us first calculate $\P_m$.

As a first step, write  
$$
\begin{aligned}
\Ug \Vg \ket{0}^{\otimes \nq}
& = \Ug \ket{\psi}  =  \frac{1}{\sqrt{n}} \sum_{l=0}^{n-1} \Ug \ket{4l}  
\\ &  =
 \frac{1}{\sqrt{n}} \sum_{l=0}^{n-1} \sum_{k=0}^3\left[\Ug^{(l+1)}_1 \otimes \Ug^{(l+1)}_2\right]_{k+1,1} \ket{4l+k}.
\end{aligned}
$$
Thus, for $m \in \{0,1,2,3\}$, we have
{\small 
\begin{align*}
& \P_m  = \sum_{i=0}^{n-1} \left|\bra{4i+m} \Ug \Vg \ket{0}^{\otimes \nq} \right|^2
\\ & =
\sum_{i=0}^{n-1} \left|\bra{4i+m} \frac{1}{\sqrt{n}} \sum_{l=0}^{n-1} \sum_{k=0}^{3}
\left[\Ug^{(l+1)}_1 \otimes \Ug^{(l+1)}_2\right]_{k+1,1} \ket{4l+k} \right|^2
\\ & =
\frac{1}{n}\sum_{i=0}^{n-1} 
\left| \left[\Ug^{(i+1)}_1 \otimes \Ug^{(i+1)}_2\right]_{m+1,1}  \right|^2.
\end{align*}
}
Next, we may calculate
\begin{align*}
[\Ug^{(i)}_1 \otimes \Ug^{(i)}_2]_{1,1}
& = [\Ug^{(i)}_1]_{1,1} [\Ug^{(i)}_2]_{1,1} = \cos\left(\frac{\gamma^{i,j}}{2}\right)
\cos\left(\frac{b^{i,j}+\abb^{i,j} \cdot (\xx,\zz)}{2}\right), 
\\ 
[\Ug^{(i)}_1 \otimes \Ug^{(i)}_2]_{2,1} & = [\Ug^{(i)}_1]_{1,1} [\Ug^{(i)}_2]_{2,1} = \sin\left(\frac{\gamma^{i,j}}{2}\right)
\cos\left(\frac{b^{i,j}+\abb^{i,j} \cdot (\xx,\zz)}{2}\right),\\
[\Ug^{(i)}_1 \otimes \Ug^{(i)}_2]_{3,1} & = [\Ug^{(i)}_1]_{2,1} [\Ug^{(i)}_2]_{1,1} = 
\I \cos\left(\frac{\gamma^{i,j}}{2}\right)
\sin\left(\frac{b^{i,j}+\abb^{i,j} \cdot (\xx,\zz)}{2}\right),\\
[\Ug^{(i)}_1 \otimes \Ug^{(i)}_2]_{4,1} & = [\Ug^{(i)}_1]_{2,1} [\Ug^{(i)}_2]_{2,1} = 
\I \sin\left(\frac{\gamma^{i,j}}{2}\right)
\sin\left(\frac{b^{i,j}+\abb^{i,j} \cdot (\xx,\zz)}{2}\right),
\end{align*}
and thus 
\begin{align*}
\P_0  & =
\frac{1}{n}\sum_{i=1}^{n} \cos\left(\frac{\gamma^{i,j}}{2}\right)^2 \cos\left(\frac{b^{i,j}+\abb^{i,j} \cdot (\xx,\zz)}{2}\right)^2
\\ 
\P_1 & = \frac{1}{n}\sum_{i=1}^{n} \sin\left(\frac{\gamma^{i,j}}{2}\right)^2
\cos\left(\frac{b^{i,j}+\abb^{i,j} \cdot (\xx,\zz)}{2}\right)^2
\\
\P_2 & = \frac{1}{n}\sum_{i=1}^{n} \cos\left(\frac{\gamma^{i,j}}{2}\right)^2
\sin\left(\frac{b^{i,j}+\abb^{i,j} \cdot (\xx,\zz)}{2}\right)^2
\\ 
\P_3 & = \frac{1}{n}\sum_{i=1}^{n} \sin\left(\frac{\gamma^{i,j}}{2}\right)^2
\sin\left(\frac{b^{i,j}+\abb^{i,j} \cdot (\xx,\zz)}{2}\right)^2.
\end{align*}
Therefore, using  $\cos(y)^2=\frac{\cos(2y)+1}{2}$, we obtain
\begin{align*}
\P_0 + \P_1 & = \frac{1}{n}\sum_{i=1}^{n}
\cos\left(\frac{b^{i,j}+\abb^{i,j} \cdot (\xx,\zz)}{2}\right)^2
= \frac{1}{2} + \frac{1}{2n}\sum_{i=1}^{n} \cos\left(b^{i,j}+\abb^{i,j} \cdot (\xx,\zz)\right),\\
\P_0 + \P_2 & = \frac{1}{n}\sum_{i=1}^{n} \cos\left(\frac{\gamma^{i,j}}{2}\right)^2 = \frac{1}{2} + \frac{1}{2n}\sum_{i=1}^{n} \cos\left(\gamma^{i,j}\right).
\end{align*}
Putting it all together we obtain, for any given $R>0$, that
\[
\begin{aligned}
\bar{F}^{n,\ttheta}_{R,j}(\xx,\zz)
	& = R-2R[\P_1^{n,\ttheta^j}( \xx,\zz)
	+ \P_2^{n,\ttheta^j}( \xx,\zz)]
\\ \quad  & = R\left[1 + 4\P_0 - 2\left(\P_0 + \P_1\right) - 2\left(\P_0 + \P_2\right)\right]
\\ & = \frac{1}{n}\sum_{i=1}^{n}  R\cos\left(\gamma^{i,j}\right) \cos\left( b^{i,j}+\abb^{i,j} \cdot (\xx,\zz) \right).
\end{aligned}
\]
\end{proof}

\subsection{Proof of Proposition~\ref{prop:FourierApproxDeriv}}
\label{subsec:proofFourierApproxDeriv}
\begin{proof}
	Let $j \in \{1,\ldots,N\}$ be fixed. 
	As in the proof of Proposition~2 in \cite{GononJacquier2023}, we may use the Fourier inversion theorem to represent
	\[
	F_j(\xx,\zz) = \int_{\R^N \times \R^d} e^{2\pi \I(\xx,\zz)\cdot (\bx_1,\bx_2)}\widehat{F_j}(\bx_1,\bx_2) \D \bx_1 \D \bx_2,
	\]
	which we may rewrite as, with $\bx = (\bx_1,\bx_2)$, 
	\begin{align}\label{eq:auxEq1}
		& F_j(\xx,\zz) = 
		\int_{\R^N \times \R^d}\left\{ \cos{(2\pi (\xx,\zz) \cdot \bx)}\mathrm{Re}[\widehat{F_j}(\bx)] + \cos\left(2\pi  (\xx,\zz) \cdot \bx+\frac{\pi}{2}\right)\mathrm{Im}[\widehat{F_j}(\bx)] \right\}\D \bx 
	\end{align}
	The hypothesis $\partial_i F_j \in \Fc$ implies that  $\int_{\R^N \times \R^d} |\xi_i||\widehat{F_j}(\bx)| \D \bx < \infty$. Hence, applying differentiation under the integral sign yields
	\begin{align}\label{eq:auxEq2}
		& \partial_i F_j(\xx,\zz) =  - 2 \pi 
		\int_{\R^N \times \R^d}\left\{ \xi_i \sin{(2\pi (\xx,\zz) \cdot \bx)}\mathrm{Re}[\widehat{F_j}(\bx)] + \xi_i \sin\left(2\pi  (\xx,\zz) \cdot \bx+\frac{\pi}{2}\right)\mathrm{Im}[\widehat{F_j}(\bx)] \right\}\D \bx. 
	\end{align}
	Next, consider the random function 
	\begin{equation}\label{eq:PhiDef}
		\Phi_j(\xx,\zz) := \frac{1}{n}\sum_{i=1}^n W_i \cos(B_i+\mathbf{A}_i\cdot (\xx,\zz))
	\end{equation}
	for randomly selected weights $W_1,\ldots,W_n$, $B_1,\ldots,B_n$ and $\mathbf{A}_1,\ldots, \mathbf{A}_n$ valued in $\R$, $\R$, and $\R^N \times \R^d$, respectively (for notational simplicity we leave the dependence on $j$ implicit here). 
	The distributions of these random variables are chosen as follows.  First, we let $Z_1,\ldots,Z_n$ be i.i.d.\ Bernoulli random variables with
	\begin{equation}
		\P(Z_i=1)  = \frac{ \int_{\R^N \times\R^d} |\mathrm{Re}[\widehat{F_j}(\bx)]| \D \bx}{\int_{\R^N \times \R^d} |\widehat{F_j}(\bx)| \D \bx}, \quad \quad \P(Z_i=0) = \frac{ \int_{\R^N \times\R^d} |\mathrm{Im}[\widehat{F_j}(\bx)]| \D \bx}{\int_{\R^N \times \R^d} |\widehat{F_j}(\bx)| \D \bx}.
	\end{equation} 
	and let $\nu_{\mathrm{Re}}$ and $\nu_{\mathrm{Im}}$ be the probability measures on $\R^N \times \R^d$ with densities\begin{equation}
		\frac{|\mathrm{Re}[\widehat{F_j}]|}{\int_{\R^N \times\R^d} |\mathrm{Re}[\widehat{F_j}(\bx)]| \D\bx } \quad \quad \text{and} \quad \quad \frac{|\mathrm{Im}[\widehat{F_j}]|}{\int_{\R^N \times\R^d} |\mathrm{Im}[\widehat{F_j}(\bx)]| \D\bx }, 
	\end{equation}
	respectively. In case $\int_{\R^N \times\R^d} |\mathrm{Re}[\widehat{F_j}(\bx)]| \D\bx =0$, instead we choose for $\nu_{\mathrm{Re}}$ an arbitrary probability measure and  analogously for $\nu_{\mathrm{Im}}$ in case $\int_{\R^N \times\R^d} |\mathrm{Im}[\widehat{F_j}(\bx)]| \D\bx =0$. 
	Next, let $\bU_1^\mathrm{Re},\ldots,\bU_n^\mathrm{Re}$ (resp. $\bU_1^\mathrm{Im},\ldots,\bU_n^\mathrm{Im}$) be i.i.d.\ random variables with distribution~$\nu_\mathrm{Re}$ (resp.~$\nu_\mathrm{Im}$) and assume that $\bU_1^\mathrm{Im}\ldots,\bU_n^\mathrm{Im}$,  $\bU_1^\mathrm{Re},\ldots,\bU_n^\mathrm{Re}, Z_1,\ldots,Z_n$ are independent. With these preparations, we are now ready to define the weights in \eqref{eq:PhiDef}: 
	\begin{align*}
		\mathbf{A}_i & := 2 \pi (Z_i \bU_i^\mathrm{Re}+(1-Z_i)\bU_i^\mathrm{Im}),
		\qquad
		B_i := \frac{\pi}{2}(1-Z_i),\\
		\quad
		W_i & :=\|\widehat{F_j}\|_1 \left[ \frac{\mathrm{Re}[\widehat{F_j}](\bU_i^\mathrm{Re})}{|\mathrm{Re}[\widehat{F_j}](\bU_i^\mathrm{Re})|} Z_i + \frac{\mathrm{Im}[\widehat{F_j}](\bU_i^\mathrm{Im})}{|\mathrm{Im}[\widehat{F_j}](\bU_i^\mathrm{Im})|}(1-Z_i) \right],
	\end{align*}
	with the quotient set to zero when the denominator is null.

	Our goal now is to estimate 
	\begin{equation}\label{eq:auxEq3}
		\E\left[ \|F_j - \Phi_j\|_{L^2(\mu)}^2 
		+ \sum_{i=1}^{N+d} \left\|\partial_i F_j - \partial_i \Phi_j\right\|_{L^2(\mu)}^2
		\right] = \E\left[ \|F_j - \Phi_j\|_{L^2(\mu)}^2 \right]
		+ \sum_{i=1}^{N+d}  \E\left[\left\|\partial_i F_j - \partial_i \Phi_j\right\|_{L^2(\mu)}^2 \right]
	\end{equation}
	by estimating the summands separately. 
	To achieve this, we first compute $\E[\Phi_j(\xx,\zz)]$ and $\E[\partial_i \Phi_j(\xx,\zz)]$. Indeed, inserting the definitions, using independence and representation \eqref{eq:auxEq1} yields
	\[
	\begin{aligned}
		& \E[\Phi_j(\xx,\zz)] = \E[W_1 \cos(B_1+\mathbf{A}_1\cdot (\xx,\zz))]
		\\ & = 
		\|\widehat{F_j}\|_1 \E\left[  \left( \frac{\mathrm{Re}[\widehat{F_j}](\bU_1^\mathrm{Re})}{|\mathrm{Re}[\widehat{F_j}](\bU_1^\mathrm{Re})|} Z_1 + \frac{\mathrm{Im}[\widehat{F_j}](\bU_1^\mathrm{Im})}{|\mathrm{Im}[\widehat{F_j}](\bU_1^\mathrm{Im})|}(1-Z_1) \right)\right.\\
		&\left.\cos\left(\frac{\pi}{2}(1-Z_1)+2\pi (Z_1 \bU_1^\mathrm{Re}+(1-Z_1)\bU_i^\mathrm{Im}) \cdot (\xx,\zz)\right)\right]
		\\ & =  \|\widehat{F_j}\|_1 \left( \P(Z_1=1) \E\left[\frac{\mathrm{Re}[\widehat{F_j}](\bU_1^\mathrm{Re})}{|\mathrm{Re}[\widehat{F_j}](\bU_1^\mathrm{Re})|}  \cos(2\pi \bU_1^\mathrm{Re}\cdot (\xx,\zz))\right]  \right. 
		\\ & \left. \quad \quad + \P(Z_1=0)\E\left[\frac{\mathrm{Im}[\widehat{F_j}](\bU_1^\mathrm{Im})}{|\mathrm{Im}[\widehat{F_j}](\bU_1^\mathrm{Im})|}
		\cos\left(\frac{\pi}{2}+2 \pi \bU_1^\mathrm{Im}\cdot(\xx,\zz)\right)\right]\right)
		\\ & 
		=  \int_{\R^N \times \R^d}  \mathrm{Re}[\widehat{F_j}](\bx) \cos(2\pi \bx\cdot (\xx,\zz)) \D \bx +  \int_{\R^N \times \R^d}  \mathrm{Im}[\widehat{F_j}](\bx) \cos(\frac{\pi}{2}+2\pi \bx\cdot (\xx,\zz)) \D \bx 
		\\ & = F_j(\xx,\zz).
	\end{aligned}
	\]
	Analogously, using the representation \eqref{eq:auxEq2} for the partial derivative $\partial_i F_j$ instead, we obtain
	\begin{equation}\label{eq:auxEq4}
		\begin{aligned}
			& \E[\partial_i \Phi_j(\xx,\zz)] = - \E[W_1 A_{1,i} \sin(B_1+\mathbf{A}_1\cdot (\xx,\zz))]
			\\ & =  - 2 \pi \|\widehat{F_j}\|_1 \left( \P(Z_1=1) \E\left[\frac{\mathrm{Re}[\widehat{F_j}](\bU_1^\mathrm{Re})}{|\mathrm{Re}[\widehat{F_j}](\bU_1^\mathrm{Re})|} U_{1,i}^\mathrm{Re} \sin(2\pi \bU_1^\mathrm{Re}\cdot (\xx,\zz))\right]  \right. 
			\\ & \left. \quad \quad + \P(Z_1=0)\E\left[\frac{\mathrm{Im}[\widehat{F_j}](\bU_1^\mathrm{Im})}{|\mathrm{Im}[\widehat{F_j}](\bU_1^\mathrm{Im})|}
			U_{1,i}^\mathrm{Im} \sin\left(\frac{\pi}{2}+2 \pi \bU_1^\mathrm{Im}\cdot(\xx,\zz)\right)\right]\right)
			\\ & 
			= -2 \pi \left( \int_{\R^N \times \R^d}  \xi_i \mathrm{Re}[\widehat{F_j}](\bx) \sin(2\pi \bx\cdot (\xx,\zz)) \D \bx +  \int_{\R^N \times \R^d}  \xi_i \mathrm{Im}[\widehat{F_j}](\bx) \sin(\frac{\pi}{2}+2\pi \bx\cdot (\xx,\zz)) \D \bx  \right)
			\\ & = \partial_i F_j(\xx,\zz).
		\end{aligned}
	\end{equation}
	Therefore, we may estimate the first expectation in \eqref{eq:auxEq3} as follows: 
	\begin{equation}\label{eq:L2estimate1}\begin{aligned}
			& \E\left[ \|F_j - \Phi_j\|_{L^2(\mu)}^2 
			\right]
			= \E\left[ \int_{\R^N \times \R^d} |F_j(\xx,\zz) - \Phi_j(\xx,\zz)|^2 \mu(\D \xx,\D \zz) \right]
			= \int_{\R^N \times \R^d} \VV[\Phi_j(\xx,\zz)] \mu(\D \xx,\D \zz)  
			\\ & = \frac{1}{n^2} \int_{\R^N \times \R^d} \VV\left[\sum_{i=1}^n W_i \cos(B_i + \mathbf{A}_i\cdot (\xx,\zz))\right] \mu(\D\xx, \D\zz)
			\\ & = \frac{1}{n} \int_{\R^N \times \R^d} \VV\left[ W_1 \cos(B_1 + \mathbf{A}_1\cdot (\xx,\zz))\right] \mu(\D\xx, \D\zz)
			\\ & \leq \frac{1}{n} \int_{\R^N \times \R^d} \E\left[\left(W_1 \cos(B_1 + \mathbf{A}_1\cdot (\xx,\zz))\right)^2\right] \mu(\D\xx, \D\zz)
			\\ & \leq \frac{1}{n}  \E\left[W_1^2\right] 
			=  \frac{1}{n} \|\widehat{F_j}\|_1^2. 
		\end{aligned}
	\end{equation}
	For the partial derivatives, we obtain analogously
	\begin{equation}\label{eq:L2estimate2}\begin{aligned}
			& \E\left[\left\|\partial_i F_j - \partial_i \Phi_j\right\|_{L^2(\mu)}^2 \right]
			= \int_{\R^N \times \R^d} \VV[\partial_i \Phi_j(\xx,\zz)] \mu(\D \xx,\D \zz)  
			\\ & = \frac{1}{n^2} \int_{\R^N \times \R^d} \VV\left[\sum_{k=1}^n W_k  A_{k,i} \sin(B_k + \mathbf{A}_k\cdot (\xx,\zz))\right] \mu(\D\xx, \D\zz)
			\\ & = \frac{1}{n} \int_{\R^N \times \R^d} \VV\left[ W_1 A_{1,i} \sin(B_1 + \mathbf{A}_1\cdot (\xx,\zz))\right] \mu(\D\xx, \D\zz)
			\\ & \leq \frac{1}{n} \int_{\R^N \times \R^d} \E\left[\left(W_1 A_{1,i} \sin(B_1 + \mathbf{A}_1\cdot (\xx,\zz))\right)^2\right] \mu(\D\xx, \D\zz)
			\\ & \leq \frac{1}{n}  \E\left[W_1^2 A_{1,i}^2\right] 
			=   \frac{1}{n} \|\widehat{F_j}\|_1^2  \E\left[ A_{1,i}^2\right] = \frac{4 \pi^2}{n} \|\widehat{F_j}\|_1   \int_{\R^N \times \R^d} \xi_i^2 |\widehat{F_j}(\bx)|  \D \bx,
		\end{aligned}
	\end{equation}
	where we used that 
	$ \E\left[ A_{1,i}^2\right]  = 4 \pi^2 \|\widehat{F_j}\|_1^{-1} \int_{\R^N \times \R^d} \xi_i^2 |\widehat{F_j}(\bx)|  \D \bx$. 
	
	In particular, \eqref{eq:L2estimate1} and \eqref{eq:L2estimate2} imply that	there exists a scenario $\omega \in \Omega$ such that $\Phi_j^\omega(\xx,\zz) = \frac{1}{n}\sum_{i=1}^n W_i(\omega) \cos(B_i(\omega)+\mathbf{A}_i(\omega)\cdot(\xx,\zz))$ satisfies
	
	\begin{equation}
		\label{eq:L2errorBoundDeriv}
		\|F_j - \Phi_j^\omega\|_{L^2(\mu)}^2 
		+ \sum_{i=1}^{N+d} \left\|\partial_i F_j - \partial_i \Phi_j^\omega \right\|_{L^2(\mu)}^2
		\leq \frac{C_j}{n},
	\end{equation}
	with $C_j =  \|\widehat{F_j}\|_1^2 + 4 \pi^2 \|\widehat{F_j}\|_1   \int_{\R^N \times \R^d} \sum_{i=1}^{N+d} \xi_i^2 |\widehat{F_j}(\bx)|  \D \bx$. 
	Finally, $\ttheta = (\ttheta^1,\ldots,\ttheta^N)$ can then be constructed by setting	$\ttheta^j=(\mathbf{A}_i(\omega),B_i(\omega),\arccos(\frac{W_i(\omega)}{R}))_{i=1,\ldots,n}$, which guarantees that $\Phi_j^\omega=\bar{F}^{n,\ttheta}_{R,j}$ and so the proposition follows.  
\end{proof}

\subsection{Proof of Corollary~\ref{corollary for l2norm}}
\label{subsec:proofCorollary}

The proof of this corollary requires the following lemma, which extends \citet[Lemma~4.10]{Gonon2022}. 
\begin{lemma}\label{lem:jointProb}
	Let $d, n, 	q \in \N$, let $M_1,M_2 >0$, let $U$ be a non-negative random variable,  and let $Y_1,\ldots,Y_n$ be i.i.d.\ $\R^d$-valued random variables. Suppose $\E[U] \leq M_1$ and $\E[|Y_1|^q]\leq M_2$. Then 
	\[
	\P\bigg[U\leq 3 M_1, \max_{i=1,\ldots,n} |Y_i| \leq (3 n M_2)^{\frac{1}{q}} \bigg] > 0.
	\]
\end{lemma}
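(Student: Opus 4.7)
\textbf{Proof plan for Lemma~\ref{lem:jointProb}.} The plan is to bound the probability of each of the two complementary events separately using Markov's inequality and then combine them with a union bound, showing that the total failure probability is strictly less than one.

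First, I would apply Markov's inequality to $U$: since $U \geq 0$ and $\E[U] \leq M_1$,
\[
\P[U > 3 M_1] \leq \frac{\E[U]}{3 M_1} \leq \frac{1}{3}.
\]
Next, for each $i$, applying Markov's inequality to the non-negative random variable $|Y_i|^q$ and using that the $Y_i$ are identically distributed with $\E[|Y_1|^q] \leq M_2$, I get
\[
\P\bigl[|Y_i| > (3 n M_2)^{1/q}\bigr] = \P\bigl[|Y_i|^q > 3 n M_2\bigr] \leq \frac{M_2}{3 n M_2} = \frac{1}{3n}.
\]
A union bound over $i=1,\ldots,n$ then yields
\[
\P\Bigl[\max_{i=1,\ldots,n} |Y_i| > (3 n M_2)^{1/q}\Bigr] \leq n \cdot \frac{1}{3n} = \frac{1}{3}.
\]

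Finally, I would combine these two estimates with one more union bound to conclude:
\[
\P\Bigl[U > 3 M_1 \text{ or } \max_{i=1,\ldots,n} |Y_i| > (3 n M_2)^{1/q}\Bigr] \leq \frac{1}{3} + \frac{1}{3} = \frac{2}{3} < 1,
\]
so the complementary event, which is exactly the joint event in the statement, occurs with probability at least $1/3 > 0$.

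There is no real obstacle here; the lemma is a two-line Markov-plus-union-bound argument. The only mild subtlety is to make sure to apply Markov to $|Y_i|^q$ (rather than $|Y_i|$) so that the moment assumption $\E[|Y_1|^q] \leq M_2$ is used directly, and to pick the constant $3$ in each bound so that the two contributions sum to $2/3 < 1$. The independence of the $Y_i$ is not actually needed for the argument, only their identical distribution; independence would only be required if one wanted a sharper bound via $\P[\max_i|Y_i| \leq t] = \P[|Y_1|\leq t]^n$.
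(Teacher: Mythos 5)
Your proof is correct and is essentially the same argument the paper uses: the paper simply cites Lemma~4.10 of Gonon (2024) and notes that the only change is to apply Markov's inequality to $|Y_1|^q$ rather than $|Y_1|$, which is exactly the Markov-plus-union-bound computation you have spelled out in full. Your side remark that independence of the $Y_i$ is not actually used is also accurate.
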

\begin{proof} The proof mimics that of in \citet[Lemma~4.10]{Gonon2022} by replacing the use of Markov's inequality for $q=1$ by the more general version:
	\[
	\P[|Y_1| > (3 n M_2)^{\frac{1}{q}}] \leq \frac{\E[|Y_1|^q]}{3 n M_2} \leq \frac{1}{3 n}. 
	\]
\end{proof}

\noindent {\it Proof of the corollary.}
The corollary follows by replacing the argument leading to \eqref{eq:L2errorBoundDeriv} in the proof of Proposition~\ref{prop:FourierApproxDeriv} by Lemma~\ref{lem:jointProb} and by noticing that 
\[
\E\left[\|\mathbf{A}_1\|^q\right] = (2 \pi)^q  \|\widehat{F_j}\|_1^{-1}  \int_{\R^N \times \R^d} \|\bx\|^q |\widehat{F_j}(\bx)|  \D \bx.
\]
\hfill $\square$

\subsection{Proof of Proposition~\ref{prop:FourierApproxDerivInfty}}
\label{subsec:proofFourierApproxDerivInfty}
\begin{proof}
	It follows by combining the proof of Proposition~\ref{prop:FourierApproxDeriv} with the proof of Theorem~3 in \cite{GononJacquier2023}. More specifically, the same proof can be used as for Proposition~\ref{prop:FourierApproxDeriv}, except that we need to replace the $L^2(\mu)$ error bounds in  \eqref{eq:L2estimate1} and \eqref{eq:L2estimate2} by uniform bounds. For \eqref{eq:L2estimate1}, we can follow precisely the proof of Theorem~3 in \cite{GononJacquier2023} to obtain
	\begin{equation}\label{eq:uniformApproxErrorF} \begin{aligned}
			\left\|\bar{F}^{n,\ttheta}_{R,j} - F_j\right\|_{\infty,M}  \leq \frac{C_j^{\infty,0}}{\sqrt{n}}
	\end{aligned}	\end{equation}
	with $C_j^{\infty,0} = 2 (\pi+1) \|\widehat{F_j}\|_1
	+ 8 \pi  M (N+d)^{\frac{1}{2}}
	\|\widehat{F_j}\|_1^{\frac{1}{2}}\left(\int_{\R^N \times \R^d} \sum_{i=1}^{N+d} \xi_i^2 |\widehat{F_j}(\bx)|  \D \bx \right)^{1/2}$.
	Next, we turn to the derivatives, that is, we aim to estimate $\left\|\partial_k \bar{F}^{n,\ttheta}_{R,j} - \partial_k F_j\right\|_{\infty,M} $.  Also in this case, we may proceed as in the proof of Theorem~3 in \cite{GononJacquier2023} and apply the same estimates to the random variables $U_{i,(\xx,\zz)} = W_i A_{i,k} \sin(B_i+\mathbf{A}_i\cdot (\xx,\zz))$. Let $\varepsilon_1,\ldots,\varepsilon_n$ be i.i.d.\ Rademacher random variables independent of $\mathbf{A}=(\mathbf{A}_1,\ldots,\mathbf{A}_n)$ and $\mathbf{B}=(B_1,\ldots,B_n)$. Symmetrisation and independence then yield
	\[ \begin{aligned}
		\left\|\partial_i \bar{F}^{n,\ttheta}_{R,j} - \partial_i F_j\right\|_{\infty,M} & = \E\left[ \sup_{(\xx,\zz)\in [-M,M]^{N+d}} \left| 
		\frac{1}{n}\sum_{i=1}^n \left(U_{i,(\xx,\zz)}-\E[U_{i,(\xx,\zz)}]\right)\right| \right]
		\\ & \leq 2 \E\left[ \sup_{(\xx,\zz)\in [-M,M]^{N+d}} \left| \frac{1}{n}\sum_{i=1}^n \varepsilon_i U_{i,(\xx,\zz)}\right| \right] 
		\\
		& = 2 \E\left[ \left. \E\left[ \sup_{(\xx,\zz)\in [-M,M]^{N+d}} \left|  \frac{1}{n}\sum_{i=1}^n \varepsilon_i w_i a_{i,k} \sin(b_i+\mathbf{a}_i\cdot (\xx,\zz))  \right| \right]  \right|_{(\mathbf{w},\abb,\mathbf{b})=(\mathbf{W},\mathbf{A},\mathbf{B})} \right]. 
	\end{aligned}
	\]
	Now fix $\abb=(\abb_1,\ldots,\abb_{n}) \in (\R^N \times \R^d)^n$, $\mathbf{b}=(b_1,\ldots,b_n) \in \R^n$, $\mathbf{w}=(w_1,\ldots,w_n) \in \R^n$  and
	denote 
	\[\begin{aligned}\mathcal{T}
		& := \{
		(w_i a_{i,k}(b_i+\mathbf{a}_i\cdot (\xx,\zz)))_{i=1,\ldots,n} \,\colon \, (\xx,\zz)\in [-M,M]^{N+d} 
		\},
		\\ \varrho_i(x) & := w_i a_{i,k} \sin(\frac{x}{w_i a_{i,k}}), \quad x \in \R, 
	\end{aligned}
	\]
	for $i=1,\ldots,n$. 
	Then, using the definitions in the first step, the comparison theorem \cite[Theorem~4.12]{Ledoux2013} in the second step (note $\varrho_{i}(0)=0$ and $\varrho_{i}$ is $1$-Lipschitz), and standard Rademacher estimates (see, e.g., \cite{Gonon2021}), we obtain
	\[\begin{aligned}
		\E& \left[ \sup_{(\xx,\zz)\in [-M,M]^{N+d}} \left|  \frac{1}{n}\sum_{i=1}^n \varepsilon_i w_i a_{i,k} \sin(b_i+\mathbf{a}_i\cdot (\xx,\zz))  \right| \right] \\ &  = \E\left[ \sup_{\mathbf{t} \in \mathcal{T}} \left|  \frac{1}{n}\sum_{i=1}^n \varepsilon_i \varrho_{i} (t_i) \right| \right]
		\leq 2\E\left[ \sup_{\mathbf{t} \in \mathcal{T}} \left|  \frac{1}{n}\sum_{i=1}^n \varepsilon_i t_i \right| \right]
		\\ & = 2\E\left[ \sup_{(\xx,\zz)\in [-M,M]^{N+d} } \left|  \frac{1}{n}\sum_{i=1}^n \varepsilon_i (w_i a_{i,k}(b_i+\mathbf{a}_i\cdot (\xx,\zz)) \right| \right]
		\\ & \leq 2\E\left[ \left|  \frac{1}{n}\sum_{i=1}^n \varepsilon_i w_i a_{i,k}b_i\right| \right] + 2\E\left[ \sup_{(\xx,\zz)\in [-M,M]^{N+d} } \left|  (\xx,\zz) \cdot \frac{1}{n}\sum_{i=1}^n \varepsilon_i w_i a_{i,k}\mathbf{a}_i \right| \right]
		\\ & \leq \frac{2}{n} \left( \sum_{i=1}^n w_i^2 a_{i,k}^2 b_i^2\right)^{1/2}  + \frac{2M}{n} \sum_{l=1}^{N+d }\left( \sum_{i=1}^n  w_i^2 a_{i,k}^2 a_{i,l}^2 \right)^{1/2}.
	\end{aligned}
	\]
	Putting everything together, we obtain
	\[ \begin{aligned}
		\left\|\partial_i \bar{F}^{n,\ttheta}_{R,j} - \partial_i F_j\right\|_{\infty,M} 
		& \leq 2 \E\left[ \frac{2}{n} \left( \sum_{i=1}^n W_i^2 A_{i,k}^2 B_i^2\right)^{1/2}  + \frac{2M}{n} \sum_{l=1}^{N+d }\left( \sum_{i=1}^n  W_i^2 A_{i,k}^2 A_{i,l}^2 \right)^{1/2}   \right]
		\\ & \leq
		\frac{4}{\sqrt{n}} \left(\E\left[W_i^2 A_{i,k}^2 B_i^2\right]^{1/2}  + M (N+d)^{1/2} \left(\sum_{l=1}^{N+d }\E\left[  W_i^2 A_{i,k}^2 A_{i,l}^2 \right]\right)^{1/2}   \right)
		\\ & \leq 	\frac{C_j^{\infty,k}}{\sqrt{n}},
	\end{aligned}
	\]
	with $C_j^{\infty,k} = 4 \pi^2 \|\widehat{F_j}\|_1^{1/2} \left(  \left(\int_{\R^N \times \R^d} \xi_k^2 |\widehat{F_j}(\bx)|  \D \bx \right)^{1/2} +  4 M (N+d)^{1/2} \left(\int_{\R^N \times \R^d} \xi_k^2 \|\bx\|^2  |\widehat{F_j}(\bx)|  \D \bx  \right)^{1/2} \right) $. Here, the last estimate follows from the inequality
	\[
	\E\left[W_i^2 A_{i,k}^2 B_i^2\right] \leq  \pi^4 \|\widehat{F_j}\|_1  \int_{\R^N \times \R^d} \xi_k^2 |\widehat{F_j}(\bx)|  \D \bx 
	\]
	and
	\[
	\E\left[  W_i^2 A_{i,k}^2 A_{i,l}^2 \right] = 16 \pi^4 \|\widehat{F_j}\|_1 \int_{\R^N \times \R^d} \xi_k^2 \xi_l^2  |\widehat{F_j}(\bx)|  \D \bx  .
	\]
	Overall, we obtain \eqref{eq:LinftyErrorBound} with 
	$C_j^{\infty} \geq \sum_{k=0}^{N+d} C_j^{\infty,k}$ chosen as
	\[
	C_j^{\infty} = 
	2 (\pi+1) \|\widehat{F_j}\|_1
	+ (8 \pi  M + 4 \pi^2) (N+d)^{\frac{1}{2}}
	\|\widehat{F_j}\|_1^{\frac{1}{2}}I_{2,j}^{1/2}
	+ 
	16 M \pi^2 (N+d) \|\widehat{F_j}\|_1^{1/2}  I_{4,j}^{1/2}.
	\] 
\end{proof}

\subsection{Proof of Corollary~\ref{cor:universality}}
\label{subsec:ProofUniversality}

\begin{proof} 
	First, extending the proof of Corollary~4 in \cite{GononJacquier2023},  we show that 
	$F_j$ can be approximated on $\mathcal{X}$ up to error $\frac{\varepsilon}{2}$ in $C^1$-norm by a function in	$C_c^\infty(\R^N \times \R^d)$. 
	Indeed, first let $M>0$ be such that $\mathcal{X} \subset [-M,M]^{N+d}$. Then, classical approximation results \citep[see, e.g.,][Lemma~5]{Whitney1934} imply that there exists a smooth function $h \colon \R^N \times \R^d \to \R$ such that 
	\begin{equation}\label{eq:auxEqCor}
		\sup_{(\xx,\zz)\in \mathcal{X}} |F_j(\xx,\zz)-h(\xx,\zz)|  + \|\nabla F_j(\xx,\zz) - \nabla h(\xx,\zz) \| \leq \frac{\varepsilon}{2}.
	\end{equation}
	Without loss of generality we may assume that $h \in C^\infty_c(\R^N \times \R^d)$.  Otherwise, we multiply $h$ with a cutoff function $\psi \in C_c^\infty(\R^N \times \R^d)$ which is equal to $1$ in an open set $U$ with $\mathcal{X} \subset U$ \citep[see, e.g.,][Theorem~1.4.1]{Hoermander}; thereby preserving \eqref{eq:auxEqCor}.

	In the next step, we now apply Proposition~\ref{prop:FourierApproxDerivInfty} to~$h$.
	Since~$h$ is a Schwartz function, its Fourier transform $\widehat{h}$ is also a Schwartz function and thus $h$ is integrable and
	\[
	\int_{\R^N \times \R^d}  (1+\|\bx\|^4 ) |\widehat{h}(\bx)|  \D \bx <\infty.
	\]
	In particular, $h \in \Fc_{R}$ for $R>0$ large enough and, as $h$ is a Schwartz function, also 
	$\partial_i h \in \Fc$ for all $i$. 
	Thus, the hypotheses of Proposition~\ref{prop:FourierApproxDerivInfty} are satisfied and we obtain that there exist $n \in \N$ and  $\ttheta\in\TTheta$ such that 
	\begin{equation*}
		\left\|\bar{F}^{n,\ttheta}_{R,j} - h\right\|_{\infty,M}   + \sum_{i=1}^{N+d} \left\|\partial_i \bar{F}^{n,\ttheta}_{R,j} - \partial_i h\right\|_{\infty,M}  \leq \frac{\varepsilon}{2}.
	\end{equation*} 
	This estimate together with~\eqref{eq:auxEqCor} then imply
	\begin{equation*}
		\begin{aligned}
			& 		\sup_{(\xx,\zz) \in \mathcal{X}} |F_j(\xx,\zz)-\bar{F}^{n,\ttheta}_{R,j}(\xx,\zz)| + \|\nabla F_j(\xx,\zz)-\nabla \bar{F}^{n,\ttheta}_{R,j}(\xx,\zz)\| 
			\\ & \quad \leq 		\sup_{(\xx,\zz)\in \mathcal{X}} |F_j(\xx,\zz)-h(\xx,\zz)|  + \|\nabla F_j(\xx,\zz) - \nabla h(\xx,\zz) \| \\
			& \quad + \sup_{(\xx,\zz) \in \mathcal{X}}|h(\xx,\zz)  -\bar{F}^{n,\ttheta}_{R,j}(\xx,\zz) |   +   \|\nabla \bar{F}^{n,\ttheta}_{R,j}(\xx,\zz) - \nabla h(\xx,\zz) \|  
			\\ & \quad \leq 		\sup_{(\xx,\zz)\in \mathcal{X}} |F_j(\xx,\zz)-h(\xx,\zz)|  + \|\nabla F_j(\xx,\zz) - \nabla h(\xx,\zz) \| \\
			& \quad + \sup_{(\xx,\zz) \in \mathcal{X}}|\bar{F}^{n,\ttheta}_{R,j}(\xx,\zz) - h(\xx,\zz)|
			+ \sum_{i=1}^{N+d} |\partial_i \bar{F}^{n,\ttheta}_{R,j}(\xx,\zz) - \partial_i h(\xx,\zz)|\\
			& \quad \leq \sup_{(\xx,\zz)\in \mathcal{X}} |F_j(\xx,\zz)-h(\xx,\zz)|  + \|\nabla F_j(\xx,\zz) - \nabla h(\xx,\zz) \| \\
			& \quad + \left\|\bar{F}^{n,\ttheta}_{R,j} - h\right\|_{\infty,M}   + \sum_{i=1}^{N+d} \left\|\partial_i \bar{F}^{n,\ttheta}_{R,j} - \partial_i h\right\|_{\infty,M}\ \leq  \varepsilon,
		\end{aligned}
	\end{equation*}
	where we used that $$\|\nabla \bar{F}^{n,\ttheta}_{R,j}(\xx,\zz) - \nabla h(\xx,\zz) \| = \left(\sum_{i=1}^{N+d} |\partial_i \bar{F}^{n,\ttheta}_{R,j}(\xx,\zz) - \partial_i h(\xx,\zz)|^2\right)^{1/2} \leq \sum_{i=1}^{N+d} |\partial_i \bar{F}^{n,\ttheta}_{R,j}(\xx,\zz) - \partial_i h(\xx,\zz)|,$$
	since $\|\bm{y}\|_2 \leq \|\bm{y} \|_1$ for all $\bm{y} \in \R^{N+d}$. 
\end{proof}

\section{Proofs for Section~\ref{sec:QRNNBarronuniv}}
\label{app:proofsC}
\subsection{Proof of Theorem~\ref{prop:QRNNBarronuniv}}
\label{subsec:proofQRNNBarronuniv}

\begin{proof} Choose $M$ such that $B_N \times D_d \subset [-M,M]^{N+d}$ and $[-R,R]^N  \times D_d \subset [-M,M]^{N+d}$. 
	Firstly, our hypotheses on $F$ guarantee that $F$ satisfies the hypotheses of Proposition~\ref{prop:FourierApproxDerivInfty}. Hence, there  exists $\ttheta\in\TTheta$ such that for any $j \in \{1,\ldots,N\}$, 
	\begin{equation} \label{eq:auxEq7}
		\left\|\bar{F}^{n,\ttheta}_{R,j} - F_j\right\|_{\infty,M}   + \sum_{i=1}^{N+d} \left\|\partial_i \bar{F}^{n,\ttheta}_{R,j} - \partial_i F_j\right\|_{\infty,M}  \leq \frac{C_j^\infty}{\sqrt{n}}. 
	\end{equation} 
Then, for all  $\xx \in [-M,M]^N,  \zz \in D_d$
	\begin{equation} \label{eq:auxEq10}
		\begin{aligned}
			\|\nabla_{\xx} \bar{F}^{n,\ttheta}_{R} (\xx,\zz)\|_2  & \leq 
			\|\nabla_{\xx} \bar{F}^{n,\ttheta}_{R} (\xx,\zz)- \nabla_{\xx} F(\xx,\zz)\|_2 + \|\nabla_{\xx} F(\xx,\zz)\|_2 
			\\ & 
			\leq \left( \sum_{i,j=1}^N |\partial_i \bar{F}^{n,\ttheta}_{R,j} (\xx,\zz) - \partial_i F_j (\xx,\zz)|^2 \right)^{1/2} + \lambda
			\\ & \leq  N \frac{\max_{j=1,\ldots,N} C_j^\infty}{\sqrt{n}} + \lambda. 
		\end{aligned}
	\end{equation} 
	Therefore, using that $\max_{\xx \in [-M,M]^N} \|\nabla_{\xx} \bar{F}^{n,\ttheta}_{R} (\xx,\zz)\|_2$ is the best Lipschitz-constant for $\bar{F}^{n,\ttheta}_{R}$ on $[-M,M]^N$ for any given $\zz \in D_d$, we obtain for all  $\xx \in [-M,M]^N,  \zz \in D_d$ that 
	\[
	\begin{aligned}
		\|\bar{F}^{n,\ttheta}_{R} (\xx^1,\zz) - \bar{F}^{n,\ttheta}_{R} (\xx^2,\zz)\|^2 & 
		\leq  \| \xx^1-\xx^2 \|^2 \max_{\xx \in [-M,M]^N} \|\nabla_{\xx} \bar{F}^{n,\ttheta}_{R} (\xx,\zz)\|_2^2 
		\\ & \leq \left( N \frac{\max_{j=1,\ldots,N} C_j^\infty}{\sqrt{n}} + \lambda \right)^2 \| \xx^1-\xx^2 \|^2 .
	\end{aligned}
	\]
	In particular, for $n$ satisfying $N^2 \frac{(\max_{j=1,\ldots,N} C_j^\infty)^2}{(1- \lambda)^2}  < n$ we obtain that $ \bar{F}^{n,\ttheta}_{R} \colon B_R \times D_d \to B_R$, with $B_R = \{\xx \in \R^N \colon \|\xx\|\leq R\sqrt{N} \}$, is contractive in the first argument, hence the system \eqref{eq:QRNN} has the echo state property by \citet[Proposition~1]{RC10}.

	By the relation between the Lipschitz-constant and the maximal norm of the Jacobian, the assumption $\|\nabla_{\xx} F(\xx,\zz)\|_2 \leq \lambda$ guarantees that $F(\cdot,\zz)$ is $\lambda$-contractive for any $\zz \in D_d$. Hence, we may estimate
	\begin{equation} \label{eq:auxEq8}
		\begin{aligned}	\left\|U^F(\zz)_t -  \bar{U}(\zz)_t\right\| &  = 	\left\|\xx_t -  \hat{\xx}_t\right\| = \left\|F(\xx_{t-1},\zz_t) -  	\bar{F}^{n,\ttheta}_{R}(\hat{\xx}_{t-1},\zz_t)\right\|
			\\ & \leq \left\|F(\xx_{t-1},\zz_t) -  	F(\hat{\xx}_{t-1},\zz_t)\right\| + \left\|F(\hat{\xx}_{t-1},\zz_t) -  	\bar{F}^{n,\ttheta}_{R}(\hat{\xx}_{t-1},\zz_t)\right\|
			\\ & \leq \lambda \left\|\xx_{t-1} -  	\hat{\xx}_{t-1}\right\| + \left( \sum_{j=1}^N \left\|\bar{F}^{n,\ttheta}_{R,j} - F_j\right\|_{\infty,M}^2 \right)^{1/2}
			\\ & \leq \lambda \left\|\xx_{t-1} -  	\hat{\xx}_{t-1}\right\| + \frac{\sqrt{N} \max_{j=1,\ldots,N} C_j^\infty}{\sqrt{n}}.
		\end{aligned}
	\end{equation} 
	Iterating \eqref{eq:auxEq8}, we obtain 
	\begin{equation} \label{eq:auxEq9}
		\begin{aligned}	\left\|U^F(\zz)_t -  \bar{U}(\zz)_t\right\|
			& \leq \lambda^J \left\|\xx_{t-J} -  	\hat{\xx}_{t-J}\right\| + \sum_{k=1}^J \lambda^{k-1} \frac{\sqrt{N} \max_{j=1,\ldots,N} C_j^\infty}{\sqrt{n}}
			\\ & \leq \lambda^J \sqrt{N}(M+R) +  \sum_{k=0}^{J-1} \lambda^{k} \frac{\sqrt{N} \max_{j=1,\ldots,N} C_j^\infty}{\sqrt{n}}.
		\end{aligned}
	\end{equation} 
	Letting $J \to \infty$, we thus arrive at the bound \eqref{eq:StateSpaceErrorBound}. 
\end{proof}

\subsection{Proof of Lemma~\ref{lemma:ESP}}
\label{subsection:proofESP}
The proof of Lemma~\ref{lemma:ESP} is related to the approach  introduced in \cite{RC8} and subsequently used, e.g., in \cite{RC12,RC20}.
\begin{proof}
	We start by constructing a partition of $\hat{\xx}_t$ as in the statement. If $N=1$, we simply have $\hat{\xx}_t = [\hat{x}_t]\in \R$. Next, we define the reservoir vector $\tilde{F}_{R,i:j} = (\tilde{F}_{i},\ldots,\tilde{F}_{j})$. 
	Then, for $k=1$,  $j\in \{1,\ldots,l_1\}$, and $k=2,\ldots,K-1$, $j\in \{l_{k-1}+1,\ldots,l_k\}$, we have $P_j \hat{\xx}_t = [\hat{\xx}_t^{(k+1)},\ldots,\hat{\xx}_t^{(K)},0,\ldots,0] $ and  $P_j \hat{\xx}_t = 0$ for $j=l_{K-1}+1,\ldots,N$. 
	Inserting these choices into \eqref{eq:xj}, we may rewrite the dynamics as
	\begin{equation}\label{eq:ESPrecursion}
		\hat{\xx}_{t}^{(k)} = 	\tilde{F}_{l_{k-1}+1:l_k}([\hat{\xx}_{t-1}^{(k+1)},\ldots,\hat{\xx}_{t-1}^{(K)},0,\ldots,0] ,\zz_t), \quad t \in \Z_-,
	\end{equation}
	for $k=1,\ldots,K-1$ and $\hat{\xx}_{t}^{(K)} =  \tilde{F}_{l_{K-1}+1:l_K}(0,\zz_t)$.
	In particular, $\hat{\xx}_{t}^{(K)} =  \tilde{F}_{l_{K-1}+1:l_K}(0,\zz_t)$, which depends only on $\zz_t$, is explicitly given for all $t \in \Z_-$, and for all $k=1,\ldots,K-1$, we see that $\hat{\xx}_{t}^{(k)} $ only depends on $\hat{\xx}_{t-1}^{(k+1)},\ldots, \hat{\xx}_{t-1}^{(K)}$. Thus, \eqref{eq:xj} admits a unique solution which can be explicitly obtained from the recursion 
	\eqref{eq:ESPrecursion},  that is, for all $t \in \Z_-$, we have $\hat{\xx}_{t}^{(K)} =  \tilde{F}_{l_{K-1}+1:l_K}(0,\zz_t)$, 
	$\hat{\xx}_{t}^{(K-1)} =  \tilde{F}_{l_{K-2}+1:l_{K-1}}([\hat{\xx}_{t-1}^{(K)},0,\ldots,0],\zz_t)$, $\ldots$, 	$\hat{\xx}_{t}^{(1)} = 	\tilde{F}_{1:l_1}([\hat{\xx}_{t-1}^{(2)},\ldots,\hat{\xx}_{t-1}^{(K)},0] ,\zz_t)$.
	This proves that $\tilde{F}$ has the echo state property. 
\end{proof}

\subsection{Proof of Theorem~\ref{th:modQRNNuniv}}
\label{subsec:ProofmodQRNNuniv}

\begin{proof} Without loss of generality we may assume $\varepsilon \leq 1$, because proving \eqref{eq:StateSpaceErrorUniversality} for $\varepsilon \leq 1$ also implies that  \eqref{eq:StateSpaceErrorUniversality} holds for $\varepsilon > 1$.
	
	Let $H_U \colon (D_d)^{\Z_-} \to B_m$ be the functional associated to the filter $U$. Then, as in the proof of \citet[Theorem~2.1]{RC20}, there exists $K \in \N$ and a  continuous function $\bar{G} \colon (D_d)^{dK} \to B_m$ such that 
	\begin{equation}\label{eq:auxEq21}
		\sup_{\zz \in (D_d)^{\Z_-}} \left\|H_U(\zz) -  \bar{G}(\zz_{-K+1},\ldots,\zz_0)\right\| < \frac{\varepsilon}{4}.
	\end{equation}
	Moreover, e.g., by the argument in \citet[Corollary~4]{GononJacquier2023}, there exists a function $G \in C^\infty_c((\R^d)^{K}, B_m)$ which satisfies 
	\begin{equation}\label{eq:auxEq22}
		\sup_{\zz \in (\R^d)^{K}} \left\|G(\zz) -  \bar{G}(\zz)\right\| < \frac{\varepsilon}{4}.
	\end{equation}
	Next, choose $N=(K-1)d+m$ and consider the recurrent QNN introduced in \eqref{eq:QRNN}. 
	Denote 
	\begin{equation}\label{eq:qnnUniv}
		\bar{F}^{n,\ttheta}_{R,j}(\xx,\zz)
		= R-2R[\P_1^{n,\ttheta^j}(\xx,\zz)
		+ \P_2^{n,\ttheta^j}(\xx,\zz)], \quad (\xx,\zz) \in \R^N \times \R^d
	\end{equation}
	the update maps without preprocessing matrices. 
	For $1\leq i\leq j \leq N$, write $\bar{F}^{n,\ttheta}_{R,i:j} = (\bar{F}^{n,\ttheta}_{R,i},\ldots,\bar{F}^{n,\ttheta}_{R,j})$ and $l_k=m+(k-1)d$ for $k=1,\ldots,K$. Define the constants
	\begin{equation} \label{eq:auxEq18} L_G = \max(\sqrt{d}, \sup_{\zz \in (\R^d)^{K}} \|\nabla G(\zz)\|) +1, \quad \quad C_G = 4L_G \left(\sum_{k=2}^K \sum_{j=1}^{K-k+1} (2L_G)^{j}\right)^{1/2}. \end{equation} 
	
	Then, as $G \in C^\infty_c((\R^d)^{K})$ and the identity is smooth,  Corollary~\ref{cor:universality} (applied componentwise) guarantees that there exist $n_K$, $R_K$ and $\ttheta_K \in \TTheta^d$ such that 
	\begin{equation}\label{eq:auxEq14} \sup_{\zz \in D_d} \| \bar{F}^{n_K,\ttheta_K}_{R_K,l_{K-1}+1:l_K}(0,\zz) - \zz \| + \sup_{\zz \in D_d} \| \nabla \bar{F}^{n_K,\ttheta_K}_{R_K,l_{K-1}+1:l_K}(0,\zz) - \bm{1}_d \|   < \frac{\varepsilon}{C_G}, \end{equation} and (recursively),
	for all $k=K-1,\ldots, 2$ there exist $n_k$, $R_k$ and $\ttheta_k \in \TTheta^d$ such that 
	\begin{equation}\label{eq:auxEq15} \sup_{(\xx,\zz) \in [-R_{k+1},R_{k+1}]^N \times D_d} \|  
		\bar{F}^{n_k,\ttheta_k}_{R_k,l_{k-1}+1:l_k}(\xx,\zz) - \xx_{1:d} \| +  \|  
		\nabla \bar{F}^{n_k,\ttheta_k}_{R_k,l_{k-1}+1:l_k}(\xx,\zz) - \bm{1}_d \|< \frac{\varepsilon}{C_G},
	\end{equation}
	and there exist $n_1$, $R_1$ and $\ttheta_1 \in \TTheta^d$ such that 
	\begin{equation}\label{eq:auxEq16}
		\begin{aligned}
			\sup_{([\zz_{-K+1},\ldots,\zz_{-1}],\zz_0) \in  [-R_2,R_2]^N \times D_d} & \left( \|  
			\bar{F}^{n_1,\ttheta_1}_{R_1,1:m}([\zz_{-K+1},\ldots,\zz_{-1},0],\zz_0) - G(\zz_{-K+1},\ldots,\zz_0) \| \right. 
			\\ & \quad \quad + \left. \|  
			\nabla \bar{F}^{n_1,\ttheta_1}_{R_1,1:m}([\zz_{-K+1},\ldots,\zz_{-1},0],\zz_0) - \nabla G(\zz_{-K+1},\ldots,\zz_0) \|\right)   < \frac{\varepsilon}{4}.
		\end{aligned}
	\end{equation}
	Without loss of generality we may choose $R=R_1=\ldots=R_K$, since we can always replace $R_k$ by $\max(R_k,R_{k+1})$ (and hence ultimately  replace $R_1,\ldots,R_K$ by $R$) and absorb the change in an adjusted choice of parameters $\gamma^{i,j}$ (see representation \eqref{eq:representation}).  Moreover, by a similar reasoning we may assume without loss of generality that $n=n_1=\ldots=n_K$. Indeed, otherwise we may again choose $n$ to be the maximum of $n_1,\ldots,n_K$, replace $n_1,\ldots,n_K$ by $n$ and recover the same functions \eqref{eq:representation} by setting surplus terms $i>n_k$ to $0$ by appropriate choice of $\gamma^{i,j}$. The extra factor $\frac{n}{n_k}$, in turn, can be absorbed by modifying the choice of $R$. 
	
	Denote by $L_k$ be the best Lipschitz constant for $\bar{F}^{n,\ttheta_k}_{R,l_{k-1}+1:l_k}$. Then \eqref{eq:auxEq14}--\eqref{eq:auxEq16} imply that $L_k \leq \sqrt{d} + \varepsilon \leq L_G$ for $k=K,\ldots,2$ and $L_1 \leq \sup_{\zz \in \R^d)^{K}} \|\nabla G(\zz)\| + 1 \leq L_G $. In particular, $L_G \geq \max(L_1,\ldots,L_K)$ is a bound on the Lipschitz constant for all QNNs $\bar{F}^{n,\ttheta_k}_{R,l_{k-1}+1:l_k}$ and $G$. 
	Partition $\hat{\xx}_t = [\hat{\xx}_t^{(1)},\ldots,\hat{\xx}_t^{(K)}] \in \R^m \times (\R^d)^{K-1}$.
	Using the triangle inequality, we then obtain 
	\begin{equation}\label{eq:auxEq19}
		\begin{aligned}
			& \sup_{\zz \in (D_d)^{K+1}} \left\|  G(\zz_{-K+1},\ldots,\zz_0) -  \hat{\xx}_{0}^{(1)} \right\|   \\
			&\quad =  \sup_{\zz \in (D_d)^{(K+1)}} \left\|  G(\zz_{-K+1},\ldots,\zz_0) - \bar{F}^{n,\ttheta}_{R,1:m}([\hat{\xx}_{-1}^{(2)},\ldots,\hat{\xx}_{-1}^{(K)},0] ,\zz_0)  \right\| 
			\\ & \quad \leq \left\|  G(\zz_{-K+1},\ldots,\zz_0) - G([\hat{\xx}_{-1}^{(2)},\ldots,\hat{\xx}_{-1}^{(K)}] ,\zz_0)  \right\| \\&\quad + \left\|  G([\hat{\xx}_{-1}^{(2)},\ldots,\hat{\xx}_{-1}^{(K)}] ,\zz_0)  - \bar{F}^{n,\ttheta}_{R,1:m}([\hat{\xx}_{-1}^{(2)},\ldots,\hat{\xx}_{-1}^{(K)},0] ,\zz_0)  \right\| 
			\\ & \leq 
			L_G \left\| (\zz_{-K+1},\ldots,\zz_0) - ([\hat{\xx}_{-1}^{(2)},\ldots,\hat{\xx}_{-1}^{(K)}] ,\zz_0) \right\|  + \frac{\varepsilon}{4}.
		\end{aligned}
	\end{equation}
	For the last norm, we write
	\[
	\begin{aligned}
		& \left\| (\zz_{-K+1},\ldots,\zz_0) - ([\hat{\xx}_{-1}^{(2)},\ldots,\hat{\xx}_{-1}^{(K)}] ,\zz_0) \right\|^2 = \sum_{k=0}^{K-2} \left\|  \zz_{-k-1} -  \hat{\xx}_{-1}^{(K-k)}  \right\|^2  
		= \sum_{k=2}^{K} \left\|  \zz_{-K+k-1} -  \hat{\xx}_{-1}^{(k)}  \right\|^2. 
	\end{aligned}
	\]
	We proceed by backward induction over $k$ to prove that for all  $k=K,\ldots,2$ it holds
	$$
	\left\|  \zz_{-K+k+t} -  \hat{\xx}_{t}^{(k)}  \right\|^2 \leq \sum_{j=1}^{K-k+1} (2L_G)^{j}  \frac{\varepsilon^2}{C_G^2}, $$ for arbitrary $t\in \Z_-$. 
	Indeed, we have
	\[
	\begin{aligned}
		& \left\|  \zz_{-K+k+t} -  \hat{\xx}_{t}^{(k)}  \right\|^2  =   \left\|  \zz_{-K+k+t} - 	\bar{F}^{n,\ttheta}_{R,l_{k-1}+1:l_k}([\hat{\xx}_{t-1}^{(k+1)},\ldots,\hat{\xx}_{t-1}^{(K)},0,\ldots,0] ,\zz_{t}) \right\|^2
	\end{aligned}
	\]
	and so for $k=K$ it follows that
	\[
	\begin{aligned}
		& \left\|  \zz_{-K+k+t} -  \hat{\xx}_{t}^{(k)}  \right\|^2  =   \left\|  \zz_{t} - 	\bar{F}^{n,\ttheta}_{R,l_{K-1}+1:l_K}(0,\zz_{t}) \right\|^2 \leq \frac{\varepsilon^2}{C_G^2} \leq  2L_G  \frac{\varepsilon^2}{C_G^2}
	\end{aligned}
	\]
	Assume that the bound holds for a fixed $k \in \{K,\ldots,3\}$, then for $k-1$ we estimate (with the notation $f_{k-1} = \bar{F}^{n,\ttheta}_{R,l_{k-2}+1:l_{k-1}}$)
	\[
	\begin{aligned}
		& \left\|  \zz_{-K+(k-1)+t} -  \hat{\xx}_{t}^{(k-1)}  \right\|^2  =   \left\|  \zz_{-K+k-2} - 	f_{k-1}([\hat{\xx}_{t-1}^{(k)},\ldots,\hat{\xx}_{t-1}^{(K)},0,\ldots,0] ,\zz_{t}) \right\|^2
		\\ & \leq 2 \left\|  \zz_{-K+k-2} - 	f_{k-1}([\zz_{-K+k-2},\hat{\xx}_{t-1}^{(k+1)},\ldots,\hat{\xx}_{t-1}^{(K)},0,\ldots,0] ,\zz_{t}) \right\|^2
		\\ & \quad \quad  + 2 \left\|  f_{k-1}([\zz_{-K+k+t-1},\hat{\xx}_{t-1}^{(k+1)},\ldots,\hat{\xx}_{t-1}^{(K)},0,\ldots,0] ,\zz_{t}) - 	f_{k-1}([\hat{\xx}_{-2}^{(k)},\ldots,\hat{\xx}_{t-1}^{(K)},0,\ldots,0] ,\zz_{t}) \right\|^2
		\\ & \leq 2 \frac{\varepsilon^2}{C_G^2} +2 L  \left\|  \zz_{-K+k+t-1}-\hat{\xx}_{t-1}^{(k)}\right\|^2  \leq 2 \frac{\varepsilon^2}{C_G^2} + \sum_{j=1}^{K-k} (2L)^{j+1}  \frac{\varepsilon^2}{C_G^2} \leq \sum_{j=1}^{K-k+1} (2L)^{j} \frac{\varepsilon^2}{C_G^2}, 
	\end{aligned}
	\]
	which completes the induction.
	Therefore, we obtain
	\[
	\begin{aligned}
		& \left\| (\zz_{-K+1},\ldots,\zz_0) - ([\hat{\xx}_{-1}^{(2)},\ldots,\hat{\xx}_{-1}^{(K)}] ,\zz_0) \right\|^2 
		= \sum_{k=2}^{K} \left\|  \zz_{-K+k-1} -  \hat{\xx}_{-1}^{(k)}  \right\|^2 
		\\ & \leq  \sum_{k=2}^{K} \left\|  \zz_{-K+k-1} -  \hat{\xx}_{-1}^{(k)}  \right\|^2   \leq \frac{\varepsilon^2}{C_G^2} \sum_{k=2}^K \sum_{j=1}^{K-k+1} (2L)^{j} = \frac{\varepsilon^2}{16 L_G^2}
	\end{aligned}
	\]
	From \eqref{eq:auxEq19}, we thus obtain 
	\begin{equation}\label{eq:auxEq20}
		\begin{aligned}
			& \sup_{\zz \in (D_d)^{K+1}} \left\|  G(\zz_{-K+1},\ldots,\zz_0) -  \hat{\xx}_{0}^{(1)} \right\|   
			\\ & \leq 
			L_G \left\| (\zz_{-K+1},\ldots,\zz_0) - ([\hat{\xx}_{-1}^{(2)},\ldots,\hat{\xx}_{-1}^{(K)}] ,\zz_0) \right\|  + \frac{\varepsilon}{4} \leq 
			\frac{\varepsilon}{2}.
		\end{aligned}
	\end{equation}
	Setting $W$ to be the projection onto the first block $\hat{\xx}_{0}^{(1)}$, (that is, $W$ has zero entries except for $W_{i,i}=1$ for $i=1,\ldots,m$)  and putting together \eqref{eq:auxEq21}, \eqref{eq:auxEq22} and \eqref{eq:auxEq20} yields 
	
	\begin{equation}\label{eq:auxEq23}
		\begin{aligned}
			&\sup_{\zz \in (D_d)^{\Z_-}} \sup_{t \in \Z_-}	\left\|H_U(\zz) -  H_{\bar{U}_W}(\zz)\right\|   \leq 
			\sup_{\zz \in (D_d)^{\Z_-}} \left\|H_U(\zz) -  \bar{G}(\zz_{-K+1},\ldots,\zz_0)\right\| \\ & \quad +	\sup_{\zz \in (\R^d)^{K}} \left\|G(\zz) -  \bar{G}(\zz)\right\|  + \sup_{\zz \in (D_d)^{K+1}} \left\|  G(\zz_{-K+1},\ldots,\zz_0) -  H_{\bar{U}_W}(\zz) \right\|   
			\\ & \quad \leq 
			\frac{\varepsilon}{4} +  \frac{\varepsilon}{4}  +\frac{\varepsilon}{2} = \varepsilon. 
		\end{aligned}
	\end{equation}
	It remains to be shown that \eqref{eq:modQRNN} has the echo state property. Recall that we partition $\hat{\xx}_t = [\hat{\xx}_t^{(1)},\ldots,\hat{\xx}_t^{(K)}] \in  \R^m \times (\R^d)^{K-1}$. For $k=1$,  $j\in \{1,\ldots,l_1\}$, and $k=2,\ldots,K-1$, $j\in \{l_{k-1}+1,\ldots,l_k\}$, select $P_j$ as the matrix with zero entries, except for $(P_j)_{l,l+l_k} = 1$ for $l=1,\ldots, d(K-k)$ and let $P_j = 0$ for $j=l_{K-1}+1,\ldots,N$. 
	Then, for $k=1$,  $j\in \{1,\ldots,l_1\}$, and $k=2,\ldots,K-1$, $j\in \{l_{k-1}+1,\ldots,l_k\}$, we have $P_j \hat{\xx}_t = [\hat{\xx}_t^{(k+1)},\ldots,\hat{\xx}_t^{(K)},0,\ldots,0] $ and  $P_j \hat{\xx}_t = 0$ for $j=l_{K-1}+1,\ldots,N$. Then, echo state property follows by calling Lemma \ref{lemma:ESP}. Therefore, the approximation bound for the functional \eqref{eq:auxEq23} immediately implies the corresponding bound for the filter \eqref{eq:StateSpaceErrorUniversality}, which completes the proof of the theorem. 
	
\end{proof}

\section{Construction of $\Vg$}
\label{sec:Vconstr}
In this appendix we provide further details on the choice of $\Vg$ appearing in the quantum circuit. Our presentation follows \cite{GononJacquier2023}.

Generally, the matrix $\Vg \in \C^{n_\Ug \times n_\Ug}$ can  be any unitary matrix mapping 
$\ket{0}^{\otimes \nq}$ to the state $\ket{\psi} = \frac{1}{\sqrt{n}} \sum_{i=0}^{n-1} \ket{4i}$ which, for $n\geq 2$, is also explicitly given as $\ket{\psi}=\frac{1}{\sqrt{n}} \sum_{i=0}^{n-1} \ket{i}\otimes\ket{00}$.

As $\Vg\ket{0}^{\otimes\nq} = \ket{\psi}$  is the only property required in the proof, many alternative choices of $\Vg$ are possible and one may thus select the one that is most suitable from the perspective of hardware requirements or limitations.

\paragraph{Example}
One explicit example for ~$\Vg$ is given by
$\Vg := 2\ket{\varphi}\bra{\varphi} - \Ig$,
with 
$$
\ket{\varphi} := \frac{\ket{0}+\ket{\psi}}{\sqrt{2\left(1+\braket{0|\psi}\right)}},
$$
where we write $\ket{0}$ in place of $\ket{0}^{\otimes\nq}$ for brevity here. One easily checks that 
$\Vg^\dagger = 2\ket{\varphi}\bra{\varphi} - \Ig = \Vg$
and thus
$\Vg\Vg^\dagger = \Vg^\dagger\Vg = \Ig$.
Furthermore, a straightforward computation yields that
\begin{align*}
\Vg\ket{0} & = \left(2\ket{\varphi}\bra{\varphi} - \Ig\right)\ket{0}\\ 
  & = \frac{\ket{0}\left(1+\braket{\psi|0}\right)+\ket{\psi}\left(1+\braket{\psi|0}\right)}{1+\braket{0|\psi}} - \ket{0}
  = \ket{\psi}.
\end{align*}

\paragraph{Construction of $\ket{\psi}$}
In the case $n_0=0$, there is an explicit construction of $\psi$ in terms of Hadamard gates acting on the control qubits. Indeed, for  $\nq \geq 2$, \cite[Lemma~A.2]{GononJacquier2023} shows that
	$$
		\ket{\psi_{l}}_{\nq_l} = \left(\bigotimes_{i=0}^{\nq_{l}-2}
		\Hg\ket{0}\right)\otimes \ket{00}.
		$$

\section{Monte Carlo Error}
	\label{appendixMC}

In practice, the empirical sampling error leads to an additional error component of order  $1/\sqrt{S}$ for $S$ independent shots, see, e.g., \cite{qi2023theoretical,liu2025quantum}.  Here, we outline how this Monte Carlo error could be taken into account in the present setting. 

More specifically, our QNNs in~\eqref{eq:qnn} and~\eqref{eq:QRNN} are defined using probabilities, rather than their Monte Carlo estimates
\begin{equation*}
\widehat{\P}_{m}^{n,\ttheta}  := \frac{1}{S} \sum_{s=1}^S \mathbbm{1}_{\{m,4+m,\ldots,4(n-1)+m\}}(i^{(s)}),
\end{equation*}
with $i^{(s)}$ the measured state in the $i$-th shot. 
To obtain refined bounds incorporating the sampling error, one would proceed as follows. Denote by $\bar{F}^{n,\ttheta,S}_{R}$ the RQNN state map with output probabilities estimated by $S$ shots, by $\hat{\xx}^S$ the associated state and by $\bar{U}_S$ the associated filter. 

For the state map itself, the $L^2$-error can be directly controlled (as in \cite{GononJacquier2023}) by
	\begin{equation} \label{eq:appShots}
		\begin{aligned}
		   &\E\left[\int_{\R^N \times \R^d} 
			\left|\bar{F}^{n,\ttheta}_{R,j}(\xx,\zz) - \bar{F}^{n,\ttheta,S}_{R,j}(\xx,\zz)\right|^2 \mu(\D \xx,\D \zz)\right]^{1/2}
			\\& 
			\leq  2R \sum_{i=1}^2 \left(\int_{\R^N \times \R^d}  
			\E\left[\left| \P_i^{n,\ttheta^j}(\xx,\zz)
			- \widehat{\P}_i^{n,\ttheta^j}(\xx,\zz)\right|^2\right] \mu(\D \xx,\D \zz)\right)^{1/2}
			\\& 
			\leq  \frac{4R}{\sqrt{S}},
		\end{aligned}
	\end{equation} 
    using that $\E[| \E[X_1]
	- \frac{1}{S} \sum_{s=1}^S X_s |^2] = \frac{\mathrm{Var}(X_1)}{S}$ for i.i.d.\ random variables $X_1,\ldots,X_S$. 

For the associated filter, one may proceed as follows. 
Firstly, \eqref{eq:auxEq8} in the proof of Theorem~\ref{prop:QRNNBarronuniv} can be adapted to 
      \begin{equation} \label{eq:auxEqMC}
		\begin{aligned}	& \left\|\bar{U}_S(\zz)_t -  U(\zz)_t\right\|   = 	\left\|\hat{\xx}_t^S -  \xx_t\right\| = \left\|\bar{F}^{n,\ttheta,S}_{R}(\hat{\xx}_{t-1}^S,\zz_t) -  	F(\xx_{t-1},\zz_t)\right\|
			\\ & \leq \left\|F(\xx_{t-1},\zz_t) -  	F(\hat{\xx}_{t-1}^S,\zz_t)\right\| + \left\|F(\hat{\xx}_{t-1}^S,\zz_t) -  	\bar{F}^{n,\ttheta,S}_{R}(\hat{\xx}_{t-1}^S,\zz_t)\right\|
			\\ & \leq \lambda \left\|\xx_{t-1} -  	\hat{\xx}_{t-1}^S\right\| + \left( \sum_{j=1}^N \left\|\bar{F}^{n,\ttheta,S}_{R,j} - \bar{F}^{n,\ttheta}_{R,j} \right\|_{\infty,M}^2 \right)^{1/2} + \left( \sum_{j=1}^N \left\|\bar{F}^{n,\ttheta}_{R,j} - F_j\right\|_{\infty,M}^2 \right)^{1/2}
			\\ & \leq \lambda \left\|\xx_{t-1} -  	\hat{\xx}_{t-1}^S\right\| + \frac{\sqrt{N} \max_{j=1,\ldots,N} C_j^\infty}{\sqrt{n}}+  \left( \sum_{j=1}^N \left\|\bar{F}^{n,\ttheta,S}_{R,j} - \bar{F}^{n,\ttheta}_{R,j} \right\|_{\infty,M}^2 \right)^{1/2}.
		\end{aligned}
	\end{equation}   
The last error term can be bounded as
\[
\E\left[ \left( \sum_{j=1}^N \left\|\bar{F}^{n,\ttheta,S}_{R,j} - \bar{F}^{n,\ttheta}_{R,j} \right\|_{\infty,M}^2 \right)^{1/2} \right] \leq  \left( \sum_{j=1}^N\E\left[ \left\|\bar{F}^{n,\ttheta,S}_{R,j} - \bar{F}^{n,\ttheta}_{R,j} \right\|_{\infty,M}^2 \right]\right) ^{1/2} \leq \frac{C}{\sqrt{S}}
\]
for a suitable constant $C$ using techniques from statistical learning theory, provided that $\widehat{\P}_{m}^{n,\ttheta}$ is Lipschitz continuous as a function of $(\xx,\zz)$. Inserting this into \eqref{eq:auxEqMC} and proceeding precisely as in the proof of Theorem~\ref{prop:QRNNBarronuniv} then yields a bound that incorporates also the sampling error.

Alternatively, as the Lipschitz continuity may be hard to verify, we may obtain an $L^2$-bound analogously to Theorem~\ref{prop:QRNNBarronuniv} as follows. First, using that the shots are independent across evaluations, we may apply  \eqref{eq:appShots} to estimate
\[\begin{aligned} & \E \left[ \left( \sum_{j=1}^N \left\|\bar{F}^{n,\ttheta,S}_{R,j}(\hat{\xx}_{t-1}^S,\zz_t) - \bar{F}^{n,\ttheta}_{R,j}(\hat{\xx}_{t-1}^S,\zz_t) \right\|^2 \right)^{1/2} \right] 
\\ & \quad \leq  \left( \sum_{j=1}^N \E \left[ \left\|\bar{F}^{n,\ttheta,S}_{R,j}(\hat{\xx}_{t-1}^S,\zz_t) - \bar{F}^{n,\ttheta}_{R,j}(\hat{\xx}_{t-1}^S,\zz_t) \right\|^2  \right] \right)^{1/2}
\\ & \quad \leq \sqrt{N} \frac{4R}{\sqrt{S}}, 
\end{aligned}
\]
where the expectations are taken with respect to sampling the probabilities to evaluate $\bar{F}^{n,\ttheta,S}_{R,j}(\hat{\xx}_{t-1}^S,\zz_t)$.

Next, by proceeding as in
\eqref{eq:auxEqMC}, we may estimate 
     \begin{equation} \label{eq:auxEqMC2}
		\begin{aligned}	&  \E[\left\|\bar{U}_S(\zz)_t -  U(\zz)_t\right\|]
	 \leq \lambda \left\|\xx_{t-1} -  	\hat{\xx}_{t-1}^S\right\| + \frac{\sqrt{N} \max_{j=1,\ldots,N} C_j^\infty}{\sqrt{n}}
     \\ & \qquad +  \E \left[ \left( \sum_{j=1}^N \left\|\bar{F}^{n,\ttheta,S}_{R,j}(\hat{\xx}_{t-1}^S,\zz_t) - \bar{F}^{n,\ttheta}_{R,j}(\hat{\xx}_{t-1}^S,\zz_t) \right\|^2 \right)^{1/2} \right] 
     \\ & \leq \lambda \left\|\xx_{t-1} -  	\hat{\xx}_{t-1}^S\right\| + \frac{\sqrt{N} \max_{j=1,\ldots,N} C_j^\infty}{\sqrt{n}} + \sqrt{N} \frac{4R}{\sqrt{S}} 
		\end{aligned}
	\end{equation}   
with the expectations again taken with respect to sampling the probabilities to evaluate $\bar{F}^{n,\ttheta,S}_{R,j}(\hat{\xx}_{t-1}^S,\zz_t)$. 
In particular, taking expectations also with respect to a random process ${\bf Z}$ (taking values in $(D_d)^{\Z_-}$) and sampling at each evaluation, the estimate \eqref{eq:auxEqMC2} and the same arguments as in the proof of Theorem~\ref{prop:QRNNBarronuniv}
yield the bound
	\begin{equation} \label{eq:StateSpaceErrorBoundL2}
		\sup_{t \in \Z_-}	\E[\left\|U^F({\bf Z})_t -  \bar{U}_S({\bf Z})_t\right\|] \leq \frac{1}{1-\lambda} \left(\frac{\sqrt{N} \max_{j=1,\ldots,N} C_j^\infty}{\sqrt{n}} + \sqrt{N} \frac{4R}{\sqrt{S}} \right) .
	\end{equation} 

\begin{thebibliography}{79}
\providecommand{\natexlab}[1]{#1}
\providecommand{\url}[1]{\texttt{#1}}
\expandafter\ifx\csname urlstyle\endcsname\relax
  \providecommand{\doi}[1]{doi: #1}\else
  \providecommand{\doi}{doi: \begingroup \urlstyle{rm}\Url}\fi

\bibitem[Aftab \& Yang(2024)Aftab and Yang]{aftab2024approximating}
Junaid Aftab and Haizhao Yang.
\newblock Approximating korobov functions via quantum circuits.
\newblock \emph{arXiv preprint arXiv:2404.14570}, 2024.

\bibitem[Agarwal et~al.(2024)Agarwal, Patti, Bravo, Yelin, and Anandkumar]{agarwal2024extending}
Ishita Agarwal, Taylor~L Patti, Rodrigo~Araiza Bravo, Susanne~F Yelin, and Anima Anandkumar.
\newblock Extending quantum perceptrons: Rydberg devices, multi-class classification, and error tolerance.
\newblock \emph{arXiv preprint arXiv:2411.09093}, 2024.

\bibitem[Ahmed et~al.(2025)Ahmed, Tennie, and Magri]{ahmed2025optimal}
Osama Ahmed, Felix Tennie, and Luca Magri.
\newblock Optimal training of finitely sampled quantum reservoir computers for forecasting of chaotic dynamics.
\newblock \emph{Quantum Machine Intelligence}, 7\penalty0 (1):\penalty0 1--16, 2025.

\bibitem[Araujo et~al.(2023)Araujo, Ara{\'u}jo, da~Silva, Blank, and da~Silva]{qclib}
Israel~F Araujo, Ismael~C Ara{\'u}jo, Leon~D da~Silva, Carsten Blank, and Adenilton~J da~Silva.
\newblock Quantum computing library.
\newblock 2023.
\newblock \url{https://github.com/qclib/qclib}.

\bibitem[Arrazola et~al.(2022)Arrazola, Di~Matteo, Quesada, Jahangiri, Delgado, and Killoran]{arrazola2022universal}
Juan~Miguel Arrazola, Olivia Di~Matteo, Nicol{\'a}s Quesada, Soran Jahangiri, Alain Delgado, and Nathan Killoran.
\newblock Universal quantum circuits for quantum chemistry.
\newblock \emph{Quantum}, 6:\penalty0 742, 2022.

\bibitem[Barron(1992)]{Barron1992}
Andrew~R. Barron.
\newblock Neural net approximation.
\newblock In \emph{Yale Workshop on Adaptive and Learning Systems}, volume~1, pp.\  69--72, 1992.

\bibitem[Barron(1993)]{Barron1993}
Andrew~R. Barron.
\newblock Universal approximation bounds for superpositions of a sigmoidal function.
\newblock \emph{IEEE Trans. Inform. Theory}, 39\penalty0 (3):\penalty0 930--945, 1993.
\newblock ISSN 0018-9448.
\newblock \doi{10.1109/18.256500}.
\newblock URL \url{https://doi.org/10.1109/18.256500}.

\bibitem[Barron \& Klusowski(2018)Barron and Klusowski]{BarronKlusowski2018}
Andrew~R. Barron and Jason~M. Klusowski.
\newblock Approximation and estimation for high-dimensional deep learning networks.
\newblock \emph{Preprint, arXiv 1809.03090}, 2018.

\bibitem[Bausch(2020)]{bausch2020recurrent}
Johannes Bausch.
\newblock Recurrent quantum neural networks.
\newblock \emph{Advances in neural information processing systems}, 33:\penalty0 1368--1379, 2020.

\bibitem[Bergholm et~al.(2005)Bergholm, Vartiainen, M{\"o}tt{\"o}nen, and Salomaa]{bergholm2005quantum}
Ville Bergholm, Juha~J Vartiainen, Mikko M{\"o}tt{\"o}nen, and Martti~M Salomaa.
\newblock Quantum circuits with uniformly controlled one-qubit gates.
\newblock \emph{Physical Review A—Atomic, Molecular, and Optical Physics}, 71\penalty0 (5):\penalty0 052330, 2005.

\bibitem[Boyd \& Chua(1985)Boyd and Chua]{Boyd1985}
S.~Boyd and L.~Chua.
\newblock {Fading memory and the problem of approximating nonlinear operators with Volterra series}.
\newblock \emph{IEEE Transactions on Circuits and Systems}, 32\penalty0 (11):\penalty0 1150--1161, 1985.

\bibitem[Bravo et~al.(2022)Bravo, Najafi, Gao, and Yelin]{bravo2022quantum}
Rodrigo~Araiza Bravo, Khadijeh Najafi, Xun Gao, and Susanne~F Yelin.
\newblock Quantum reservoir computing using arrays of rydberg atoms.
\newblock \emph{PRX Quantum}, 3\penalty0 (3):\penalty0 030325, 2022.

\bibitem[Chen \& Nurdin(2019)Chen and Nurdin]{chen2019learning}
Jiayin Chen and Hendra~I Nurdin.
\newblock Learning nonlinear input--output maps with dissipative quantum systems.
\newblock \emph{Quantum Information Processing}, 18:\penalty0 1--36, 2019.

\bibitem[Chen et~al.(2020)Chen, Nurdin, and Yamamoto]{chen2020temporal}
Jiayin Chen, Hendra~I Nurdin, and Naoki Yamamoto.
\newblock Temporal information processing on noisy quantum computers.
\newblock \emph{Physical Review Applied}, 14\penalty0 (2):\penalty0 024065, 2020.

\bibitem[Chmielewski et~al.(2025)Chmielewski, Amini, and Mikael]{chmielewski2025quantum}
Naomi~Mona Chmielewski, Nina Amini, and Joseph Mikael.
\newblock Quantum reservoir computing and risk bounds.
\newblock \emph{arXiv preprint arXiv:2501.08640}, 2025.

\bibitem[{\v{C}}indrak et~al.(2024){\v{C}}indrak, Donvil, L{\"u}dge, and Jaurigue]{vcindrak2024enhancing}
Saud {\v{C}}indrak, Brecht Donvil, Kathy L{\"u}dge, and Lina Jaurigue.
\newblock Enhancing the performance of quantum reservoir computing and solving the time-complexity problem by artificial memory restriction.
\newblock \emph{Physical Review Research}, 6\penalty0 (1):\penalty0 013051, 2024.

\bibitem[Dasgupta et~al.(2022)Dasgupta, Hamilton, and Banerjee]{dasgupta2022characterizing}
Samudra Dasgupta, Kathleen~E Hamilton, and Arnab Banerjee.
\newblock Characterizing the memory capacity of transmon qubit reservoirs.
\newblock In \emph{2022 IEEE International Conference on Quantum Computing and Engineering (QCE)}, pp.\  162--166. IEEE, 2022.

\bibitem[Folland(2020)]{FollandPDE}
Gerald~B. Folland.
\newblock \emph{Introduction to Partial Differential Equations: Second Edition}.
\newblock Princeton University Press, 2020.
\newblock ISBN 9780691213033.
\newblock \doi{doi:10.1515/9780691213033}.
\newblock URL \url{https://doi.org/10.1515/9780691213033}.

\bibitem[Franceschetto et~al.(2024)Franceschetto, P{\l}odzie{\'n}, Lewenstein, Ac{\'\i}n, and Mujal]{franceschetto2024harnessing}
Giacomo Franceschetto, Marcin P{\l}odzie{\'n}, Maciej Lewenstein, Antonio Ac{\'\i}n, and Pere Mujal.
\newblock Harnessing quantum back-action for time-series processing.
\newblock \emph{arXiv preprint arXiv:2411.03979}, 2024.

\bibitem[Garc{\'\i}a-Beni et~al.(2023)Garc{\'\i}a-Beni, Giorgi, Soriano, and Zambrini]{garcia2023scalable}
Jorge Garc{\'\i}a-Beni, Gian~Luca Giorgi, Miguel~C Soriano, and Roberta Zambrini.
\newblock Scalable photonic platform for real-time quantum reservoir computing.
\newblock \emph{Physical Review Applied}, 20\penalty0 (1):\penalty0 014051, 2023.

\bibitem[Gonon(2023)]{Gonon2021}
Lukas Gonon.
\newblock Random feature neural networks learn black-scholes type pdes without curse of dimensionality.
\newblock \emph{J. Mach. Learn. Res.}, 24\penalty0 (189):\penalty0 1--51, 2023.
\newblock URL \url{http://jmlr.org/papers/v24/21-0987.html}.

\bibitem[Gonon(2024)]{Gonon2022}
Lukas Gonon.
\newblock Deep neural network expressivity for optimal stopping problems.
\newblock \emph{Finance and Stochastics}, 28:\penalty0 865–910, 2024.

\bibitem[Gonon \& Jacquier(2025)Gonon and Jacquier]{GononJacquier2023}
Lukas Gonon and Antoine Jacquier.
\newblock Universal approximation theorem and error bounds for quantum neural networks and quantum reservoirs.
\newblock \emph{Preprint, arXiv 2307.12904; to appear in IEEE TNNLS}, 2025.

\bibitem[Gonon \& Ortega(2020)Gonon and Ortega]{RC8}
Lukas Gonon and Juan-Pablo Ortega.
\newblock Reservoir computing universality with stochastic inputs.
\newblock \emph{IEEE Trans. Neural Netw. Learn. Syst.}, 31\penalty0 (1):\penalty0 100--112, 2020.
\newblock ISSN 2162-237X,2162-2388.

\bibitem[Gonon \& Ortega(2021)Gonon and Ortega]{RC20}
Lukas Gonon and Juan-Pablo Ortega.
\newblock {Fading memory echo state networks are universal}.
\newblock \emph{Neural Netw.}, 138:\penalty0 10--13, 2021.

\bibitem[Gonon et~al.(2020)Gonon, Grigoryeva, and Ortega]{RC10}
Lukas Gonon, Lyudmila Grigoryeva, and Juan-Pablo Ortega.
\newblock Risk bounds for reservoir computing.
\newblock \emph{J. Mach. Learn. Res.}, 21:\penalty0 Paper No. 240, 61, 2020.
\newblock ISSN 1532-4435,1533-7928.

\bibitem[Gonon et~al.(2023)Gonon, Grigoryeva, and Ortega]{RC12}
Lukas Gonon, Lyudmila Grigoryeva, and Juan-Pablo Ortega.
\newblock Approximation bounds for random neural networks and reservoir systems.
\newblock \emph{Ann. Appl. Probab.}, 33\penalty0 (1):\penalty0 28--69, 2023.
\newblock ISSN 1050-5164,2168-8737.
\newblock \doi{10.1214/22-aap1806}.
\newblock URL \url{https://doi.org/10.1214/22-aap1806}.

\bibitem[Grigoryeva \& Ortega(2018{\natexlab{a}})Grigoryeva and Ortega]{RC6}
Lyudmila Grigoryeva and Juan-Pablo Ortega.
\newblock {Universal discrete-time reservoir computers with stochastic inputs and linear readouts using non-homogeneous state-affine systems}.
\newblock \emph{Journal of Machine Learning Research}, 19\penalty0 (24):\penalty0 1--40, 2018{\natexlab{a}}.
\newblock URL \url{http://arxiv.org/abs/1712.00754}.

\bibitem[Grigoryeva \& Ortega(2018{\natexlab{b}})Grigoryeva and Ortega]{RC7}
Lyudmila Grigoryeva and Juan-Pablo Ortega.
\newblock {Echo state networks are universal}.
\newblock \emph{Neural Networks}, 108:\penalty0 495--508, 2018{\natexlab{b}}.

\bibitem[H{\"o}rmander(1990)]{Hoermander}
Lars H{\"o}rmander.
\newblock \emph{The analysis of linear partial differential operators I}.
\newblock Springer, second edition edition, 1990.

\bibitem[Hornik(1991)]{hornik1991}
Kurt Hornik.
\newblock {Approximation capabilities of muitilayer feedforward networks}.
\newblock \emph{Neural Networks}, 4\penalty0 (1989):\penalty0 251--257, 1991.
\newblock \doi{10.1016/0893-6080(91)90009-T}.

\bibitem[Hu et~al.(2024)Hu, Khan, Bronn, Angelatos, Rowlands, Ribeill, and T{\"u}reci]{hu2024overcoming}
Fangjun Hu, Saeed~A Khan, Nicholas~T Bronn, Gerasimos Angelatos, Graham~E Rowlands, Guilhem~J Ribeill, and Hakan~E T{\"u}reci.
\newblock Overcoming the coherence time barrier in quantum machine learning on temporal data.
\newblock \emph{Nature Communications}, 15\penalty0 (1):\penalty0 7491, 2024.

\bibitem[Kobayashi et~al.(2024)Kobayashi, Fujii, and Yamamoto]{kobayashi2024feedback}
Kaito Kobayashi, Keisuke Fujii, and Naoki Yamamoto.
\newblock Feedback-driven quantum reservoir computing for time-series analysis.
\newblock \emph{PRX Quantum}, 5\penalty0 (4):\penalty0 040325, 2024.

\bibitem[Kornja{\v{c}}a et~al.(2024)Kornja{\v{c}}a, Hu, Zhao, Wurtz, Weinberg, Hamdan, Zhdanov, Cantu, Zhou, Bravo, et~al.]{kornjavca2024large}
Milan Kornja{\v{c}}a, Hong-Ye Hu, Chen Zhao, Jonathan Wurtz, Phillip Weinberg, Majd Hamdan, Andrii Zhdanov, Sergio~H Cantu, Hengyun Zhou, Rodrigo~Araiza Bravo, et~al.
\newblock Large-scale quantum reservoir learning with an analog quantum computer.
\newblock \emph{arXiv preprint arXiv:2407.02553}, 2024.

\bibitem[Kubota et~al.(2023)Kubota, Suzuki, Kobayashi, Tran, Yamamoto, and Nakajima]{kubota2023temporal}
Tomoyuki Kubota, Yudai Suzuki, Shumpei Kobayashi, Quoc~Hoan Tran, Naoki Yamamoto, and Kohei Nakajima.
\newblock Temporal information processing induced by quantum noise.
\newblock \emph{Physical Review Research}, 5\penalty0 (2):\penalty0 023057, 2023.

\bibitem[Larocca et~al.(2025)Larocca, Thanasilp, Wang, Sharma, Biamonte, Coles, Cincio, McClean, Holmes, and Cerezo]{larocca2025barren}
Martin Larocca, Supanut Thanasilp, Samson Wang, Kunal Sharma, Jacob Biamonte, Patrick~J Coles, Lukasz Cincio, Jarrod~R McClean, Zo{\"e} Holmes, and Marco Cerezo.
\newblock Barren plateaus in variational quantum computing.
\newblock \emph{Nature Reviews Physics}, pp.\  1--16, 2025.

\bibitem[Ledoux \& Talagrand(2013)Ledoux and Talagrand]{Ledoux2013}
Michel Ledoux and Michel Talagrand.
\newblock \emph{{Probability in Banach Spaces}}.
\newblock Springer Berlin Heidelberg, 2013.

\bibitem[Li et~al.(2023)Li, Wang, Han, Shi, Li, Shang, Zheng, Zhong, and Gu]{li2023quantum}
Yanan Li, Zhimin Wang, Rongbing Han, Shangshang Shi, Jiaxin Li, Ruimin Shang, Haiyong Zheng, Guoqiang Zhong, and Yongjian Gu.
\newblock Quantum recurrent neural networks for sequential learning.
\newblock \emph{Neural Networks}, 166:\penalty0 148--161, 2023.

\bibitem[Liu et~al.(2025)Liu, Kuo, Abraham~Lin, Gemsun~Young, Chang, Hsieh, and Goan]{liu2025quantum}
Chen-Yu Liu, En-Jui Kuo, Chu-Hsuan Abraham~Lin, Jason Gemsun~Young, Yeong-Jar Chang, Min-Hsiu Hsieh, and Hsi-Sheng Goan.
\newblock Quantum-train: Rethinking hybrid quantum-classical machine learning in the model compression perspective.
\newblock \emph{Quantum Machine Intelligence}, 7\penalty0 (2):\penalty0 80, 2025.

\bibitem[Manjunath(2020)]{manjunath:prsl}
G~Manjunath.
\newblock {Stability and memory-loss go hand-in-hand: three results in dynamics $\backslash${\&} computation}.
\newblock \emph{Proceedings of the Royal Society London Ser. A Math. Phys. Eng. Sci.}, 476\penalty0 (2242):\penalty0 1--25, 2020.
\newblock \doi{10.1098/rspa.2020.0563}.
\newblock URL \url{http://arxiv.org/abs/2001.00766}.

\bibitem[Mart{\'\i}nez-Pe{\~n}a \& Ortega(2023)Mart{\'\i}nez-Pe{\~n}a and Ortega]{martinez2023quantum}
Rodrigo Mart{\'\i}nez-Pe{\~n}a and Juan-Pablo Ortega.
\newblock Quantum reservoir computing in finite dimensions.
\newblock \emph{Physical Review E}, 107\penalty0 (3):\penalty0 035306, 2023.

\bibitem[McClean et~al.(2018)McClean, Boixo, Smelyanskiy, Babbush, and Neven]{mcclean2018barren}
Jarrod~R McClean, Sergio Boixo, Vadim~N Smelyanskiy, Ryan Babbush, and Hartmut Neven.
\newblock Barren plateaus in quantum neural network training landscapes.
\newblock \emph{Nature communications}, 9\penalty0 (1):\penalty0 4812, 2018.

\bibitem[Miranda \& Shaji(2025)Miranda and Shaji]{miranda2025quantum}
Eduardo Miranda and Hari Shaji.
\newblock A quantum reservoir computing approach to computer-aided music composition.
\newblock \emph{Academia Quantum}, 2\penalty0 (2), 2025.

\bibitem[Mlika et~al.(2023)Mlika, Cherkaoui, Laprade, and Corbeil-Letourneau]{mlika2023user}
Zoubeir Mlika, Soumaya Cherkaoui, Jean~Fr{\'e}d{\'e}ric Laprade, and Simon Corbeil-Letourneau.
\newblock User trajectory prediction in mobile wireless networks using quantum reservoir computing.
\newblock \emph{IET Quantum Communication}, 4\penalty0 (3):\penalty0 125--135, 2023.

\bibitem[Molteni et~al.(2023)Molteni, Destri, and Prati]{molteni2023optimization}
Riccardo Molteni, Claudio Destri, and Enrico Prati.
\newblock Optimization of the memory reset rate of a quantum echo-state network for time sequential tasks.
\newblock \emph{Physics Letters A}, pp.\  128713, 2023.

\bibitem[Monomi et~al.(2025)Monomi, Setoyama, and Hasegawa]{monomi2025feedback}
Tomoya Monomi, Wataru Setoyama, and Yoshihiko Hasegawa.
\newblock Feedback-enhanced quantum reservoir computing with weak measurements.
\newblock \emph{arXiv preprint arXiv:2503.17939}, 2025.

\bibitem[M{\"o}tt{\"o}nen et~al.(2004)M{\"o}tt{\"o}nen, Vartiainen, Bergholm, and Salomaa]{mottonen2004quantum}
Mikko M{\"o}tt{\"o}nen, Juha~J Vartiainen, Ville Bergholm, and Martti~M Salomaa.
\newblock Quantum circuits for general multiqubit gates.
\newblock \emph{Physical review letters}, 93\penalty0 (13):\penalty0 130502, 2004.

\bibitem[Mottonen et~al.(2004)Mottonen, Vartiainen, Bergholm, and Salomaa]{mottonen2004transformation}
Mikko Mottonen, Juha~J Vartiainen, Ville Bergholm, and Martti~M Salomaa.
\newblock Transformation of quantum states using uniformly controlled rotations.
\newblock \emph{arXiv preprint quant-ph/0407010}, 2004.

\bibitem[Mujal et~al.(2021)Mujal, Mart{\'\i}nez-Pe{\~n}a, Nokkala, Garc{\'\i}a-Beni, Giorgi, Soriano, and Zambrini]{mujal2021opportunities}
Pere Mujal, Rodrigo Mart{\'\i}nez-Pe{\~n}a, Johannes Nokkala, Jorge Garc{\'\i}a-Beni, Gian~Luca Giorgi, Miguel~C Soriano, and Roberta Zambrini.
\newblock Opportunities in quantum reservoir computing and extreme learning machines.
\newblock \emph{Advanced Quantum Technologies}, 4\penalty0 (8):\penalty0 2100027, 2021.

\bibitem[Mujal et~al.(2023)Mujal, Mart{\'\i}nez-Pe{\~n}a, Giorgi, Soriano, and Zambrini]{mujal2023time}
Pere Mujal, Rodrigo Mart{\'\i}nez-Pe{\~n}a, Gian~Luca Giorgi, Miguel~C Soriano, and Roberta Zambrini.
\newblock Time-series quantum reservoir computing with weak and projective measurements.
\newblock \emph{npj Quantum Information}, 9\penalty0 (1):\penalty0 16, 2023.

\bibitem[Murauer et~al.(2025)Murauer, Krishnakumar, Tornow, and Geierhos]{murauer2025feedback}
Jakob Murauer, Rajiv Krishnakumar, Sabine Tornow, and Michaela Geierhos.
\newblock Feedback connections in quantum reservoir computing with mid-circuit measurements.
\newblock \emph{arXiv preprint arXiv:2503.22380}, 2025.

\bibitem[Nokkala(2023)]{nokkala2023online}
Johannes Nokkala.
\newblock Online quantum time series processing with random oscillator networks.
\newblock \emph{Scientific Reports}, 13\penalty0 (1):\penalty0 7694, 2023.

\bibitem[Nokkala et~al.(2021)Nokkala, Mart{\'\i}nez-Pe{\~n}a, Giorgi, Parigi, Soriano, and Zambrini]{nokkala2021gaussian}
Johannes Nokkala, Rodrigo Mart{\'\i}nez-Pe{\~n}a, Gian~Luca Giorgi, Valentina Parigi, Miguel~C Soriano, and Roberta Zambrini.
\newblock Gaussian states of continuous-variable quantum systems provide universal and versatile reservoir computing.
\newblock \emph{Communications Physics}, 4\penalty0 (1):\penalty0 53, 2021.

\bibitem[Ortega \& Rossmannek(2025{\natexlab{a}})Ortega and Rossmannek]{RC28}
Juan-Pablo Ortega and Florian Rossmannek.
\newblock {State-space systems as dynamic generative models}.
\newblock \emph{Proceedings of the Royal Society A}, 481\penalty0 (2309):\penalty0 20240308, 2025{\natexlab{a}}.
\newblock ISSN 1471-2946.
\newblock \doi{10.1098/rspa.2024.0308}.
\newblock URL \url{https://doi.org/10.1098/rspa.2024.0308}.

\bibitem[Ortega \& Rossmannek(2025{\natexlab{b}})Ortega and Rossmannek]{RC31}
Juan-Pablo Ortega and Florian Rossmannek.
\newblock {Stochastic dynamics learning with state-space systems}.
\newblock \emph{Preprint}, 2025{\natexlab{b}}.

\bibitem[Ortega \& Rossmannek(2025{\natexlab{c}})Ortega and Rossmannek]{RC32}
Juan-Pablo Ortega and Florian Rossmannek.
\newblock {Echoes of the past: a unified perspective on fading memory and echo states}.
\newblock \emph{Preprint}, 2025{\natexlab{c}}.

\bibitem[Paparelle et~al.(2025)Paparelle, Henaff, Garcia-Beni, Gillet, Giorgi, Soriano, Zambrini, and Parigi]{paparelle2025experimental}
Iris Paparelle, Johan Henaff, Jorge Garcia-Beni, Emilie Gillet, Gian~Luca Giorgi, Miguel~C Soriano, Roberta Zambrini, and Valentina Parigi.
\newblock Experimental memory control in continuous variable optical quantum reservoir computing.
\newblock \emph{arXiv preprint arXiv:2506.07279}, 2025.

\bibitem[Park et~al.(2019)Park, Petruccione, and Rhee]{park2019circuit}
Daniel~K Park, Francesco Petruccione, and June-Koo~Kevin Rhee.
\newblock Circuit-based quantum random access memory for classical data.
\newblock \emph{Scientific reports}, 9\penalty0 (1):\penalty0 3949, 2019.

\bibitem[P{\'{e}}rez-Salinas et~al.(2020)P{\'{e}}rez-Salinas, Cervera-Lierta, Gil-Fuster, and Latorre]{PerezSalinas2020datareuploading}
Adri{\'{a}}n P{\'{e}}rez-Salinas, Alba Cervera-Lierta, Elies Gil-Fuster, and Jos{\'{e}}~I. Latorre.
\newblock Data re-uploading for a universal quantum classifier.
\newblock \emph{{Quantum}}, 4:\penalty0 226, February 2020.
\newblock ISSN 2521-327X.
\newblock \doi{10.22331/q-2020-02-06-226}.
\newblock URL \url{https://doi.org/10.22331/q-2020-02-06-226}.

\bibitem[Pfeffer et~al.(2022)Pfeffer, Heyder, and Schumacher]{pfeffer2022hybrid}
Philipp Pfeffer, Florian Heyder, and J{\"o}rg Schumacher.
\newblock Hybrid quantum-classical reservoir computing of thermal convection flow.
\newblock \emph{Physical Review Research}, 4\penalty0 (3):\penalty0 033176, 2022.

\bibitem[Pfeffer et~al.(2023)Pfeffer, Heyder, and Schumacher]{pfeffer2023reduced}
Philipp Pfeffer, Florian Heyder, and J{\"o}rg Schumacher.
\newblock Reduced-order modeling of two-dimensional turbulent rayleigh-b{\'e}nard flow by hybrid quantum-classical reservoir computing.
\newblock \emph{Physical review research}, 5\penalty0 (4):\penalty0 043242, 2023.

\bibitem[Qi et~al.(2023)Qi, Yang, Chen, and Hsieh]{qi2023theoretical}
Jun Qi, Chao-Han~Huck Yang, Pin-Yu Chen, and Min-Hsiu Hsieh.
\newblock Theoretical error performance analysis for variational quantum circuit based functional regression.
\newblock \emph{npj Quantum Information}, 9\penalty0 (1):\penalty0 4, 2023.

\bibitem[Sannia et~al.(2024{\natexlab{a}})Sannia, Giorgi, Longhi, and Zambrini]{sannia2024skin}
Antonio Sannia, Gian~Luca Giorgi, Stefano Longhi, and Roberta Zambrini.
\newblock Skin effect in quantum neural networks.
\newblock \emph{arXiv preprint arXiv:2406.14112}, 2024{\natexlab{a}}.

\bibitem[Sannia et~al.(2024{\natexlab{b}})Sannia, Mart{\'\i}nez-Pe{\~n}a, Soriano, Giorgi, and Zambrini]{sannia2024dissipation}
Antonio Sannia, Rodrigo Mart{\'\i}nez-Pe{\~n}a, Miguel~C Soriano, Gian~Luca Giorgi, and Roberta Zambrini.
\newblock Dissipation as a resource for quantum reservoir computing.
\newblock \emph{Quantum}, 8:\penalty0 1291, 2024{\natexlab{b}}.

\bibitem[Sannia et~al.(2025)Sannia, Giorgi, and Zambrini]{sannia2025exponential}
Antonio Sannia, Gian~Luca Giorgi, and Roberta Zambrini.
\newblock Exponential concentration and symmetries in quantum reservoir computing.
\newblock \emph{arXiv preprint arXiv:2505.10062}, 2025.

\bibitem[Schuld et~al.(2021)Schuld, Sweke, and Meyer]{Schuld2021}
Maria Schuld, Ryan Sweke, and Johannes~Jakob Meyer.
\newblock Effect of data encoding on the expressive power of variational quantum-machine-learning models.
\newblock \emph{Phys. Rev. A}, 103:\penalty0 032430, Mar 2021.
\newblock \doi{10.1103/PhysRevA.103.032430}.
\newblock URL \url{https://link.aps.org/doi/10.1103/PhysRevA.103.032430}.

\bibitem[Selimovi{\'c} et~al.(2025)Selimovi{\'c}, Agresti, Siemaszko, Morris, Daki{\'c}, Albiero, Crespi, Ceccarelli, Osellame, Stobi{\'n}ska, et~al.]{selimovic2025experimental}
Mirela Selimovi{\'c}, Iris Agresti, Micha{\l} Siemaszko, Joshua Morris, Borivoje Daki{\'c}, Riccardo Albiero, Andrea Crespi, Francesco Ceccarelli, Roberto Osellame, Magdalena Stobi{\'n}ska, et~al.
\newblock Experimental neuromorphic computing based on quantum memristor.
\newblock \emph{arXiv preprint arXiv:2504.18694}, 2025.

\bibitem[Siemaszko et~al.(2023)Siemaszko, Buraczewski, Le~Saux, and Stobi{\'n}ska]{siemaszko2023rapid}
Micha{\l} Siemaszko, Adam Buraczewski, Bertrand Le~Saux, and Magdalena Stobi{\'n}ska.
\newblock Rapid training of quantum recurrent neural networks.
\newblock \emph{Quantum Machine Intelligence}, 5\penalty0 (2):\penalty0 31, 2023.

\bibitem[Silva et~al.(2024)Silva, Azevedo, Araujo, and da~Silva]{silva2024linear}
Jefferson~DS Silva, Thiago Melo~D Azevedo, Israel~F Araujo, and Adenilton~J da~Silva.
\newblock Linear decomposition of approximate multi-controlled single qubit gates.
\newblock \emph{IEEE Transactions on Computer-Aided Design of Integrated Circuits and Systems}, 2024.

\bibitem[Spagnolo et~al.(2022)Spagnolo, Morris, Piacentini, Antesberger, Massa, Crespi, Ceccarelli, Osellame, and Walther]{spagnolo2022experimental}
Michele Spagnolo, Joshua Morris, Simone Piacentini, Michael Antesberger, Francesco Massa, Andrea Crespi, Francesco Ceccarelli, Roberto Osellame, and Philip Walther.
\newblock Experimental photonic quantum memristor.
\newblock \emph{Nature Photonics}, 16\penalty0 (4):\penalty0 318--323, 2022.

\bibitem[Suzuki et~al.(2022)Suzuki, Gao, Pradel, Yasuoka, and Yamamoto]{suzuki2022natural}
Yudai Suzuki, Qi~Gao, Ken~C Pradel, Kenji Yasuoka, and Naoki Yamamoto.
\newblock Natural quantum reservoir computing for temporal information processing.
\newblock \emph{Scientific reports}, 12\penalty0 (1):\penalty0 1--15, 2022.

\bibitem[Tran \& Nakajima(2021)Tran and Nakajima]{tran2021learning}
Quoc~Hoan Tran and Kohei Nakajima.
\newblock Learning temporal quantum tomography.
\newblock \emph{Physical review letters}, 127\penalty0 (26):\penalty0 260401, 2021.

\bibitem[Whitney(1934)]{Whitney1934}
Hassler Whitney.
\newblock Analytic extensions of differentiable functions defined in closed sets.
\newblock \emph{Transactions of the American Mathematical Society}, 36\penalty0 (1):\penalty0 63--89, 1934.

\bibitem[Xiong et~al.(2025)Xiong, Holmes, Angrisani, Suzuki, Chotibut, and Thanasilp]{xiong2025role}
Weijie Xiong, Zo{\"e} Holmes, Armando Angrisani, Yudai Suzuki, Thiparat Chotibut, and Supanut Thanasilp.
\newblock Role of scrambling and noise in temporal information processing with quantum systems.
\newblock \emph{arXiv preprint arXiv:2505.10080}, 2025.

\bibitem[Yarotsky(2017)]{Yarotsky2017}
Dimitry Yarotsky.
\newblock {Error bounds for approximations with deep ReLU networks}.
\newblock \emph{Neural Networks}, 94:\penalty0 103--114, 2017.

\bibitem[Yasuda et~al.(2023)Yasuda, Suzuki, Kubota, Nakajima, Gao, Zhang, Shimono, Nurdin, and Yamamoto]{yasuda2023quantum}
Toshiki Yasuda, Yudai Suzuki, Tomoyuki Kubota, Kohei Nakajima, Qi~Gao, Wenlong Zhang, Satoshi Shimono, Hendra~I Nurdin, and Naoki Yamamoto.
\newblock Quantum reservoir computing with repeated measurements on superconducting devices.
\newblock \emph{arXiv preprint arXiv:2310.06706}, 2023.

\bibitem[Yu et~al.(2024)Yu, Chen, Jiao, Li, Lu, Wang, and Yang]{yu2024nonasymptotic}
Zhan Yu, Qiuhao Chen, Yuling Jiao, Yinan Li, Xiliang Lu, Xin Wang, and Jerry~Zhijian Yang.
\newblock Non-asymptotic approximation error bounds of parameterized quantum circuits.
\newblock In \emph{The Thirty-eighth Annual Conference on Neural Information Processing Systems}, 2024.
\newblock URL \url{https://openreview.net/forum?id=XCkII8nCt3}.

\bibitem[Zindorf \& Bose(2024)Zindorf and Bose]{zindorf2024efficient}
Ben Zindorf and Sougato Bose.
\newblock Efficient implementation of multi-controlled quantum gates.
\newblock \emph{arXiv preprint arXiv:2404.02279}, 2024.

\bibitem[Zindorf \& Bose(2025)Zindorf and Bose]{zindorf2025multi}
Ben Zindorf and Sougato Bose.
\newblock Multi-controlled quantum gates in linear nearest neighbor.
\newblock \emph{arXiv preprint arXiv:2506.00695}, 2025.

\end{thebibliography}
\end{document}